\author{%
  Richard A. Davis\\
  \small Columbia University\\
  \small Department of Statistics\\
  \small 1255 Amsterdam Ave. New York, NY 10027, USA\\
  \small \texttt{rdavis@stat.columbia.edu}
  \and
  Holger Drees\\
  \small University of Hamburg\\
  \small Department of Mathematics\\
  \small Bundesstra{\ss}e 55, 20146 Hamburg, Germany\\
  \small \texttt{holger.drees@math.uni-hamburg.de}
  \and
  Johan Segers \qquad
  Micha\l~Warcho\l\\
  \small Universit\'e catholique de Louvain\\
  \small Institut de Statistique, Biostatistique et Sciences Actuarielles\\
  \small Voie du Roman Pays 20, B-1348 Louvain-la-Neuve, Belgium\\
  \small \texttt{johan.segers@uclouvain.be}, \texttt{m87warchol@gmail.com}
}
\title{Inference on the tail process with application to financial time series modelling}
\date{\today}
\newcommand{\Ft}{F^{(\Theta_t)}}
\newcommand{\CDFfT}[1]{\ensuremath{\hat{F}^{(\mathrm{f},\Theta_t)}_{n}\left({#1}\right)}}
\newcommand{\CDFbT}[1]{\ensuremath{\tilde{F}^{(\mathrm{b},\Theta_t)}_{n}\left({#1}\right)}}
\newcommand{\CDFbTe}[1]{\ensuremath{\hat{F}^{(\mathrm{b},\Theta_t)}_{n}\left({#1}\right)}}
\newcommand{\sumIN}{\sum_{i=1}^{n}}
\newcommand{\abs}[1]{\lvert{#1}\rvert}
\newcommand{\dto}{\xrightarrow{d}}
\newcommand{\expec}[1]{\operatorname{E}\!\left[{#1}\right]}
\newcommand{\texpec}[1]{\operatorname{E}[{#1}]}
\newcommand{\prob}[1]{\operatorname{P}\!\left[{#1}\right]}
\newcommand{\condprob}[1]{\operatorname{P}_\xi\left[{#1}\right]}
\newcommand{\var}[1]{\operatorname{var}\left[{#1}\right]}
\newcommand{\cov}[1]{\operatorname{cov}\left({#1}\right)}
\newcommand{\1}{\bm{1}}
\newcommand{\sign}{\operatorname{sign}}
\newcommand{\ZZ}{\mathbb{Z}}
\newcommand{\RR}{\mathbb{R}}
\newcommand{\NN}{\mathbb{N}}
\newcommand{\reals}{\RR}
\newcommand{\law}[1]{\mathcal{L}\left({#1}\right)}
\newtheorem{theorem}{Theorem}[section]
\newtheorem{lemma}[theorem]{Lemma}
\newtheorem{corollary}[theorem]{Corollary}
\newtheorem{proposition}[theorem]{Proposition}
\theoremstyle{remark}
\newtheorem{remark}[theorem]{Remark}
\numberwithin{equation}{section}
\numberwithin{theorem}{section}
\newcommand{\Blue}[1]{\begingroup\color{blue}#1\endgroup}
\def\namedlabel#1#2{\begingroup
    #2%
    \def\@currentlabel{#2}%
    \phantomsection\label{#1}\endgroup
}
\begin{document}
\setstcolor{red}
\maketitle

\begin{abstract}
To draw inference on serial extremal dependence within heavy-tailed Markov chains, Drees, Segers and Warcho\l{} [Extremes (2015) 18, 369--402] proposed
nonparametric estimators of the spectral tail process. The methodology can be extended to the more general setting of a stationary, regularly varying time series. The large-sample distribution of the estimators is derived via empirical process theory for cluster functionals. The finite-sample performance of these estimators is evaluated via Monte Carlo simulations. Moreover, two different bootstrap schemes are employed which yield confidence intervals for the pre-asymptotic spectral tail process: the stationary bootstrap and the multiplier block bootstrap. The estimators are applied to stock price data to study the persistence of positive and negative shocks.
\end{abstract}
\paragraph{Keywords:} Financial time series; Heavy--tails; Multiplier block bootstrap; Regular variation; Shock persistence; Stationary time series; Tail process.

\section{Introduction}
\label{sec:introduction}
The typical modelling paradigm for a time series often starts by choosing a flexible class of models that captures salient features present in the data.  Of course, {\it features} depends on the type of characteristics one is looking for.   For a financial time series consisting of say log-returns of some asset, the key features, often referred to as {\it stylized facts}, include heavy-tailed marginal distributions and serially uncorrelated but dependent data.  These characteristics are readily detected using standard diagnostics such as  qq-plots of the marginal distribution and plots of the sample autocorrelation function (ACF) of the data and the squares of the data.  The GARCH process (and its variants) as well as the  stochastic volatility (SV) process driven by heavy-tailed noise exhibit these attributes and often serve as a starting point for building a model.  More recently, considerable attention has been directed towards studying the extremal behavior of both financial and environmental time series, especially as it relates to estimating risk factors.  Extremes for such time series can occur in clusters and getting a handle on the nature of clusters both in terms of size and frequency of occurrence is important for evaluating various risk measures.  Ultimately, one wants to choose models that adequately describe various extremal dependence features observed in the data.  The theme of this paper is to provide additional tools that not only give  measures of extremal dependence, but can be used as a basis for assessing the quality  of a model's fit to extremal properties present in the data.

The {\it extremal index} $\theta\in (0,1]$ \citep{Leadbetter1983}  is one such measure of extremal dependence for a stationary time series. It is a measure of extremal clustering ($1/\theta$ is the mean cluster size of extremes) with $\theta<1$ indicating clustering and $\theta=1$ signifying no clustering in the limit.  Unfortunately, $\theta$ is a rather crude measure and does not provide fine detail about extremal dependence.  The extremogram, developed in \cite{davis:mikosch:2009}, is an attempt to provide a measure of serial dependence among the extremes in a stationary time series.  It was conceived to be used in much the same way as an ACF in traditional time series modelling, but only applied to extreme values.

In this paper, we will use the spectral tail process, as formulated by \cite{basrak:segers:2009} for heavy-tailed time series, to assess and measure extremal dependence.  The spectral tail process provides a more in-depth description of the structure of extremal dependence than the extremogram.  The first objective of this paper will be to establish limit theory for nonparametric estimates of the distribution of the spectral tail process for a class of heavy-tailed stationary time series.  This builds on earlier work of \cite{DreesSegersWarchol2015} for heavy-tailed Markov chains.  The nonparametric estimates provide quantitative information about extremal dependence within a time series and as such can be used in both exploratory and confirmatory phases of modelling.  As an example, it provides estimates of the probability that an extreme observation will occur at time $t$, given one has occurred at time $0$, and that its absolute value will be even larger. These estimates can also be used for model confirmation, in much the same way that the ACF is used for assessing quality of fit for second-order models of time series.  For example, one can compute a {\it pre-asymptotic} version (to be defined later) of the  distribution  of the spectral tail process from a GARCH process, which in most cases can be easily calculated via simulation.  Then the estimated distribution of the spectral tail process can be compared with the {\it pre-asymptotic} version corresponding to a model for compatibility.  A good fit would indicate the plausibility of using a GARCH model for capturing serial extremal dependence.  The second main objective is then to provide a useful way of measuring compatibility, which we propose using resampling methods.

Recently, there has been increasing interest in the econometric literature for estimating quantities related to extremal dependence. For stochastic processes in continuous time, \cite{bollerslev2013jump} define a $\chi$-coefficient, derived from the extremogram, for assessing tail dependencies applied to financial time series. In a follow-up paper that explores tail risk premia, \cite{bollerslev2015tail} make a connection between their estimates of the time-varying tail shape parameters and the extremogram.
\cite{ linton2007quantilogram} (see also \cite{han2016cross}) introduced the  quantilogram, a diagnostic tool for measuring directional predictability in a time series.   In some respects, our development can be viewed as the {\it quantilogram} for extreme quantiles.  The theory, however, is different in that our quantiles are going to infinity.  Nevertheless, our work does focus on a type of {\it directional predictability}, but only concentrated in the extremes. \cite{ tjostheim2013local} consider local Gaussian correlation and relate it to tail index dependence and the extremogram in a time series context.  Their methodology is applied to financial time series.

The key  object of study in this paper is the tail process and in particular, its normalized version -- the spectral tail process.  A {\it strictly} stationary univariate time series $(X_t)_{t\in \ZZ}$ is said to have a \emph{tail process} $(Y_{t})_{t\in \ZZ}$ if, for all integers $s\le t$, we have 
\begin{equation}
\label{eq:tailprocess}
   \law{ u^{-1} X_s,\ldots,u^{-1} X_t \mid \abs{X_0} > u } \dto \law{ Y_s,\ldots,Y_t }, \qquad u \to \infty,
\end{equation}
with the implicit understanding that the law of $\abs{Y_0}$ is non-degenerate. The law of $\abs{ Y_0 }$ is then necessarily Pareto($\alpha$) for some $\alpha > 0$ and the function $u \mapsto \prob{ \abs{X_0} > u }$ is regularly varying at infinity with index $-\alpha$:
\begin{equation}
\label{eq:RV}
  \lim_{u \to \infty}
  \frac{\prob{ \abs{X_0} > uy }}{\prob{ \abs{X_0} > u }}
  = \prob{ \abs{Y_0} > y }
  = y^{-\alpha}, \qquad y \in [1, \infty).
\end{equation}
The existence of a tail process is equivalent to multivariate regular variation of the finite-dimensional distributions of $(X_t)_{t \in \ZZ}$ \citep[Theorem~2.1]{basrak:segers:2009}.  In many respects, this condition  can be viewed as the heavy-tailed analogue of the condition that a process is Gaussian in the sense that all the finite-dimensional distributions are specified to be of a certain type.

The \emph{spectral tail process} is defined by $\Theta_t = Y_t / \abs{Y_0}$, for $t \in \ZZ$. By \eqref{eq:tailprocess}, it follows that for all integers $s\le t$, we have
\begin{equation}
\label{eq:spectralprocess}
\mathcal{L}\left(X_0/u,X_s / \abs{X_0},\ldots,X_t / \abs{X_0} \mid \abs{X_0} > u \right)
  \dto  \mathcal{L}\left(Y_0, \Theta_s,\dots, \Theta_t \right), \qquad u \to \infty.
\end{equation}
The difference between \eqref{eq:tailprocess} and \eqref{eq:spectralprocess} is that in the latter equation, the variables $X_t$ have been normalized by $\abs{X_0}$ rather than by the threshold $u$. Such auto-normalization allows the tail process to be decomposed into two stochastically independent components, i.e.,
\begin{equation*}
  Y_t = \abs{Y_0} \, \Theta_t, \qquad t \in \ZZ.
\end{equation*}
Independence of $\abs{Y_0}$ and $(\Theta_t)_{t \in \ZZ}$ is stated in \citet[Theorem~3.1]{basrak:segers:2009}. The random variable $\abs{Y_0}$ characterizes the magnitudes of extremes, whereas $(\Theta_t)_{t \in \ZZ}$ captures serial dependence. The spectral tail process at time $t = 0$ yields information on the relative weights of the upper and lower tails of $\abs{X_0}$: since $\Theta_0 = Y_0 / \abs{Y_0} = \sign(Y_0)$, we have
\begin{align}
\label{p_prob}
  p &= \prob{ \Theta_0 = +1 } = \lim_{u \to \infty} \frac{\prob{ X_0 > u }}{\prob{ \abs{X_0} > u }}, &
  1-p &= \prob{ \Theta_0 = -1 }.
\end{align}

The distributions of the \emph{forward} tail process $(Y_t)_{t \ge 0}$ and the \emph{backward} tail process $(Y_t)_{t \le 0}$ mutually determine each other \citep[Theorem~3.1]{basrak:segers:2009}. For all $i,s,t\in\ZZ$ with $s\leq 0\leq t$ and for all measurable functions $f:\reals^{t-s+1} \to \reals$ satisfying $f(y_s,\ldots,y_t)=0$ whenever $y_0=0$, we have, provided the expectations exist,
\begin{equation}
\label{eq:timechange}
  \expec{ f\left(\Theta_{s-i},\ldots,\Theta_{t-i}\right) }
  =
  \expec{
    f \left(
      \frac{\Theta_{s}}{\abs{\Theta_{i}}},
      \ldots,
      \frac{\Theta_{t}}{\abs{\Theta_{i}}}
    \right) \,
    \abs{\Theta_{i}}^\alpha \,
    \1\{ \Theta_i \ne 0 \}
  }.
\end{equation}
The indicator variable $\1\{ \Theta_i \ne 0 \}$ can be omitted because of the presence of $\abs{\Theta_i}^\alpha$, but sometimes, it is useful to mention it explicitly in order to avoid errors arising from division by zero.
By exploiting the `time-change formula' \eqref{eq:timechange}, we will be able to improve upon the efficiency of estimators of the spectral tail process.

Main interest in this paper is in the cumulative distribution function (cdf), $\Ft$, of $\Theta_t$. If $F^{(\Theta_t)}$ is continuous at a point $x$, then
\begin{equation} \label{eq:defcdf}
  \lim_{u \to \infty}
  \prob{ X_t / \abs{ X_0 } \le x \mid \abs{ X_0 } > u }
  =
  \prob{ \Theta_t \le x }
  =
  \Ft(x).
\end{equation}
We consider two  estimates of $\Ft(x)$ based on  forward and backward representations for the tail process.  While these estimates are asymptotically normal, the expressions for the asymptotic variances are too complicated to be useful for constructing confidence regions. To overcome this limitation, inference procedures can be carried out using resampling methods.  Two resampling methods for constructing confidence intervals, based on the stationary bootstrap as used in \cite{Davis2012142}, and the multiplier block bootstrap as described in \cite{drees2015bootstrap}, are applied to our estimates of  $F^{(\Theta_t)}(y)$. In terms of coverage probabilities, the multiplier block bootstrap  performed better than the stationary bootstrap procedure in all the cases we considered.  However, both procedures require care when applied for very high thresholds.

We apply the methodology to study serial
extremal dependence of daily log-returns on the S\&P500 index and the P\&G stock price. We distinguish between two sources of such dependence -- positive and negative shocks -- pointing out an asymmetric behavior. Specifically, we consider cases when extreme values (positive or negative) follow  positive/negative shocks $t$ time lags later.  In terms of the spectral tail process, this corresponds to the probabilities $\prob{\pm \Theta_t>1 \mid \Theta_0=\pm1}$.  We illustrate how well the GARCH(1,1) model and an extension of it allowing for a leverage effect, the APARCH(1,1) model, can capture  serial extremal dependence, as measured by these directional probabilities.   These examples demonstrate how our methodology can provide useful information on the behavior of extremes that follow both positive and negative shocks, which, in turn, can be used in a model-building context.

The remainder of the paper is organized as follows: The two estimates of the tail process are described in Section~\ref{sec:background}, while the companion limit theory for these estimators is formulated in Section~\ref{sec:large_sample_theory}.  The stationary bootstrap and multiplier bootstrap procedures are presented in  Section~\ref{sec:background}.  The validity of the proposed bootstrap methodology is established in Section~\ref{sec:large_sample_theory} too.  The finite-sample performance is investigated through Monte Carlo simulations in Section~\ref{sec:numerical_simulations}.  The application of our methodology is provided in Section~\ref{sec:application}. The proofs of the main results are collected in Section~\ref{sec:appendix}.

\section{Methodology}
\label{sec:background}
\subsection{Estimators}

The data consist of a stretch $X_{1-\tilde{t}}, \ldots, X_{n+\tilde{t}}$, where $\tilde{t}$ is fixed and corresponds to the maximal lag of interest, drawn from a regularly varying, stationary univariate time series with spectral tail process $(\Theta_t)_{t \in \ZZ}$ and index $\alpha > 0$.

In order to estimate $p = \prob{\Theta_0 = 1}$, we simply take the empirical version of \eqref{p_prob}, yielding
\begin{equation*}
  \hat{p}_n
  =
  \frac{\sum_{i=1}^{n} \1\left(X_i > u_n\right)}{\sum_{i=1}^{n}\1\left(\abs{X_i} > u_n\right)}.
\end{equation*}
For $\hat{p}_n$ to be consistent and asymptotically normal, the threshold sequence $u_n$ should tend to infinity at a certain rate described in the next section.

To estimate the cdf, $\Ft$, of $\Theta_t$, we propose the \emph{forward estimator}
\begin{equation}
\label{forward_Tt}
  \CDFfT{x}
  := \frac{\sumIN\1\left(X_{i+t}/\abs{X_i} \leq x,\;\abs{ X_i}>u_n\right)}{\sumIN\1\left( \abs{X_i} > u_n\right)}.
\end{equation}
This is just the empirical version of the left-hand side of \eqref{eq:defcdf}. In equations~\eqref{eq:tailprocess} and \eqref{eq:spectralprocess}, the conditioning event is $\{ \abs{X_0} > u \}$, making no distinction between positive extremes, $X_0 > u$, and negative extremes, $X_0 < -u$. However, these two cases can be distinguished by conditioning on the sign of $\Theta_0$. In particular, we define
\begin{equation}
\label{forward_ABt}
 \hat{F}^{(\mathrm{f},\Theta_t \mid \Theta_0=\pm1)}_{n}\left(x\right)
:= \frac{\sumIN\1\left(X_{i+t}/\left(\pm X_i\right) \leq x,\; \pm X_i > u_n\right)}{\sumIN\1\left( \pm X_i >  u_n\right)}.
\end{equation}

The numerator in the estimator is a sum of indicator functions, most of which are zero. This often leads to a large variance. The time-change formula~\eqref{eq:timechange} yields a different representation of the law of $\Theta_t$, motivating a different estimator than the one above. Depending on the value of $x$, the new estimator will involve more non-zero indicators, which receive weights instead. The simulation study reported in Section~\ref{sec:sim:estimators} will show that the resulting estimator may have a smaller variance than the one in \eqref{forward_ABt}, in particular if $|x|$ is large.

\begin{lemma}
\label{lem:tc:T}
Let $(X_t)_{t \in \ZZ}$ be a stationary univariate time series, regularly varying with index $\alpha$ and spectral tail process $(\Theta_t)_{t \in \ZZ}$. Then, for all integer $t \neq 0$,
 \begin{equation}\label{eq:Tt:tc}
 \prob{ \Theta_t \le x }=
\begin{cases}
1-\expec{
    \abs{\Theta_{-t}}^{\alpha} \,
    \1\left(\Theta_0/\abs{\Theta_{-t}} > x\right)
  }&
\text{if $x \geq 0$}, \\[1em]
  \expec{
    \abs{\Theta_{-t}}^{\alpha}\,
   \1\left(\Theta_0/\abs{\Theta_{-t}} \leq x\right)
  }&
\text{if $x<0$.}
\end{cases}
 \end{equation}
 Moreover
 \begin{equation}\label{eq:At:tc}
 \prob{ \Theta_t \le x \mid \Theta_0=1}=
\begin{cases}
1-\frac{1}{p}\expec{
    \Theta_{-t}^{\alpha} \,
    \1\left(1/\Theta_{-t} > x,\;\Theta_0=1\right)
  }&
\text{if $x \geq 0$}, \\[1em]
  \frac{1}{p}\expec{
    \Theta_{-t}^{\alpha}\,
   \1\left(-1/\Theta_{-t}\leq x,\;\Theta_0=-1\right)
  }&
\text{if $x<0$,}
\end{cases}
 \end{equation}
 and
  \begin{equation}\label{eq:Bt:tc}
 \prob{ \Theta_t \le x \mid \Theta_0=-1}=
\begin{cases}
1-\frac{1}{1-p}\expec{
    \left(-\Theta_{-t}\right)^{\alpha} \,
    \1\left(-1/\Theta_{-t} > x,\;\Theta_0=1\right)
  }&
\text{if $x \geq 0$}, \\[1em]
  \frac{1}{1-p}\expec{
    \left(-\Theta_{-t}\right)^{\alpha}\,
   \1\left(1/\Theta_{-t} \leq x,\;\Theta_0=-1\right)
  }&
\text{if $x<0$.}
\end{cases}
 \end{equation}
\end{lemma}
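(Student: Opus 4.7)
All three identities will be derived from the time-change formula \eqref{eq:timechange} by a single strategic choice of indices, with a case split driven by the sign of $x$. To avoid a notation clash with the lag $t$ in the lemma, I relabel the indices of \eqref{eq:timechange} as $a, b, j$ (in place of $s, t, i$). Fix $t \ne 0$ and put $j = -t$, $a = \min(0, -t)$, $b = \max(0, -t)$, so that $a \le 0 \le b$ and $-t \in \{a, a+1, \ldots, b\}$. With this choice, the coordinate of the test function $f$ at the index-$0$ slot (the one constrained by the vanishing condition $y_0 = 0$) is evaluated on the LHS of \eqref{eq:timechange} at $\Theta_{0-j} = \Theta_t$ and on the RHS at $\Theta_0/|\Theta_{-t}|$; symmetrically, the coordinate at the index-$(-t)$ slot is evaluated at $\Theta_0$ on the LHS and at $\Theta_{-t}/|\Theta_{-t}|$ on the RHS; and the weight factor is $|\Theta_{-t}|^\alpha \1\{\Theta_{-t} \ne 0\}$.

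The one subtlety in applying \eqref{eq:timechange} is the hypothesis that $f(y_a, \ldots, y_b)$ must vanish whenever $y_0 = 0$. The natural indicator $\1\{y_0 \le x\}$ satisfies this precisely when $x < 0$, while its complement $\1\{y_0 > x\}$ works precisely when $x \ge 0$. This dichotomy is exactly the case split in the statement of the lemma: for $x < 0$ one applies \eqref{eq:timechange} directly with the original indicator, whereas for $x \ge 0$ one applies it to the complement and subtracts the result from $1$.

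For identity \eqref{eq:Tt:tc} take $f$ depending only on the index-$0$ coordinate. The formula then immediately gives both stated expressions, the indicator $\1\{\Theta_{-t} \ne 0\}$ being absorbed into the factor $|\Theta_{-t}|^\alpha$. For \eqref{eq:At:tc} and \eqref{eq:Bt:tc} multiply the previous test function additionally by $\1\{y_{-t} = \pm 1\}$; this picks out the event $\{\Theta_0 = \pm 1\}$ on the LHS and becomes $\1\{\pm \Theta_{-t} > 0\}$ on the RHS. Dividing by $p = \prob{\Theta_0 = 1}$, respectively by $1 - p$, turns the resulting joint probabilities into conditional ones. A short case-by-case analysis of the two inequalities $\Theta_0/|\Theta_{-t}| \le x$ and $\Theta_0/|\Theta_{-t}| > x$, combined with the sign constraint on $\Theta_{-t}$, then fixes the sign of $\Theta_0$ in each sub-case (for instance, $\Theta_0/|\Theta_{-t}| \le x < 0$ together with $\Theta_{-t} > 0$ forces $\Theta_0 = -1$ and $-1/\Theta_{-t} \le x$) and reproduces the displayed right-hand sides.

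The main obstacle is essentially only bookkeeping: aligning the four sign combinations (sign of $x$ combined with the sign of $\Theta_0$) with the two indicator choices, and verifying that the sign constraint on $\Theta_{-t}$ appearing on the RHS is automatically implied by the inequalities placed inside the indicators in \eqref{eq:At:tc}--\eqref{eq:Bt:tc}. No additional probabilistic input beyond \eqref{eq:timechange} and the elementary facts $|\Theta_0| = 1$ a.s.\ and $\prob{\Theta_0 = 1} = p$ is required.
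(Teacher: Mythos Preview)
Your proposal is correct and follows essentially the same route as the paper: both proofs derive all three identities from the time-change formula \eqref{eq:timechange} with $i=-t$ (the paper writes $i=-h$), using $\1\{y_0>x\}$ for $x\ge 0$ and $\1\{y_0\le x\}$ for $x<0$, and for \eqref{eq:At:tc}--\eqref{eq:Bt:tc} multiplying by $\1\{y_{-t}=\pm 1\}$ before dividing by $p$ or $1-p$. Two minor differences: for \eqref{eq:Tt:tc} the paper combines the two cases into the single test function $f(y_0)=\1(y_0\le x)-\1(0\le x)$, which vanishes at $y_0=0$ for every $x$; and your choice $a=\min(0,-t)$, $b=\max(0,-t)$ is more careful than the paper's $s=-h$, $t=0$, which tacitly assumes $h>0$ in the conditional identities.
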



If population quantities are replaced by their sample counterparts, Lemma~\ref{lem:tc:T} suggests the following \emph{backward estimator} of the cdf of $\Theta_t$:
\begin{equation}
\label{backward_Tte}
  \CDFbTe{x}
:=
\begin{cases}
1-\dfrac{\sumIN\abs{ \frac{ X_{i-t}}{X_i}}^{\hat\alpha_n}\1\left(X_i/ \abs{X_{i-t}} >x,
\; \abs{X_i} > u_n \right)}{\sumIN\1\left( \abs{ X_i} > u_n\right)} &
\text{if $x \geq 0$}, \\[1em]
\dfrac{\sumIN\abs{ \frac{X_{i-t}}{X_i}}^{\hat\alpha_n}\1\left(X_i/ \abs{X_{i-t}} \leq x,
\; \abs{X_i} > u_n \right)}{\sumIN\1\left(  \abs{X_i} > u_n\right)} &
\text{if $x<0$.}
\end{cases}
\end{equation}
Here, $\hat{\alpha}_n$ is an estimator of the tail index, for which we will take the Hill-type estimator
\begin{equation}
\label{Hill}
  \hat{\alpha}_n = \frac{\sum_{i=1}^{n} \1\left( \abs{ X_i } > u_n\right)}{\sum_{i=1}^{n} \log \left(\abs{ X_{i} }/ u_n\right) \1\left(\abs{ X_i } > u_n\right)}.
\end{equation}
Conditioning on an extreme value of a specific sign, we get
\begin{equation*}
\hat{F}^{(\mathrm{b},\Theta_t \mid \Theta_0=\pm1)}_{n}\left(x\right)
:=
\begin{cases}
1-\dfrac{\sumIN\left(\frac{ \pm X_{i-t}}{X_i}\right)^{\hat{\alpha}_n}\1\left(\pm X_i/ X_{i-t} >x,\;
\; X_i> u_n \right)}{\sumIN\1\left( \pm X_i > u_n\right)} &
\text{if $x \geq 0$}, \\[1em]
\dfrac{\sumIN\left(\frac{\mp X_{i-t}}{X_i}\right)^{\hat{\alpha}_n}\1\left(\pm X_i/ X_{i-t} \leq x,\;
\; X_i < -u_n \right)}{\sumIN\1\left(  \pm X_i  > u_n\right)} &
\text{if $x<0$.}
\end{cases}
\end{equation*}
The asymptotic and finite-sample distributions of these estimators will be investigated in the following sections.

\subsection{Resampling}
\label{sec:resampling}

We explore two different bootstrap schemes that yield confidence intervals for $\Ft(x)$, or rather, for the pre-asymptotic version $\prob{ X_t / \abs{X_0} \le x \mid \abs{X_0} > u }$: the stationary bootstrap and the multiplier block bootstrap. We apply each of the two resampling schemes to both the forward and backward estimators at various levels $x$ and at different lags $t$.

The stationary bootstrap goes back to \citet{politis1994stationary} and is an adaptation of the block bootstrap by allowing for random block sizes. The resampling scheme was applied to the extremogram in \citet{Davis2012142}. It consists of generating
pseudo-samples $X^*_1,\ldots,X^*_n$, drawn from the sample $X_1,\ldots,X_n$ by taking the first $n$ values in the sequence
\[
  X_{K_1},\ldots,X_{K_1 + L_1-1},X_{K_2},\ldots,X_{K_2+L_2-1},\ldots,
\]
where $K_1,K_2\ldots$ is an iid sequence of random variables uniformly distributed on $\{1,\ldots,n\}$ and $L_1,L_2,\ldots$ is an iid sequence of geometrically distributed random variables (independent of $(K_j)_{j\in\NN}$) with distribution $\prob{L_1=l}=p\left(1-p\right)^{l-1}$, $l=1,2,\ldots$ for some $p=p_n\in(0,1)$ such that $p_n\rightarrow 0$ and $np_n\rightarrow \infty$. If the index $t$ thus obtained exceeds the sample size $n$, we replace $t$ by $(t-1 \, \operatorname{mod} \, n)+1$, i.e., we continue from the beginning of the sample. The estimators are then applied to $X_{1-\tilde t}^*,\ldots,X_{n+\tilde t}^*$.

The multiplier block bootstrap method was applied to cluster functionals in \cite{drees2015bootstrap}. It consists of splitting the data set into $m_n = \lfloor n/r_n \rfloor$ blocks of length $r_n$ and multiplying the cluster functionals of each block by a random factor. (Here $\lfloor x\rfloor$ denotes the integer part of $x$.) Specifically, for iid random variables $\xi_j$, independent of $(X_t)_{t \in \ZZ}$, with $\expec{\xi_j}=0$ and $\var{\xi_j}=1$, the bootstrapped forward estimator can be written as
\begin{equation*}
  \hat{F}^{*(\mathrm{f},\Theta_t)}_{n}\left(x\right)
  :=\frac{\sum_{j=1}^{m_n} (1+\xi_j) \sum_{i\in I_j}\1\left(\frac{X_{i+t}}{\abs{X_{i}} }\leq x,\;\abs{ X_{i}}>u_n\right)}{\sum_{j=1}^{m_n} (1+\xi_j) \sum_ {i\in I_j}\1\left( \abs{X_{i}} > u_n\right)},
\end{equation*}
where $I_j = \{(j-1)r_n+1,\ldots,jr_n\}$ denotes the set of indices belonging to the $j$th block. Similarly, the bootstrapped backward estimator for $x > 0$ with estimated index of regular variation is
\begin{equation}
\label{eq:defbootbackward}
  \hat{F}^{*({\mathrm{b}},\Theta_t)}_{n}\left(x\right)
  := 1-\dfrac{\sum_{j=1}^{m_n} (1+\xi_j) \sum_ {i\in I_j}\abs{ \frac{ X_{i-t}}{X_{i}}}^{\hat\alpha^*_n}\1\left(\frac{X_{i}}{\abs{X_{i-t}}} >x,
\; \abs{X_{i}} > u_n \right)}{\sum_{j=1}^{m_n}(1+\xi_j)\sum_ {i\in I_j}\1\left( \abs{X_{i}} > u_n\right)}
\end{equation}
with
\begin{equation}
\label{eq:Hill:boot}
 \hat\alpha_n^* := \frac{\sum_{j=1}^{m_n} (1+\xi_j) \sum_{i\in I_j} \1( \abs{X_{i}} > u_n)}{\sum_{j=1}^{m_n}(1+\xi_j)\sum_ {i\in I_j} \log(\abs{X_i}/u_n) \, \1( \abs{X_{i}} > u_n)}.
\end{equation}

If the threshold $u_n$ is high, it may be advisable to construct bootstrap confidence intervals based on lower thresholds and then scale accordingly; see the explanation after Theorem~\ref{theo:bootstrap}.

\subsection{Testing for dependence of extreme observations}

For iid random variables, the spectral tail process simplifies to $\Theta_t \equiv 0$ a.s.\ for all nonzero $t$. If this occurs for a stationary, regularly varying time series, then we say that the series exhibits \emph{serial extremal independence}. The opposite case is referred to as \emph{serial extremal dependence}, i.e., at least one of the variables $\Theta_t$ for $t \neq 0$ is not degenerate at $0$. Since the convergence of the pre-asymptotic distribution can be arbitrarily slow, one cannot formally test for extremal dependence  within the present framework.

However, if one wants to test whether the exceedances over a given high threshold $u$ are independent then one may check whether the lower bound of a confidence interval for, say, $\prob{ \abs{X_t} \ge \abs{X_0} \mid \abs{X_0} > u }$ constructed by one of the bootstrap methodologies is larger than this probability under the assumption of exact independence of $\abs{X_0}\1(\abs{X_0}>u)$ and $\abs{X_t}\1(\abs{X_t}>u)$, which is easily shown to equal $\prob{ \abs{X_0} > u } / 2$.

If one prefers to work with exceedances of the original time series (instead of its absolute values), then the probability under the assumption of independence depends on the relative weights of the upper and lower tails, and can thus not be calculated analytically. In that case, it seems natural to calculate this probability by Monte Carlo simulation. To this end, one generates (conditionally) iid samples according to the empirical distribution of the original time series by drawing with replacement from the observations, which corresponds to the classical bootstrap procedure for iid data. The considered probability under independence can be approximated by the pertaining relative frequency, which is then compared with the lower confidence bound for the probability  estimated from the original time series. If the latter is larger this indicates that the exceedances in the time series exhibit a non-negligible serial dependence (see Figure \ref{plot:app1} in Section \ref{sec:application}).

Alternatively, one may compare the  pre-asymptotic probability estimated from the observed time series using either the forward or the backward estimator with quantiles of the distribution of this estimator under independence. In a similar way as described above, the latter can be approximated by an empirical quantile obtained in Monte Carlo simulations with (conditionally) iid samples (cf.\ Figure \ref{plot:app3}).

\section{Large-sample theory}
\label{sec:large_sample_theory}
Under certain conditions, the standardized estimation errors of the forward and the backward estimators converge jointly to a centered Gaussian process (Section~\ref{sec:CLT:estimators}). Convergence of the multiplier block bootstrap follows under the same conditions (Section~\ref{sec:CLT:bootstrap}).

In order not to overload the presentation, we focus on nonnegative time series. We briefly indicate how the conditions and results must be modified in the real-valued case.

\subsection{Asymptotic normality of the estimators}
\label{sec:CLT:estimators}

 All estimators under consideration can then be expressed in terms of \emph{generalized tail array sums}. These are statistics of the form $\sumIN\phi(X_{n,i})$, with
\begin{equation}
\label{eq:Xni}
  X_{n,i}
  :=
  u_n^{-1}
  \left(X_{i-\tilde{t}},\ldots,X_i,\ldots,X_{i+\tilde{t}}\right)
  \1(X_i>u_n).
\end{equation}
\cite{drees2010limit} give conditions under which, after standardization, such statistics converge to a centered Gaussian process, uniformly over appropriate families of functions $\phi$. From these results we will deduce a functional central limit theorem for the processes of forward and backward estimators defined in \eqref{forward_Tt} and \eqref{backward_Tte} with $\hat\alpha_n$ according to \eqref{Hill}, respectively.

To ensure consistency, the threshold $u_n$ must tend to infinity in such a way that
\[
  v_n:=\prob{X_0>u_n}
\]
tends to $0$, but the expected number, $nv_n$, of exceedances tends to infinity.
Moreover, we have to ensure that observations which are sufficiently separated in time are almost independent. The strength of dependence will be assessed by the $\beta$-mixing coefficients
\[
  \beta_{n,k} := \sup_{1\le l\le n-k-1}
  \expec{\sup_{B\in\mathcal{B}_{n,l+k+1}^n} \left\lvert\prob{B \mid \mathcal{B}_{n,1}^l}-\prob{B}\right\rvert}.
\]
Here $\mathcal{B}_{n,i}^j$ is the $\sigma$-field generated by $(X_{n,l})_{i\le l\le j}$.

We assume that there exist sequences $l_n,r_n\to\infty$ and some $x_0\ge 0$ such that the following conditions hold:
\begin{description}
\item[\namedlabel{con:A}{(A($x_0$))}]
The cdf of $\Theta_t$, $F^{(\Theta_t)}$, is continuous on $[x_0,\infty)$, for $t\in\{1,\ldots,\tilde{t}\}$.
\item[\namedlabel{con:B}{(B)}]
As $n \to \infty$, we have $l_n\to\infty$, $l_n=o(r_n)$, $r_n=o((n v_n)^{1/2})$, $r_nv_n\to 0$, and $\beta_{n,l_n}n/r_n\to 0$.

\item[\namedlabel{con:C'}{(C)}]
For all $k \in \{0, \ldots, r_n\}$, there exists
\begin{equation}
 \label{eq:snkdef2}
s_n(k)\ge \expec{\log\Big(\frac{X_0}{u_n}\Big) \max\Big\{\log\Big(\frac{X_k}{u_n}\Big),\1(X_k>u_n)\Big\} \, \Big| \, X_0>u_n}
\end{equation}
such that $s_\infty(k)=\lim_{n\to\infty}s_n(k)$ exists, $\lim_{n\to\infty} \sum_{k=1}^{r_n} s_n(k) = \sum_{k=1}^\infty s_\infty(k)$ holds and the last sum is finite.

Moreover,  there exists $\delta>0$ such that
\begin{equation}
 \label{eq:psibdd}
  \sum_{k=1}^{r_n}
  \bigg(
    \expec{
      \Bigl( \log^+ \Bigl( \frac{X_0}{u_n}\Big)\log^+\Big(\frac{X_k}{u_n} \Bigr) \Bigr)^{1+\delta}
      \, \Big| \,
      X_0 > u_n
    }
  \bigg)^{1/(1+\delta)}
  = O(1),
  \qquad n \to \infty.
\end{equation}
\end{description}
Without Condition~\ref{con:A} one cannot expect uniform convergence of the estimated cdf of $\Theta_t$ to the true cdf on $[x_0,\infty)$. Indeed, in this case even $\prob{X_t/X_0\le x \mid |X_0|>u}$ need not converge to $F^{(\Theta_t)}(x)$ for a point of discontinuity $x$.
Condition~\ref{con:B} imposes restrictions on the rate at which $v_n$ tends to 0 and thus on the rate at which $u_n$ tends to $\infty$. Often, the $\beta$-mixing coefficients decay geometrically, i.e., $\beta_{n,k}=O(\eta^k)$ for some $\eta\in(0,1)$. Then one may choose $l_n=O(\log n)$, and Condition~\ref{con:B} is fulfilled for a suitably chosen $r_n$ if $(\log n)^2/n=o(v_n)$ and $v_n =o(1/(\log n))$.

The technical Condition~\ref{con:C'} rules out too large a cluster of extreme observations. Using integration by parts, the right-hand side of \eqref{eq:snkdef2} can be bounded by
$$ v_n^{-1} \int_1^\infty \left(\prob{X_0>u_n s,X_k>u_n}+\int_1^\infty \prob{X_0>u_n s,X_k>u_nt} t^{-1}\, dt\right) s^{-1} ds.
$$
Now one can use techniques employed in  \cite{drees2000tailquantile}   and \cite{drees2003quantile} to verify \eqref{eq:snkdef2} for specific time series models like solutions to stochastic recurrence equations or suitable heavy tailed linear time series. (Typically, the upper bounds $s_n(k)$ are of the form $\rho_k+\xi_n$ for a summable sequence $\rho_k$ and $\xi_n=o(1/r_n)$.) The left-hand side of \eqref{eq:psibdd} can be rewritten in the form
$$ \sum_{k=1}^{r_n} \left( (1+\delta)^2 v_n^{-1} \int_1^\infty \int_1^\infty \prob{X_0>u_n s,X_k>u_nt} (\log s\log t)^\delta (st)^{-1}\, ds\, dt\right)^{1/(1+\delta)},
$$
which can then be bounded by similar techniques.

Under these conditions, one can prove the asymptotic normality of relevant generalized tail array sums (see Proposition \ref{prop:procconv} below) and thus the joint uniform asymptotic normality of the appropriately centered forward and the backward estimator of $ F^{(\Theta_t)}$.
\begin{theorem}
\label{theo:asnormest}
Let $(X_t)_{t\in\ZZ}$ be a stationary, regularly varying process. If \ref{con:A}, \ref{con:B} and \ref{con:C'}  are fulfilled for some $x_0\ge 0$ and $y_0\in [x_0,\infty)\cap (0,\infty)$, then
\begin{multline} \label{eq:weakconv}
  (nv_n)^{1/2}
  \begin{pmatrix}
    \big( \CDFfT{x_t}- \prob{X_t/X_0\le x_t\mid X_0>u_n}\big)_{x_t\in[x_0,\infty)} \\
    \big( \CDFbTe{y_t}- (1-\expec{(X_{-t}/X_0)^\alpha \1(X_0/X_{-t}>y_t)\mid X_0>u_n})\big)_{y_t\in[y_0,\infty)}
  \end{pmatrix}_{|t|\in\{1,\ldots,\tilde{t}\}}
  \dto \\
   \begin{pmatrix}
    ( Z(\phi_{2,x_t}^t)-\bar F^{(\Theta_t)}(x_t) Z(\phi_1))_{x_t\in[x_0,\infty)} \\
    (Z(\phi_{3,x_t}^t)-\bar F^{(\Theta_t)}(y_t) Z(\phi_1)+(\alpha^2 Z(\phi_0)-\alpha Z(\phi_1))\expec{ \log (\Theta_t) \, \1(\Theta_t>y_t)})_{y_t\in[y_0,\infty)}
  \end{pmatrix}_{|t|\in\{1,\ldots,\tilde{t}\}}
\end{multline}
where $Z$ is a centered Gaussian process, indexed by functions defined in \eqref{eq:functions}, whose covariance function is given in \eqref{eq:cov_emp_pr}, and $\bar F^{(\Theta_t)}:=1-F^{(\Theta_t)}$ denotes the survival function of $\Theta_t$. (Assertion \eqref{eq:weakconv} means that for suitable versions of the processes the convergence holds uniformly for all $x_t\ge x_0, y_t\ge y_0$ and $|t|\in\{1,\ldots,\tilde{t}\}$ almost surely.)
\end{theorem}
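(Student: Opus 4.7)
The overall plan is to reduce the theorem to the uniform functional CLT for generalized tail array sums (Proposition~\ref{prop:procconv}) followed by two layers of delta method. In a first step, one rewrites each of the sums appearing in the forward estimator \eqref{forward_Tt}, the backward estimator \eqref{backward_Tte}, and the Hill estimator \eqref{Hill} as $\sum_{i=1}^n \phi(X_{n,i})$ with $X_{n,i}$ as in \eqref{eq:Xni}, for an appropriate family of index functions. The relevant ones are $\phi_1$ and $\phi_0$ counting exceedances and their logarithms (the Hill ingredients), $\phi_{2,x_t}^t$ encoding $\1(X_{i+t}/X_i\le x_t, X_i>u_n)$ (the forward numerator), and $\phi_{3,y_t}^t=(X_{i-t}/X_i)^\alpha \1(X_i/X_{i-t}>y_t, X_i>u_n)$ (the backward numerator evaluated at the true exponent $\alpha$); their precise form is that of \eqref{eq:functions}. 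Conditions~\ref{con:A}, \ref{con:B} and \ref{con:C'} are exactly the hypotheses Proposition~\ref{prop:procconv} needs to deliver the joint uniform weak convergence $Z_n(\phi)\dto Z(\phi)$ of the standardized empirical process over the union of these families, \ref{con:A} supplying equicontinuity of the limit in $x_t$ and $y_t$.

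For the forward component, the argument reduces to a standard ratio delta method. Writing $\CDFfT{x_t}=A_n(x_t)/N_n$ and denoting the pre-asymptotic conditional cdf by $\hat F_n(x_t)=\expec{A_n(x_t)}/(nv_n)$, the identity
\begin{equation*}
  (nv_n)^{1/2}\bigl(\CDFfT{x_t}-\hat F_n(x_t)\bigr) = \frac{nv_n}{N_n}\bigl(Z_n(\phi_{2,x_t}^t)-\hat F_n(x_t)\,Z_n(\phi_1)\bigr),
\end{equation*}
together with $N_n/(nv_n)\to 1$ in probability and the uniform convergence $\hat F_n\to F^{(\Theta_t)}$ on $[x_0,\infty)$ guaranteed by \ref{con:A}, yields the first coordinate of \eqref{eq:weakconv} by the continuous mapping theorem. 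The backward component is the real work, because $\hat\alpha_n$ is itself random. A Hill-type delta method, using $\expec{\log(X_0/u_n)\mid X_0>u_n}\to 1/\alpha$ and the joint CLT for $\phi_0,\phi_1$, first yields $(nv_n)^{1/2}(\hat\alpha_n-\alpha)\dto \alpha Z(\phi_1)-\alpha^2 Z(\phi_0)$. Setting $B_n(y_t,a):=\sum_i (X_{i-t}/X_i)^a \1(X_i/X_{i-t}>y_t, X_i>u_n)$ and Taylor-expanding $B_n(y_t,\hat\alpha_n)=B_n(y_t,\alpha)+(\hat\alpha_n-\alpha)\,\partial_a B_n(y_t,\alpha)+R_n(y_t)$, the zero-th order piece is treated exactly as in the forward case, its pre-asymptotic mean converging to $\bar F^{(\Theta_t)}(y_t)$ by the time-change identity \eqref{eq:Tt:tc}. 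The first-order correction follows from the ergodic-type limit $N_n^{-1}\partial_a B_n(y_t,\alpha)\to \expec{\log(\Theta_{-t})\Theta_{-t}^\alpha \1(1/\Theta_{-t}>y_t)}$ uniformly in $y_t\ge y_0$, which by a further use of \eqref{eq:timechange} at $i=-t$ equals $-\expec{\log(\Theta_t)\1(\Theta_t>y_t)}$; combining with the Hill expansion produces the extra additive summand in the second coordinate of \eqref{eq:weakconv}.

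The main technical obstacle is the uniform control of the Taylor remainder $R_n(y_t)$, which has to be $o_P((nv_n)^{1/2})$ uniformly in $y_t\ge y_0$ despite the second-order bound involving $(\log(X_{i-t}/X_i))^2(X_{i-t}/X_i)^{\tilde\alpha_n}$ at an intermediate $\tilde\alpha_n$ between $\alpha$ and $\hat\alpha_n$, and despite the indicator $\1(X_i/X_{i-t}>y_t)$ depending on $y_t$. The restriction $y_0>0$ keeps the logarithm factors bounded above, while the $(1+\delta)$-moment condition \eqref{eq:psibdd} is precisely what feeds into the chaining argument underlying Proposition~\ref{prop:procconv} to deliver a uniform weak law of large numbers for this second-order term. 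Finally, joint convergence of the forward and backward pieces across $|t|\in\{1,\ldots,\tilde{t}\}$ is automatic, since every quantity in \eqref{eq:weakconv} is a continuous functional of the single underlying empirical tail array process $\{Z_n(\phi)\}$, and the almost-sure uniform formulation follows from the standard Skorohod coupling.
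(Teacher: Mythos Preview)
Your approach is essentially the paper's: invoke Proposition~\ref{prop:procconv} for the joint functional CLT of the tail array sums, apply the ratio delta method for the forward component, and Taylor-expand in the index for the backward component, identifying the derivative term via the time-change formula~\eqref{eq:timechange}. One minor correction: the Taylor remainder is handled more elementarily than you indicate --- because $y_0>0$ bounds $(X_{i-t}/X_i)^{\tilde\alpha}\bigl(\log(X_{i-t}/X_i)\bigr)^2$ by a constant on the event $\{X_i/X_{i-t}>y_0\}$, the paper obtains $\lvert R_n(y)\rvert\le C(\hat\alpha_n-\alpha)^2\sum_i\1(X_i>u_n)=o_P((nv_n)^{1/2})$ directly, so condition~\eqref{eq:psibdd} is not needed beyond its role inside Proposition~\ref{prop:procconv}.
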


 Additional conditions are needed to ensure that the biases of the forward and the backward estimator of $F^{(\Theta_t)}$ are asymptotically negligible:
\begin{eqnarray}
  \sup_{x\in [x_0,\infty)} \left\lvert \prob {\frac{X_t}{X_0} \le x\,\Big|\, X_0>u_n} -  F^{(\Theta_t)}(x) \right\rvert
  & = & o\bigl((nv_n)^{-1/2}\bigr),  \label{eq:bias1}\\
  \sup_{y\in [y_0,\infty)} \left\lvert \expec{\Big(\frac{X_{-t}}{X_0}\Big)^\alpha \1(X_0/X_{-t}>y)\,\Big|\, X_0>u_n} - \bar F^{(\Theta_t)}(y) \right\rvert
  & = & o\bigl((nv_n)^{-1/2}\bigr),\label{eq:bias2}\\
  \bigl\lvert
    \texpec{\log(X_0/u_n)\mid X_0>u_n}- 1/\alpha
  \bigr\rvert
  & = & o \bigl( (nv_n)^{-1/2} \bigr), \label{eq:bias3}
\end{eqnarray}
for $t \in \{-\tilde{t}, \ldots, \tilde{t}\} \setminus \{0\}$ as $n\to\infty$.  These conditions are fulfilled if $nv_n$ tends to $\infty$ sufficiently slowly, because by definition of the spectral tail process, the regular variation of $X_0$ and by \eqref{eq:Tt:tc}, the left-hand sides in \eqref{eq:bias1}--\eqref{eq:bias3} tend to $0$ if $F^{(\Theta_t)}$ is continuous on $[x_0,\infty)$.

\begin{corollary}
\label{corol:probcenter}
Let $(X_t)_{t\in\ZZ}$ be a stationary, regularly varying process. If \ref{con:A}, \ref{con:B}, \ref{con:C'}, and \eqref{eq:bias1}--\eqref{eq:bias3}  are fulfilled for some $x_0\ge 0$ and $y_0\in [x_0,\infty)\cap (0,\infty)$, then
\begin{multline*}
  (nv_n)^{1/2}
  \begin{pmatrix}
    ( \CDFfT{x_t}- F^{(\Theta_t)}(x_t))_{x_t\in[x_0,\infty)} \\
    ( \CDFbTe{y_t}- F^{(\Theta_t)}(y_t))_{y_t\in[y_0,\infty)}
  \end{pmatrix}_{|t|\in\{1,\ldots,\tilde{t}\}}
  \dto \\
   \begin{pmatrix}
    ( Z(\phi_{2,x_t}^t)-\bar F^{(\Theta_t)}(x_t) Z(\phi_1))_{x_t\in[x_0,\infty)} \\
    (Z(\phi_{3,x_t}^t)-\bar F^{(\Theta_t)}(y_t) Z(\phi_1)+(\alpha^2 Z(\phi_0)-\alpha Z(\phi_1))\expec{ \log (\Theta_t) \, \1(\Theta_t>y_t)})_{y_t\in[y_0,\infty)}
  \end{pmatrix}_{|t|\in\{1,\ldots,\tilde{t}\}}
\end{multline*}
where $ Z$ is the centered Gaussian process defined in Theorem \ref{theo:asnormest}.
\end{corollary}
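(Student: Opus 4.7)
The plan is to deduce the corollary directly from Theorem~\ref{theo:asnormest} via Slutsky's lemma, with the bias conditions~\eqref{eq:bias1}--\eqref{eq:bias3} absorbing the mismatch between the pre-asymptotic centerings used in the theorem and the centering by $F^{(\Theta_t)}$ required here.

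First, for each fixed lag $t\in\{-\tilde t,\ldots,\tilde t\}\setminus\{0\}$, I would rewrite each coordinate of the corollary's standardized process as the corresponding coordinate of the theorem's standardized process plus a deterministic shift. For the forward estimator this shift equals
\[
 (nv_n)^{1/2}\bigl(\prob{X_t/X_0\le x_t\mid X_0>u_n} - F^{(\Theta_t)}(x_t)\bigr),
\]
which is $o(1)$ uniformly in $x_t\in[x_0,\infty)$ by~\eqref{eq:bias1}. For the backward estimator the shift is
\[
 (nv_n)^{1/2}\bigl(1-\expec{(X_{-t}/X_0)^\alpha\1(X_0/X_{-t}>y_t)\mid X_0>u_n} - F^{(\Theta_t)}(y_t)\bigr);
\]
invoking the time-change identity~\eqref{eq:Tt:tc} of Lemma~\ref{lem:tc:T} to write $F^{(\Theta_t)}(y_t)=1-\expec{\Theta_{-t}^{\alpha}\1(1/\Theta_{-t}>y_t)}$ in the nonnegative setting (where $\Theta_0=1$ a.s.), condition~\eqref{eq:bias2} yields that this shift too is $o(1)$ uniformly in $y_t\in[y_0,\infty)$.

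Condition~\eqref{eq:bias3} plays a complementary role. Because the backward estimator uses $\hat\alpha_n$ rather than $\alpha$ in the exponent, a Taylor expansion around $\alpha$ produces a term proportional to $\hat\alpha_n-\alpha$. The stochastic fluctuation of this term is already absorbed in the Gaussian limit of Theorem~\ref{theo:asnormest} via the summand $(\alpha^2 Z(\phi_0)-\alpha Z(\phi_1))\expec{\log(\Theta_t)\1(\Theta_t>y_t)}$, but the associated deterministic bias, governed by $\expec{\log(X_0/u_n)\mid X_0>u_n}-1/\alpha$, must also be $o((nv_n)^{-1/2})$; this is exactly~\eqref{eq:bias3}.

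With all three deterministic shifts uniformly $o(1)$ and the uniform weak convergence of Theorem~\ref{theo:asnormest} in hand, Slutsky's lemma applied jointly over $(x_t,y_t,t)$ yields convergence to the same Gaussian limit. The argument is essentially bookkeeping; the main point I would watch for is that the shift bounds hold \emph{uniformly} over $x_t$ and $y_t$, which is built into~\eqref{eq:bias1}--\eqref{eq:bias2} through the explicit suprema, so no additional equicontinuity argument is needed beyond what Theorem~\ref{theo:asnormest} already supplies.
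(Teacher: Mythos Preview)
Your proposal is correct and matches the paper's intended route: the paper does not give a separate proof but simply says the corollary (together with Theorem~\ref{theo:asnormest}) ``can be proved in the same way as Theorem~4.5 in \cite{DreesSegersWarchol2015}'', with details inferable from the bootstrap proof; your Slutsky argument from Theorem~\ref{theo:asnormest}, with \eqref{eq:bias1}--\eqref{eq:bias2} killing the uniform deterministic shifts, is exactly that argument made explicit.

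One small clarification about where \eqref{eq:bias3} enters: once Theorem~\ref{theo:asnormest} is taken as given with its stated centerings, the passage to the corollary requires only \eqref{eq:bias1} and \eqref{eq:bias2}; condition~\eqref{eq:bias3} is really needed inside the proof of Theorem~\ref{theo:asnormest} itself, so that $(nv_n)^{1/2}(\hat\alpha_n-\alpha)$ (rather than $(nv_n)^{1/2}(\hat\alpha_n-\alpha_n)$ with $\alpha_n^{-1}=\texpec{\log(X_0/u_n)\mid X_0>u_n}$) converges to $\alpha Z(\phi_1)-\alpha^2 Z(\phi_0)$. Your paragraph on \eqref{eq:bias3} identifies this correctly; just be aware it is a hypothesis implicitly consumed by the theorem rather than by the Slutsky step.
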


In general, it is difficult to compare the asymptotic variances of the backward and the forward estimator.

\subsection{Consistency of the multiplier block bootstrap}
\label{sec:CLT:bootstrap}

Here we discuss the asymptotic behavior of the multiplier block bootstrap version of the forward and backward estimators. For the sake of brevity, we focus on  estimators of $F^{(\Theta_t)}(x)$ for a fixed $x$.

\cite{drees2015bootstrap} has shown convergence of bootstrap versions of empirical processes of tail array sums under the same conditions needed for convergence of the original empirical processes. Let $\operatorname{P}_\xi$ denote the probability w.r.t.\ $\xi=(\xi_j)_{j\in\NN}$, i.e., the conditional probability given $(X_{n,i})_{1\le i\le n}$.
\begin{theorem}
\label{theo:bootstrap}
Let $\xi_j$, $j\in\NN$, be iid random variables independent of $(X_t)_{t\in\ZZ}$ with $\expec{\xi_j}=0$ and $\var{\xi_j}=1$. Then, under the conditions of Theorem \ref{theo:asnormest}, for all $x\ge x_0, y\ge y_0$,
\begin{eqnarray*}
 \lefteqn{\sup_{r,s\in \mathbb{R}^{2\tilde{t}}}
 \bigg|\operatorname{P}_\xi\Big[(nv_n)^{1/2}\left(\hat{F}^{*(\mathrm{f},\Theta_t)}_{n}(x)- \CDFfT{x}\right)\le r_t, }
 \\
 & & \hspace*{3cm} (nv_n)^{1/2}\left(\hat{F}^{*({\mathrm{b}},\Theta_t)}_{n}(y)- \CDFbTe{y}\right)\le s_t,\;\forall\,|t|\in\{1,\ldots,\tilde{t}\}\Big] 
 \\
 & &\hspace*{0.3cm} - \operatorname{P}\Big[(nv_n)^{1/2}\left(\hat{F}^{(\mathrm{f},\Theta_t)}_{n}(x)- F^{(\Theta_t)}(x)\right)\le r_t, 
 \\
  & & \hspace*{3cm}(nv_n)^{1/2}\left(\CDFbTe{y}- F^{(\Theta_t)}(y)\right)\le s_t,\;\forall\,|t|\in\{1,\ldots,\tilde{t}\}\Big]
  \bigg| \to 0
\end{eqnarray*}
in probability.
\end{theorem}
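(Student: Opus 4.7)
The plan is to deduce the statement from the bootstrap functional central limit theorem for empirical processes of generalized tail array sums established in \cite{drees2015bootstrap}, combined with the same continuous-mapping arguments that underlie Theorem~\ref{theo:asnormest}. All estimators at play and their multiplier-weighted counterparts can be written as smooth functionals of tail array sums $\hat{Z}_n(\phi):=\sumIN \phi(X_{n,i})$ for a class $\mathcal{F}$ of index functions that contains the denominator indicator $\phi_1$, the forward numerators $\phi_{2,x}^t$, the backward numerators $\phi_{3,x}^t$ (without the exponent $\hat{\alpha}_n$) and the logarithmic function $\phi_0$ arising in the Hill estimator \eqref{Hill}. Under \ref{con:A}--\ref{con:C'}, \cite{drees2015bootstrap} yields the joint conditional weak convergence
\[
  \bigl( (nv_n)^{1/2}\bigl(\hat{Z}^*_n(\phi)-\hat{Z}_n(\phi)\bigr) \bigr)_{\phi\in\mathcal{F}}
  \dto Z
\]
in outer probability w.r.t.\ the data, where $Z$ is the same centered Gaussian process appearing in Theorem~\ref{theo:asnormest}, and jointly the bootstrap and the original processes behave as two asymptotically independent copies of $Z$.

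Given this input, I would repeat the derivation of Theorem~\ref{theo:asnormest} step by step, but with the multiplier-weighted empirical process in place of the original one. The forward estimator is a ratio of two tail array sums, so uniform consistency of the denominator combined with a delta-method expansion yields, uniformly in $x\ge x_0$ and $|t|\le\tilde t$,
\[
  (nv_n)^{1/2}\bigl(\hat{F}^{*(\mathrm{f},\Theta_t)}_n(x)-\hat{F}^{(\mathrm{f},\Theta_t)}_n(x)\bigr)
  = (nv_n)^{1/2}\bigl(\hat{Z}^*_n-\hat{Z}_n\bigr)(\phi_{2,x}^t) - \bar F^{(\Theta_t)}(x)\,(nv_n)^{1/2}\bigl(\hat{Z}^*_n-\hat{Z}_n\bigr)(\phi_1) + o_{\mathrm{P}_\xi}(1).
\]
For the backward estimator, the nonlinearity in $\hat\alpha_n^*$ requires one extra Taylor expansion of $|X_{i-t}/X_i|^{\hat\alpha_n^*}$ about the true $\alpha$; since both $\hat\alpha_n-\alpha$ and $\hat\alpha_n^*-\hat\alpha_n$ are $O_{\mathrm{P}_\xi}((nv_n)^{-1/2})$ by the same delta method applied to \eqref{Hill} and \eqref{eq:Hill:boot}, a first-order expansion reproduces precisely the linear contribution $(\alpha^2 Z(\phi_0)-\alpha Z(\phi_1))\,\expec{\log\Theta_t\,\1(\Theta_t>y_t)}$ appearing in Theorem~\ref{theo:asnormest}.

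The main obstacle will be the uniform validity, in $y_t\in[y_0,\infty)$ and conditionally on the data, of the Taylor expansion of the backward estimator, since the random power $\hat\alpha_n^*$ sits in a sum of $n v_n$ non-zero summands. One must control the multiplier-weighted sums $\sum_j(1+\xi_j)\sum_{i\in I_j}|X_{i-t}/X_i|^{\alpha}(\log|X_{i-t}/X_i|)^k\1(\cdot)$ for $k\in\{1,2\}$ uniformly in a shrinking neighborhood of $\alpha$, which is exactly what the integrability conditions \eqref{eq:snkdef2}--\eqref{eq:psibdd} deliver together with the bootstrap tail array convergence. All remainder terms of the expansion are $o_{\mathrm{P}_\xi}(1)$ uniformly in $y_t$ and $t$.

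Once the linearization is in place, the conditional finite-dimensional distribution of the bootstrapped vector on the left-hand side of the statement coincides in probability with the unconditional limit law on the right, both being equal to the continuous multivariate Gaussian distribution identified in Theorem~\ref{theo:asnormest}; the change of centering from the pre-asymptotic probabilities to $F^{(\Theta_t)}$ on the right is absorbed via the bias conditions \eqref{eq:bias1}--\eqref{eq:bias3}, which are implicitly needed for the statement to make sense. The uniform Kolmogorov-type statement then follows from the standard Pólya-type lemma that weak convergence of multivariate cdf's to a continuous limit entails uniform convergence, applied conditionally on the data and combined with a subsequence argument to upgrade the almost-sure conclusion to convergence in probability.
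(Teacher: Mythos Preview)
Your strategy matches the paper's: linearize the forward estimator via a ratio delta method and the backward estimator via a first-order Taylor expansion of $(X_{i-t}/X_i)^{\hat\alpha_n^*}$ about $\alpha$, feed the resulting linear combinations into the bootstrap tail-array convergence of \cite{drees2015bootstrap} (stated here as Proposition~\ref{prop:bootprocconv}), and conclude with a P\'olya-type argument. Two minor notes: the paper controls the Taylor remainder deterministically via the bound $X_{i-t}/X_i<1/y_0$ on the indicator set (so \eqref{eq:snkdef2}--\eqref{eq:psibdd} are not what is doing the work at that step), and your observation that the bias conditions \eqref{eq:bias1}--\eqref{eq:bias3} are implicitly needed for the $F^{(\Theta_t)}$-centering on the right-hand side is correct and is glossed over in the paper's own proof.
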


In particular, if $a$ and $b$ are such that $\condprob{\hat{F}^{*({\mathrm{b}},\Theta_t)}_{n}(y)\in [a,b]}=\beta$, then
\[
  \left[ 2\CDFbTe{y}-b, \; 2\CDFbTe{y}-a \right]
\]
is a confidence interval for $F^{(\Theta_t)}(y)$ with approximative coverage level $\beta$. However, if the number of exceedances over a given threshold is too small, one may prefer to construct confidence intervals based on bootstrap estimators corresponding to lower thresholds. Let $\tilde{u}_n$ denote another threshold sequence, let $\tilde{v}_n = \prob{X_0 > \tilde{u}_n}$ denote the corresponding exceedance probabilities, and let $\hat{\tilde F}^{({\mathrm{b}},\Theta_t)}_{n}(y)$ and $\hat{\tilde F}^{*({\mathrm{b}},\Theta_t)}_{n}(y)$ denote the backward estimator and the bootstrap version thereof, respectively, based on the exceedances over $\tilde u_n$. The conditional distribution of
\[
  (n\tilde v_n)^{1/2}
  \left(
    \hat{\tilde F}^{*({\mathrm{b}},\Theta_t)}_{n}(y)
    -
    \hat{\tilde F}^{({\mathrm{b}},\Theta_t)}_{n}(y)
  \right)
\]
given the data is approximately the same as the unconditional distribution of
\[
  (nv_n)^{1/2} \left( \CDFbTe{y}-F^{(\Theta_t)}(y) \right).
\]
Hence, if $a$ and $b$ are such that $\condprob{\hat{\tilde F}^{*({\mathrm{b}},\Theta_t)}_{n}(y)\in [a,b]}=\beta$ and if $\hat{\tilde{v}}_n/\hat{v}_n$ is a suitable estimator of $\tilde v_n/v_n$, then
\begin{equation}  \label{eq:rescalebootconf}
  \left[
    \left(\frac{\hat{\tilde{v}}_n}{\hat{v}_n}\right)^{1/2}
    \big(\hat{\tilde F}^{({\mathrm{b}},\Theta_t)}_{n}(y)-b\big)+\CDFbTe{y}, \;
    \left(\frac{\hat{\tilde{v}}_n}{\hat{v}_n}\right)^{1/2}
    \big(\hat{\tilde F}^{({\mathrm{b}},\Theta_t)}_{n}(y)-a\big)+\CDFbTe{y}
  \right]
\end{equation}
is a confidence interval  for $F^{(\Theta_t)}(y)$
with approximative coverage probability $\beta$. In practice, one will often use large order statistics as thresholds, say the $k_n$-th and $\tilde{k}_n$-th largest observations, respectively. In that case, $\tilde v_n/v_n$ can be replaced by $\tilde k_n/k_n$. A similar approach, namely to use a variance estimator which is based on a lower threshold, has successfully been employed in \citet[Section~5]{drees2003quantile}.

Of course, confidence intervals based on the bootstrap version of the forward estimator can be constructed analogously.

\begin{remark}
It is possible to generalize Theorem~\ref{theo:bootstrap} to cover the joint limit distribution of the bootstrap estimators for all $x \ge x_0$ and $y \ge y_0$. Technically, this requires to endow the space of probability measures on spaces of bounded functions from $[x_0,\infty)$ (resp.\ $[y_0,\infty)$) to $\mathbb{R}^{2\tilde{t}}$ with a metric that induces weak convergence, e.g., the bounded Lipschitz metric. This is the approach in \cite{drees2015bootstrap}  to establish the consistency of a bootstrap method for estimating the extremogram, a close cousin of the tail process. Based on such a result, one may construct uniform confidence bands for the function $F^{(\Theta_t)}$ on $[x_0,\infty)$ or $[y_0,\infty)$, respectively, which in general will be considerably wider than the pointwise confidence intervals discussed above and will thus often be rather uninformative. For brevity, we omit the details.
\end{remark}

\begin{remark}
For time series which may take on negative values too, the forward and backward estimators of $ F^{(\Theta_t)}$  can be represented in terms of generalized tail array sums constructed from
\[
X_{n,i}^{\tilde{t}}=u_n^{-1}\left(X_{i-\tilde{t}},\ldots,X_i,\ldots,X_{i+\tilde{t}}\right)
\1\left(\abs{X_i}>u_n\right).
\]
When $x < 0$, for example, the backward estimator $\CDFbT{x}$ is equal to the ratio of the generalized tail array sums pertaining to the functions
\begin{align*}
  (y_{-\tilde{t}},\ldots,y_0,\ldots,y_{\tilde{t}}) & \mapsto \abs{y_{-t}/y_0}^\alpha \, \1(y_0/\abs{y_{-t}}\le x, \abs{y_0}>1), \\
 (y_{-\tilde{t}},\ldots,y_0,\ldots,y_{\tilde{t}}) &\mapsto \1(\abs{y_0}> 1).
\end{align*}
Limit theorems can be obtained by the same methods as in the case of non-negative observations under obvious analogues to the conditions \ref{con:A}, \ref{con:B} and \ref{con:C'} with $v_n:=\prob{\abs{X_0}>u_n}$.
\end{remark}

\section{Finite-sample performance}
\label{sec:numerical_simulations}
In Section~\ref{sec:sim:estimators}, we show results from a numerical simulation study designed to test the performance of the forward \eqref{forward_Tt}  and the backward \eqref{backward_Tte} estimators. We continue in  Section~\ref{sec:sim:bootstrap} by evaluating the performance of two bootstrap schemes, the multiplier block bootstrap and the stationary bootstrap, described in \Blue{Section~\ref{sec:resampling}.}

The simulations are based on pseudo-random samples from two widely used models for financial time series. Both models are of the form $X_t = \sigma_t Z_t$ where $\sigma_t$ and $Z_t$ are independent. First, we consider the GARCH(1,1) model with $\sigma_t^2=0.1+ 0.14X_{t-1}^2+ 0.84\sigma_{t-1}^2$, the innovations $Z_t$ being independent $t_4$ random variables, standardized to have unit variance. The second model is the stochastic volatility (SV) process with $\log{\sigma_t}=0.9\log{\sigma_{t-1}} +\epsilon_t$, with independent standard normal innovations $\epsilon_t$ and independent innovations $Z_t$ with common distribution $t_{2.6}$. The parameters have been chosen to ensure that both time series are regularly varying with index $\alpha = 2.6$ \citep{davis2001,mikosch2000}.

\subsection{Forward and backward estimators}
\label{sec:sim:estimators}

We estimate $\prob{\Theta_t \leq x}$ for both the GARCH(1,1) and the SV model, for various arguments $x$ and lags~$t$, via the forward and the backward estimator, with estimated tail index $\alpha$. The threshold is set at the empirical $95\%$ quantile of the absolute values of a time series of length $n = 2\,000$. We do $1\,000$ Monte Carlo repetitions and calculate bias, standard deviation, and root mean squared error (RMSE) with respect to the pre-asymptotic values $\prob{X_t/|X_0|\le x\mid |X_0|>F_{|X|}^\leftarrow(0.95)}$ in the forward representation. The true quantile $F_{|X|}^\leftarrow(0.95)$ of $|X_0|$ and the true pre-asymptotic values were calculated numerically via $10\,000$ Monte Carlo simulations based on time series of length $10\,000$.

It was already reported in the context of Markovian time series that for $t=1$ and $\abs{x}$ large, the backward estimators usually have a smaller variance than the forward estimators \citep{DreesSegersWarchol2015}. Here, numerical simulations suggest that this is true for non-Markovian time series and for higher lags as well. The results are presented in Figure~\ref{plot:for_back_rmse}.

\begin{figure}[htp]
\begin{center}
\begin{tabular}{ccc}
\includegraphics[width=0.3\textwidth]{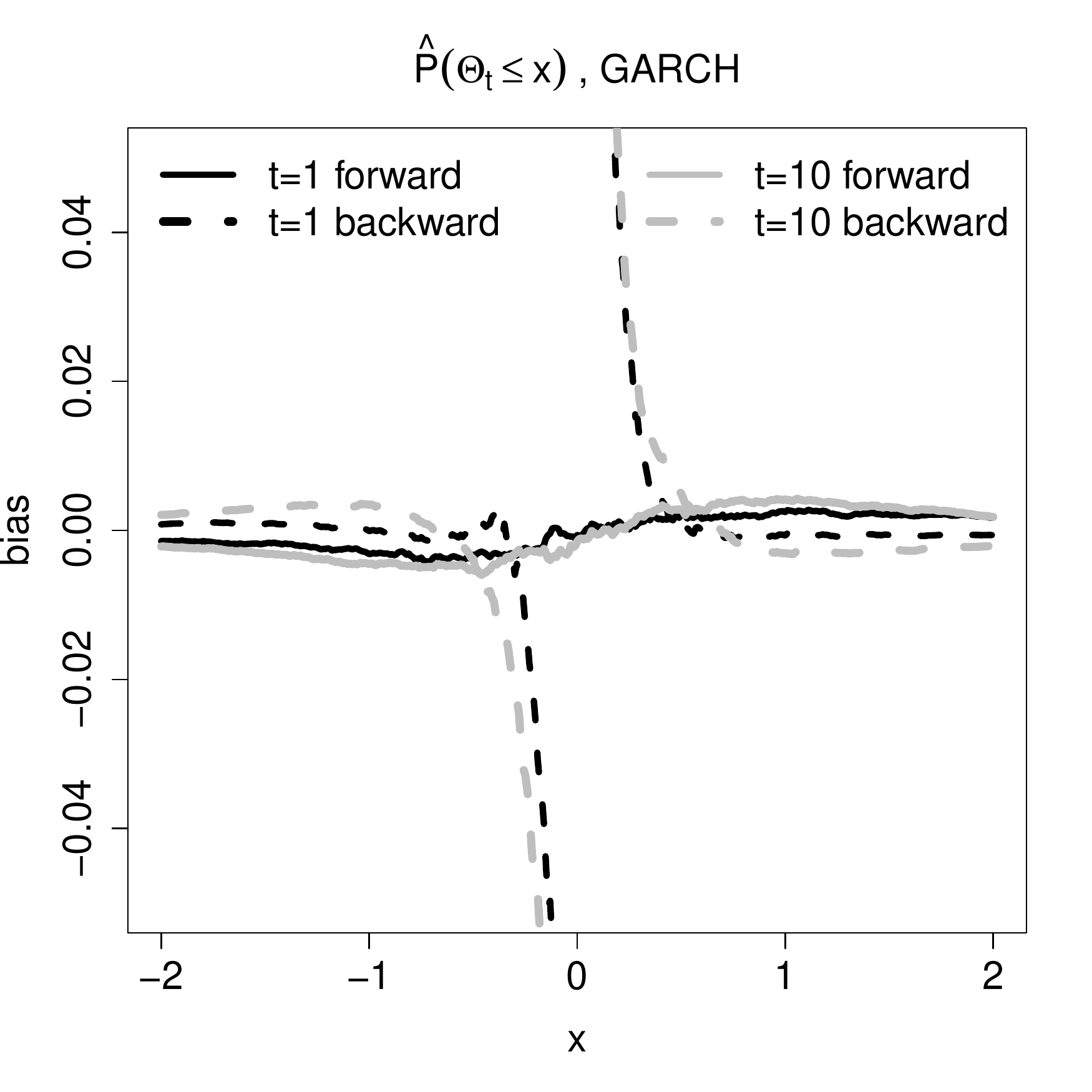}
&
\includegraphics[width=0.3\textwidth]{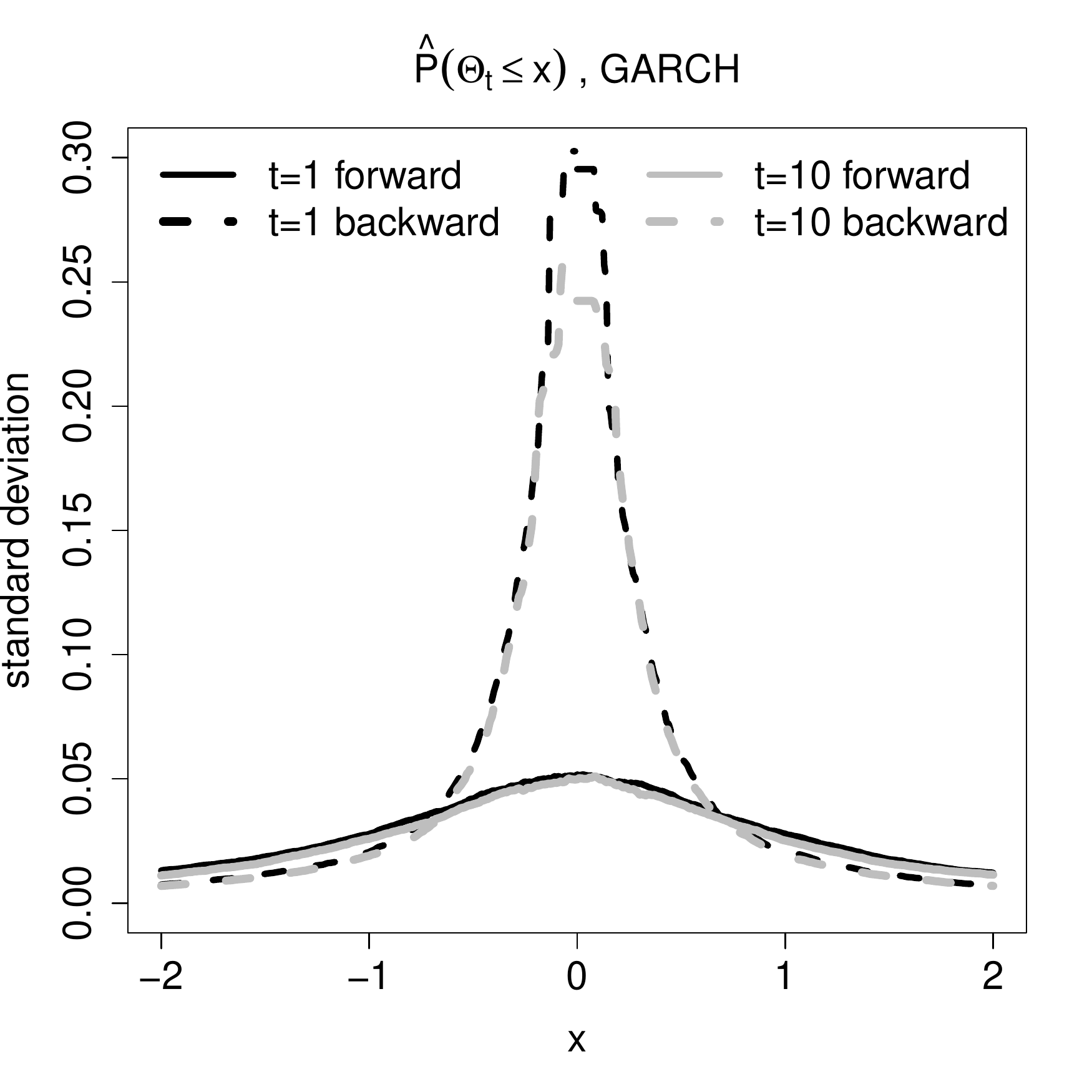}
&
\includegraphics[width=0.3\textwidth]{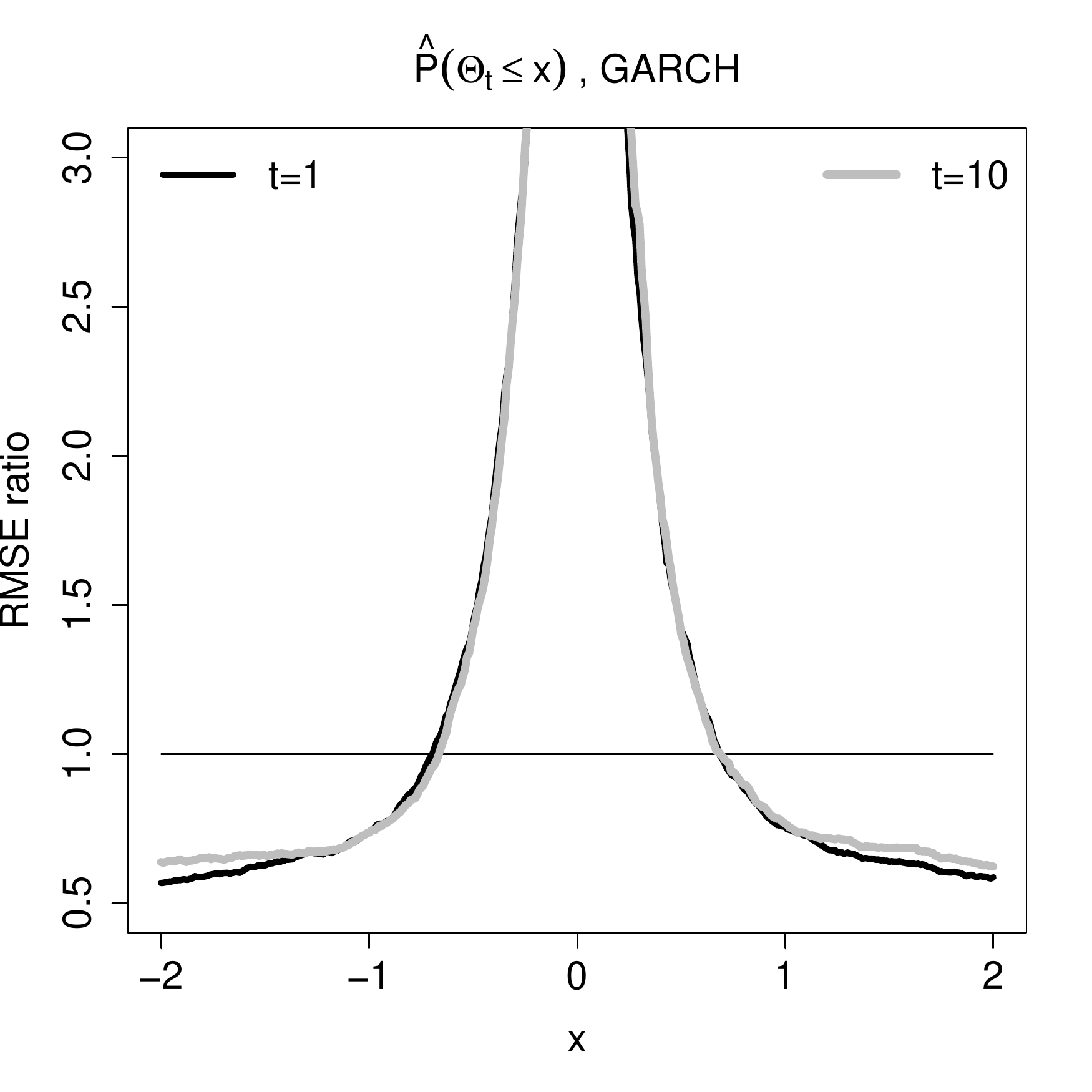}\\

\includegraphics[width=0.3\textwidth]{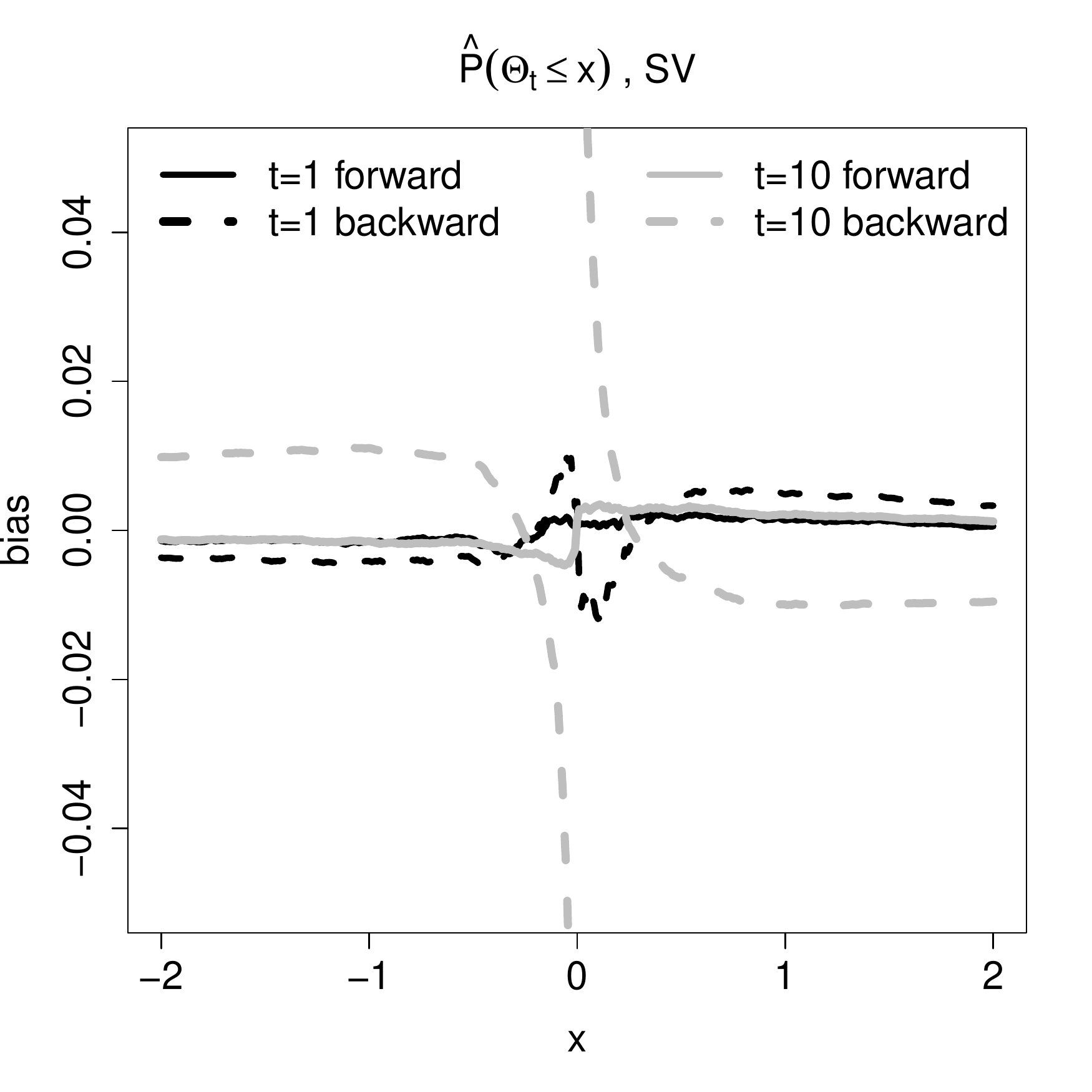}
&
\includegraphics[width=0.3\textwidth]{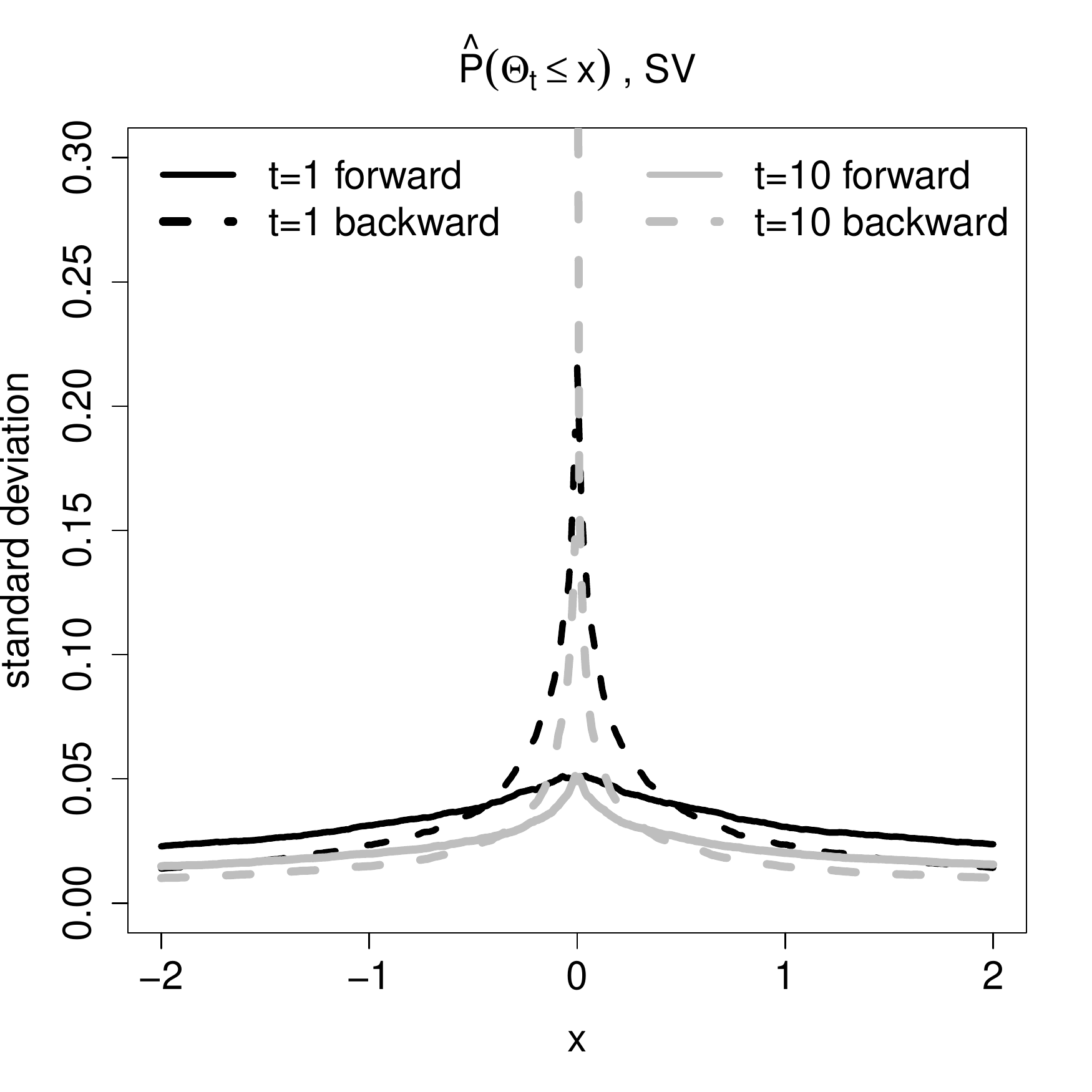}
&
\includegraphics[width=0.3\textwidth]{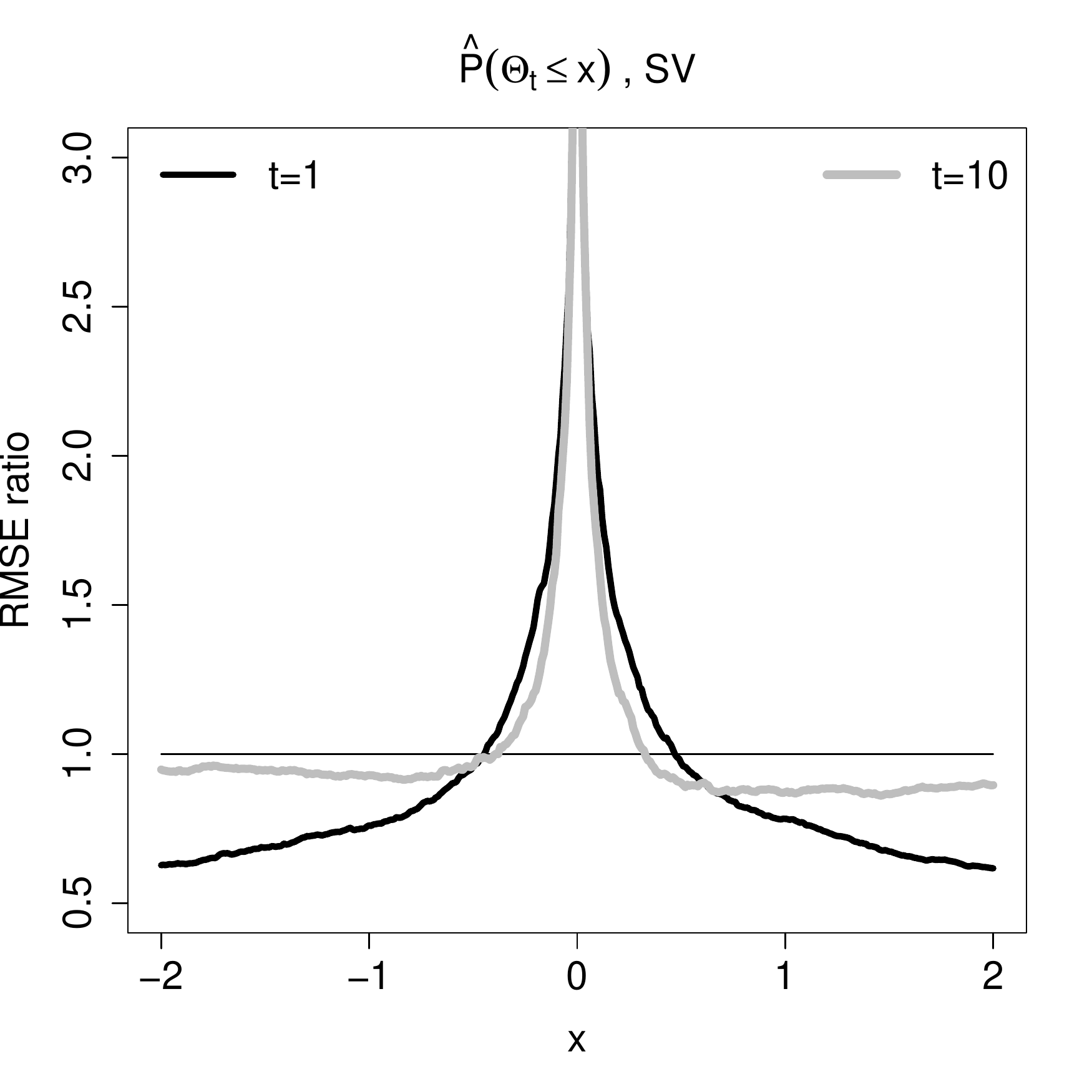}
\end{tabular}
\end{center}
\caption{\label{plot:for_back_rmse} \footnotesize
Performance of the forward and the backward estimators: bias (left), standard deviation (middle), ratio of root mean square errors (right) with respect to the pre-asymptotic values in the forward estimator, for GARCH(1,1) model (top) and SV model (bottom). }
\end{figure}

The right column, which shows the RMSE of the backward estimator divided by the RMSE of the forward estimators (both with respect to the pre-asymptotic values of the spectral tail process in the forward representation), shows that the backward estimator outperforms the forward estimator if $x$ is sufficiently large in absolute value. This phenomenon was also observed at other lags (not shown). For some other models, however, such as certain stochastic recurrence equation or copula Markov models, the advantage of the backward estimator was observed only for smaller lags ($t=1,\ldots,4$).

\subsection{Bootstrapped spectral tail process}
\label{sec:sim:bootstrap}

We asses the performance of the two bootstrap schemes, the stationary bootstrap and the multiplier block bootstrap. To do so, we estimate the coverage probability of the bootstrapped confidence intervals with respect to the true pre-asymptotic spectral tail process in the forward representation. We focus on probabilities of the form $\prob{\abs{\Theta_t}>1}$. This particular value can be of interest due to its interpretation as the probability of a shock being followed by an even larger aftershock, i.e., $\abs{X_t}$ being larger than $\abs{X_0}$ conditionally on $\abs{X_0}$ exceeding some threshold already. The true pre-asymptotic values in the forward representation were calculated numerically via $10\,000$ Monte Carlo simulations with time series of length $10\,000$.

\begin{figure}[htp]
\begin{center}
\begin{tabular}{cc}
\includegraphics[width=0.4\textwidth]{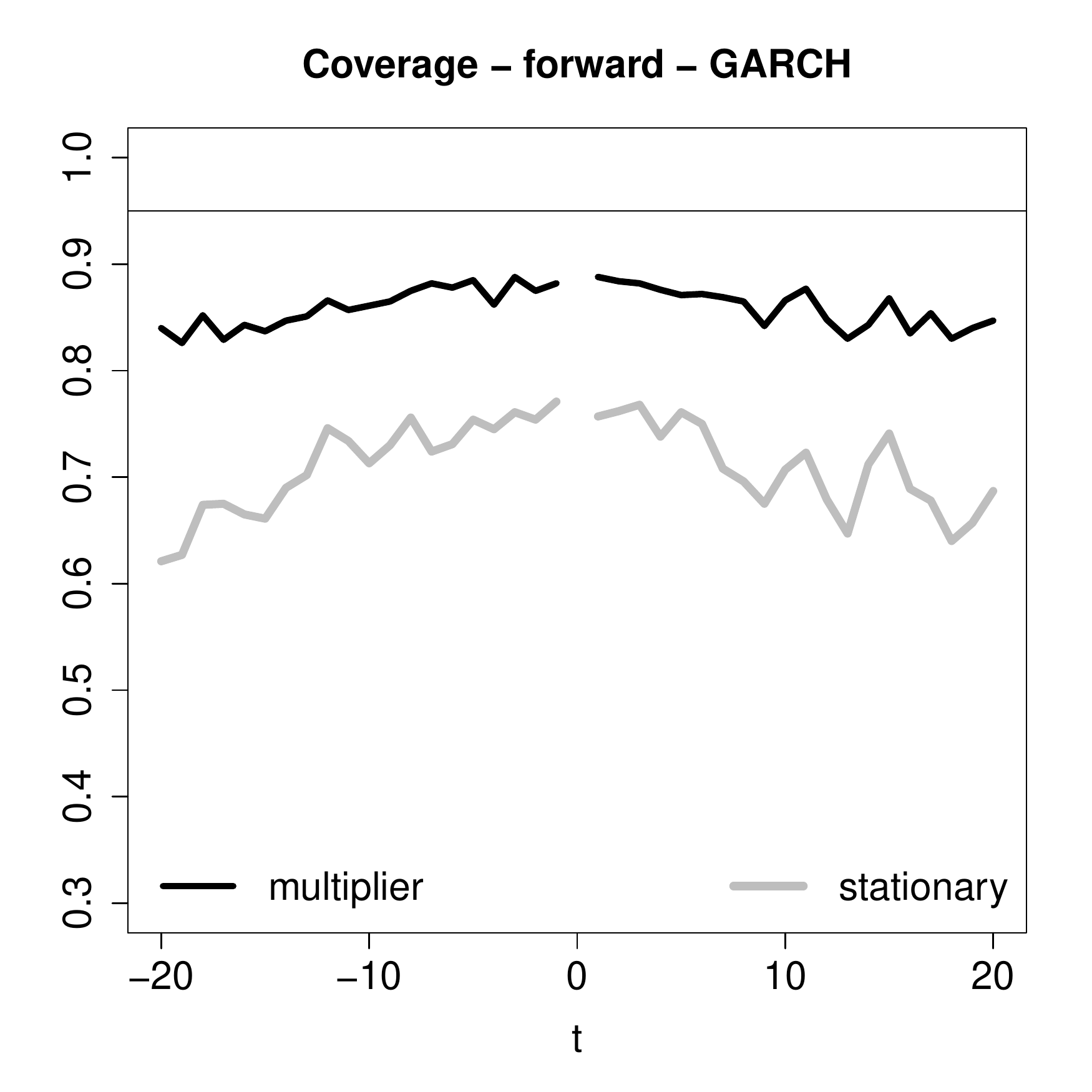}
&
\includegraphics[width=0.4\textwidth]{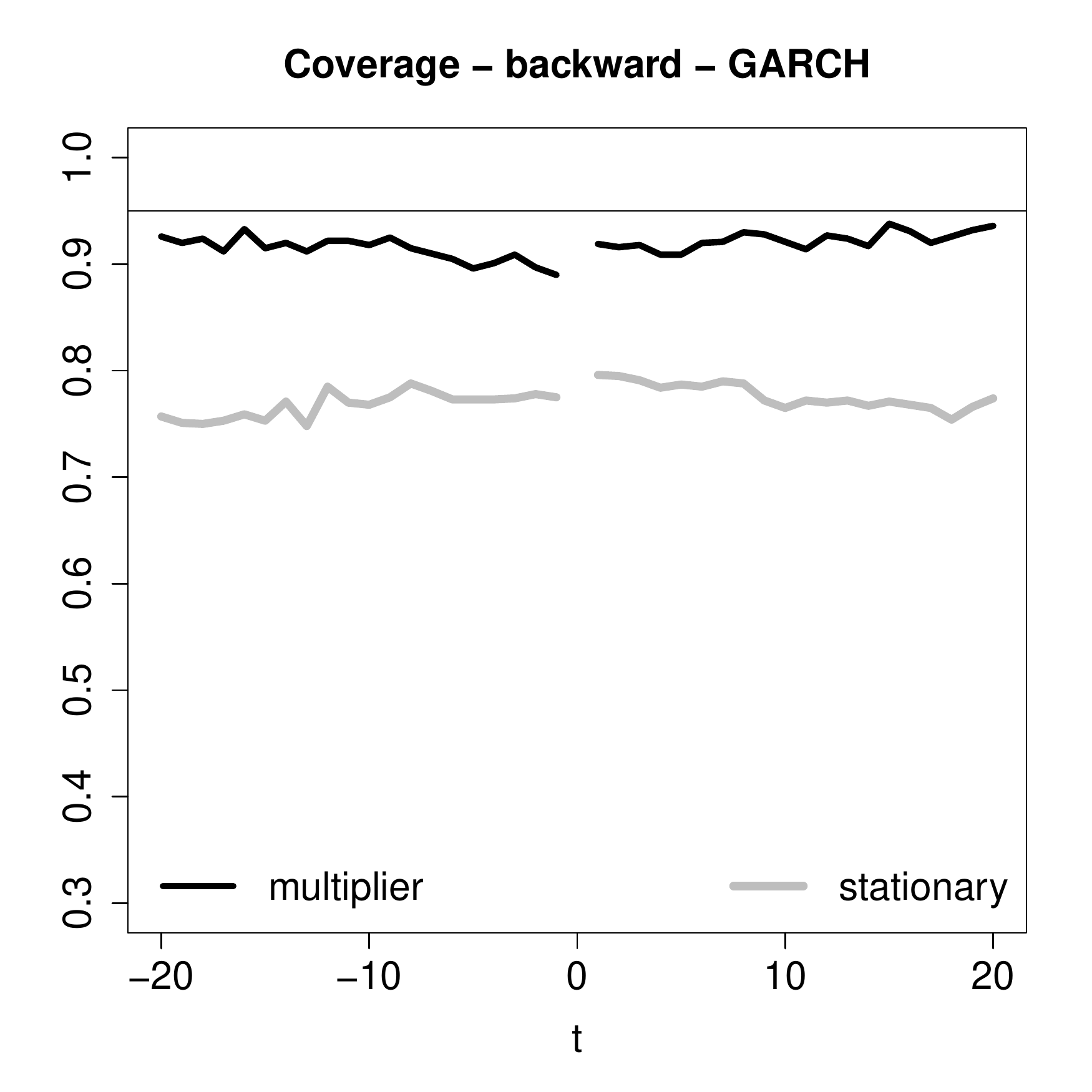}\\

\includegraphics[width=0.4\textwidth]{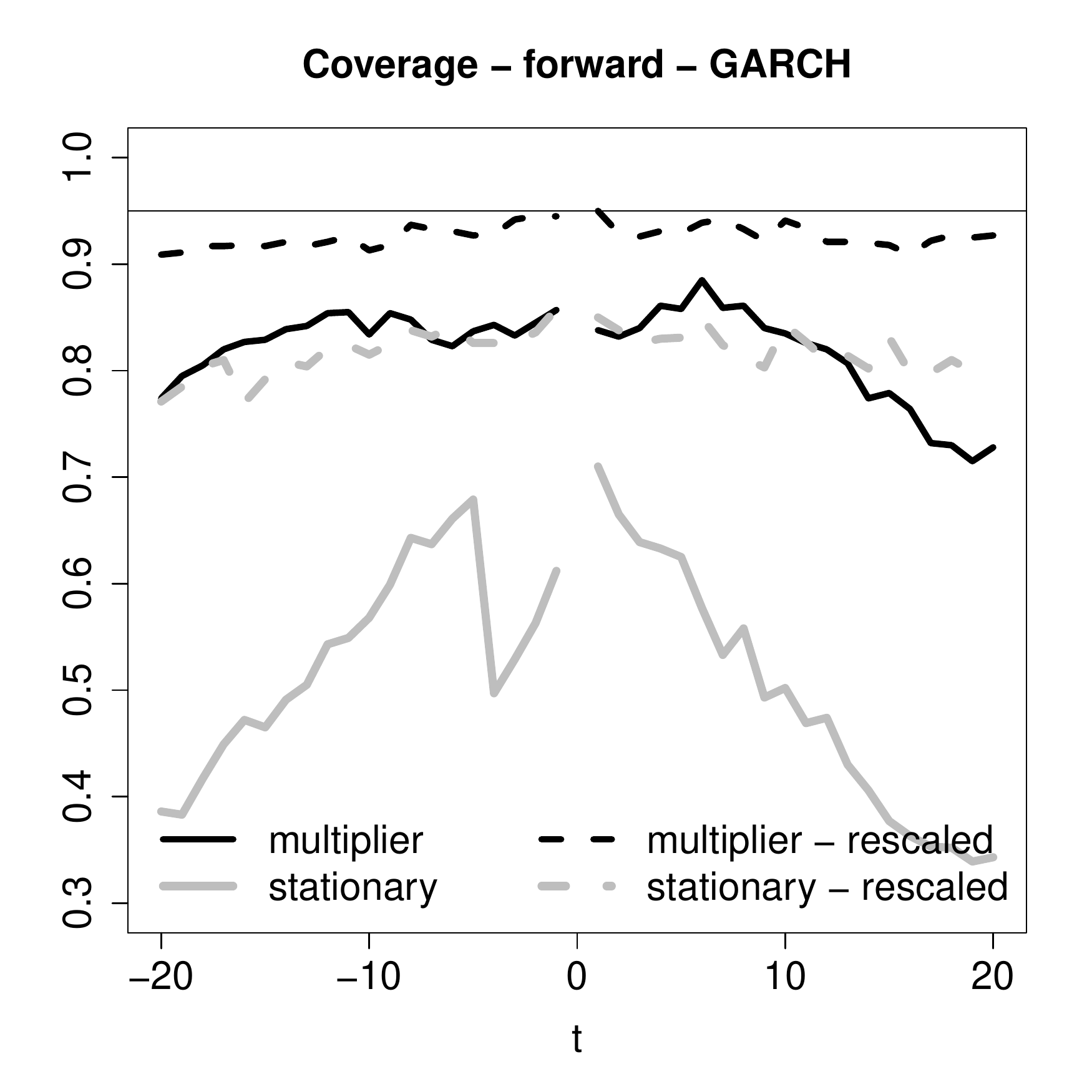}
&
\includegraphics[width=0.4\textwidth]{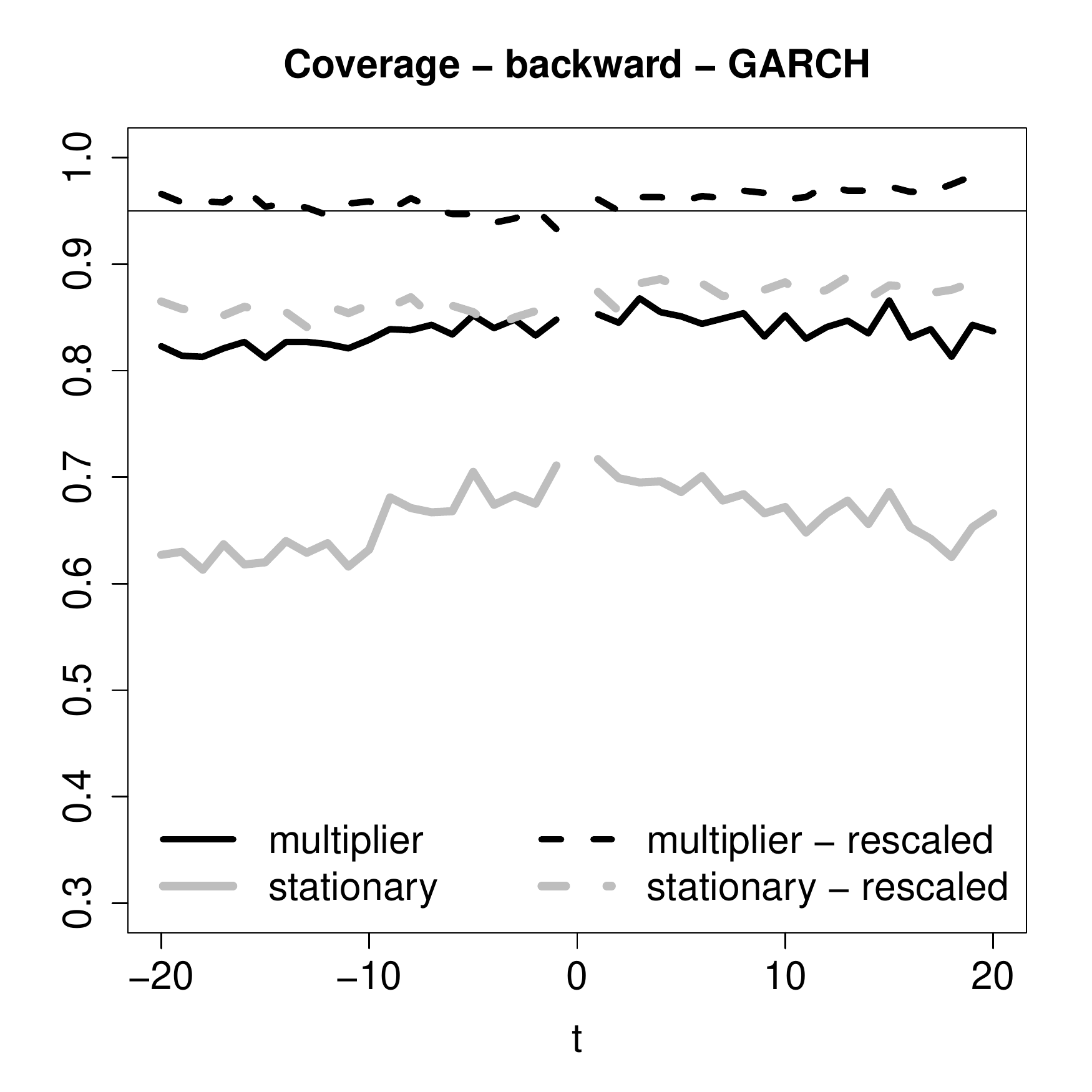}\\

\includegraphics[width=0.4\textwidth]{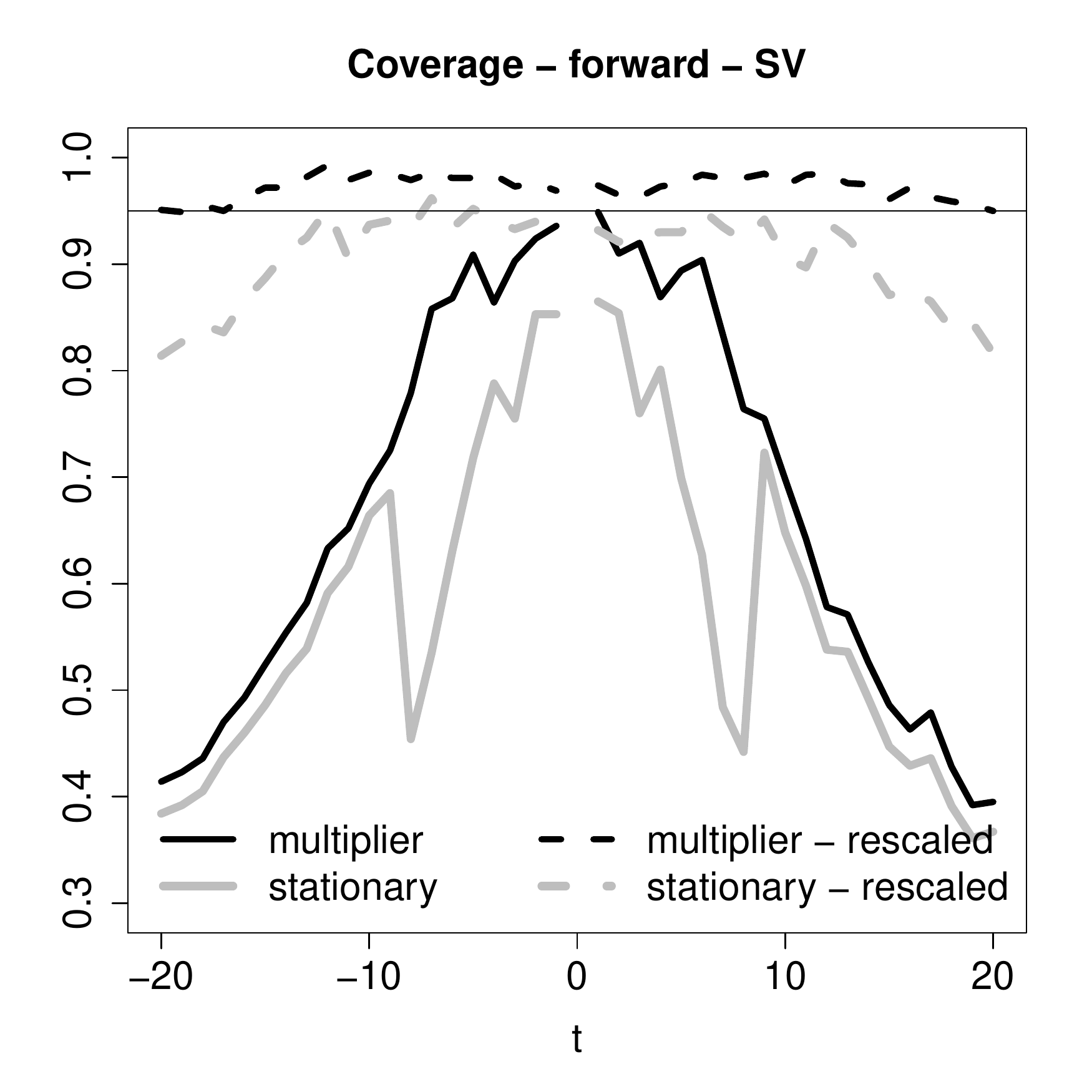}
&
\includegraphics[width=0.4\textwidth]{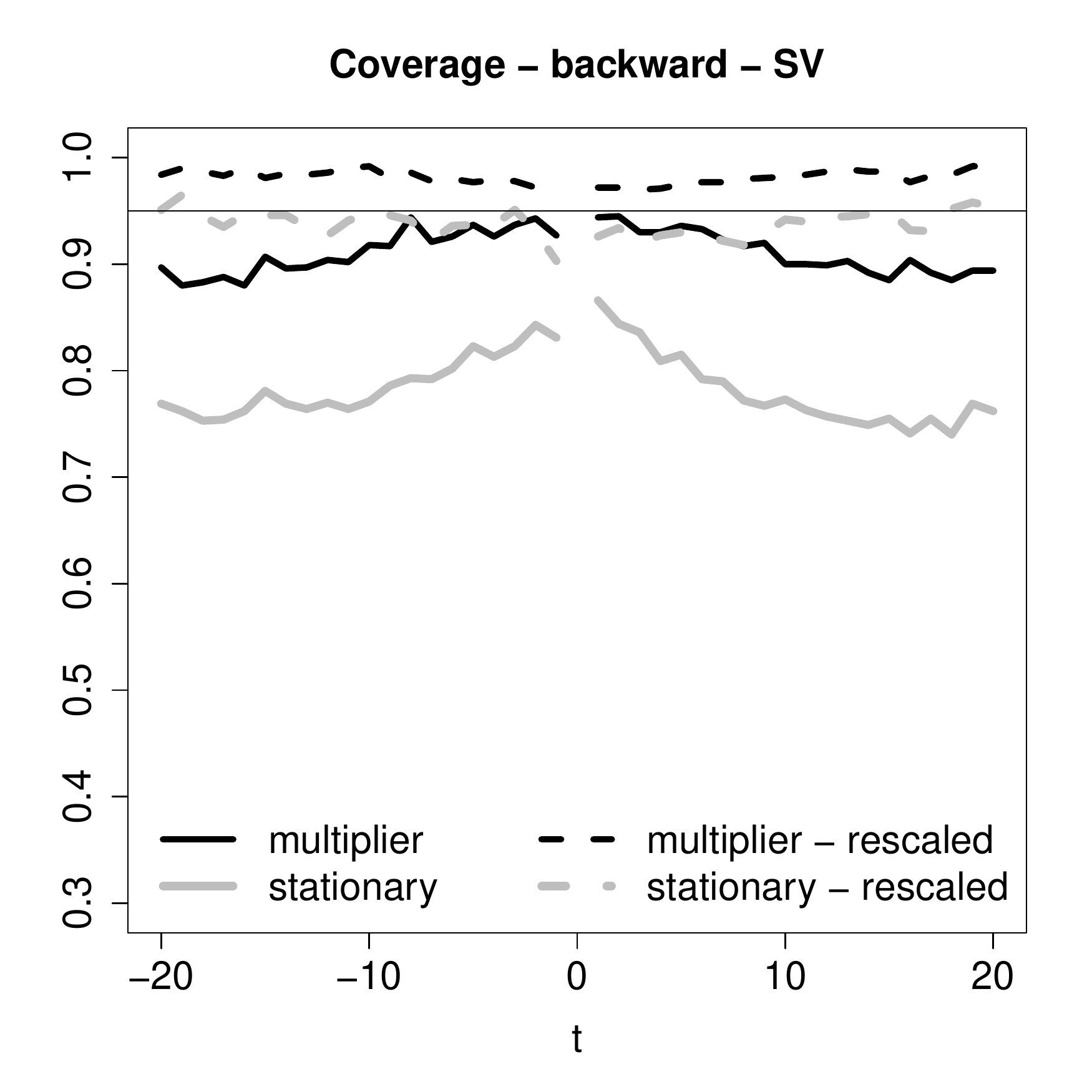}
\end{tabular}
\end{center}
\caption{\label{plot:boot} \footnotesize
Coverage probabilities of confidence intervals for $\prob{|X_t/X_0|>1\mid |X_0|>u_n}$ (left: forward estimator; right: backward estimator)  based on the stationary bootstrap (gray) and the multiplier block bootstrap (black). The top and the middle plots correspond to the GARCH$(1,1)$ model with thresholds set at the $95\%$ and the $98\%$ empirical quantiles, respectively. The bottom plots correspond to the SV simulation study with threshold set at the $98\%$ empirical quantile. In the latter two cases, the dashed lines correspond to the coverage probabilities of the rescaled confidence intervals \eqref{eq:rescalebootconf}. The horizontal black line is the $0.95$ reference line.}
\end{figure}

In Figure~\ref{plot:boot}, we plot the results for the GARCH$(1,1)$ model and for the SV model, for the forward and the backward estimators. The expected block size for the stationary bootstrap (represented by gray lines) was chosen as $100$. For the multiplier block bootstrap (black lines), the block size was fixed at $100$ and the multiplier variables $\xi_j$ were drawn independently from the standard normal distribution. Estimates of the coverage probabilities are based on $1\,000$ simulations. In each such sample, we use $1\,000$ bootstrap samples for calculating the confidence intervals with nominal coverage probability $95\%$. We use two different thresholds, i.e., the $95\%$ and $98\%$  empirical quantiles of the absolute values of a time series of length $n=2\,000$. For the higher threshold, the confidence intervals were calculated either directly (indicated by the solid lines) or using a rescaled bootstrap estimator that was based on the exceedances over the $95\%$ empirical quantile as in \eqref{eq:rescalebootconf} (dashed lines).

\begin{figure}[tp]
\begin{center}
\begin{tabular}{cc}
\includegraphics[width=0.4\textwidth]{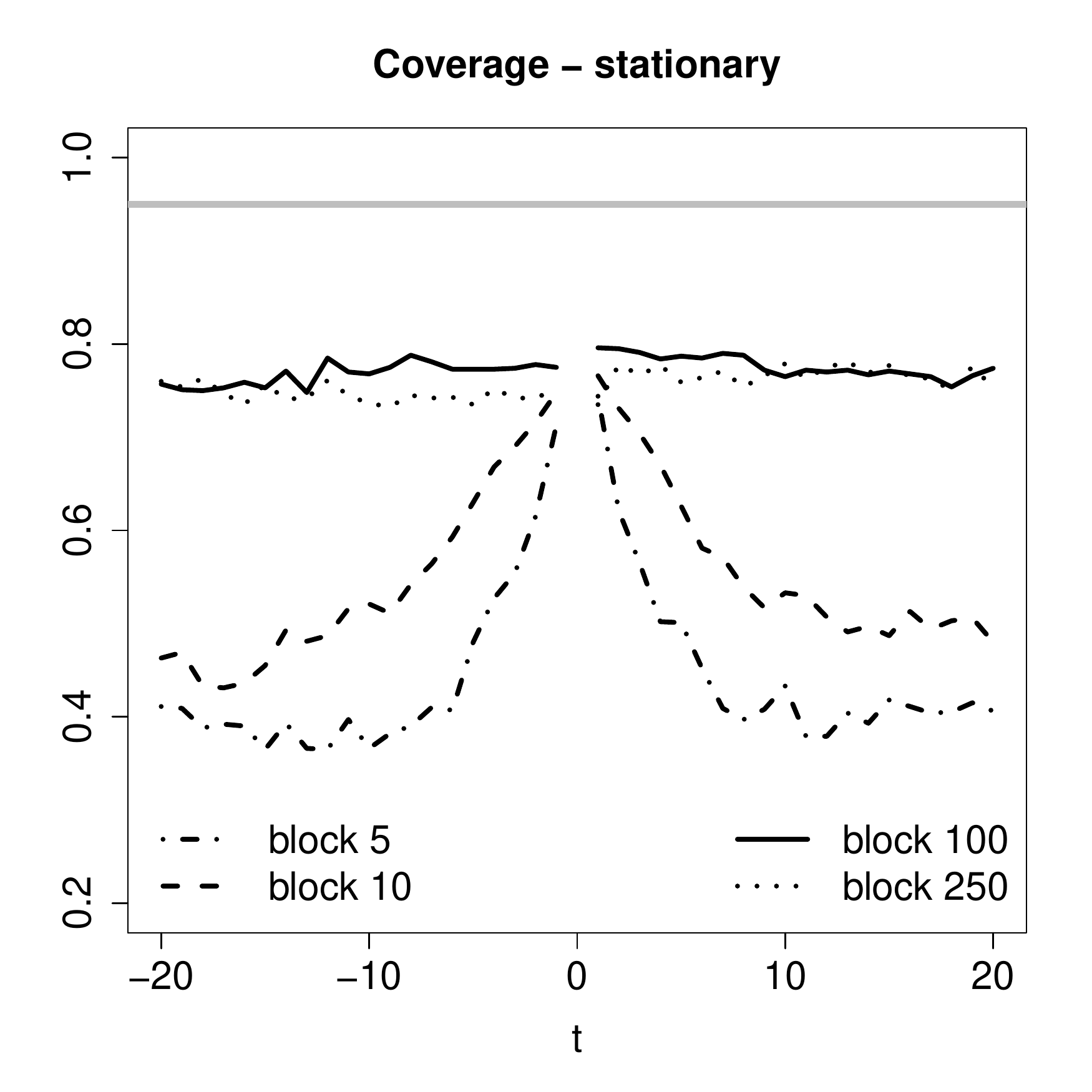}
&
\includegraphics[width=0.4\textwidth]{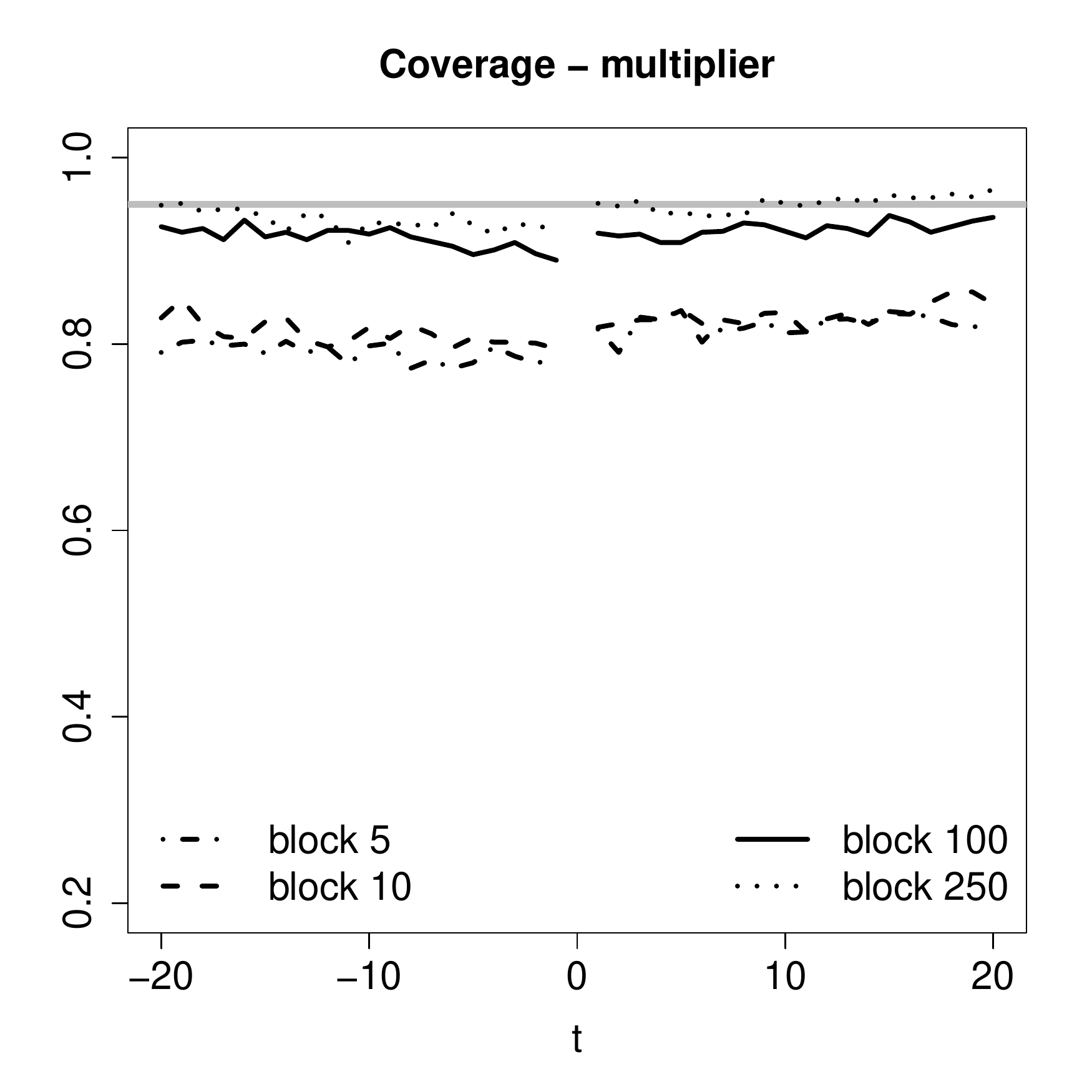}\\

\includegraphics[width=0.4\textwidth]{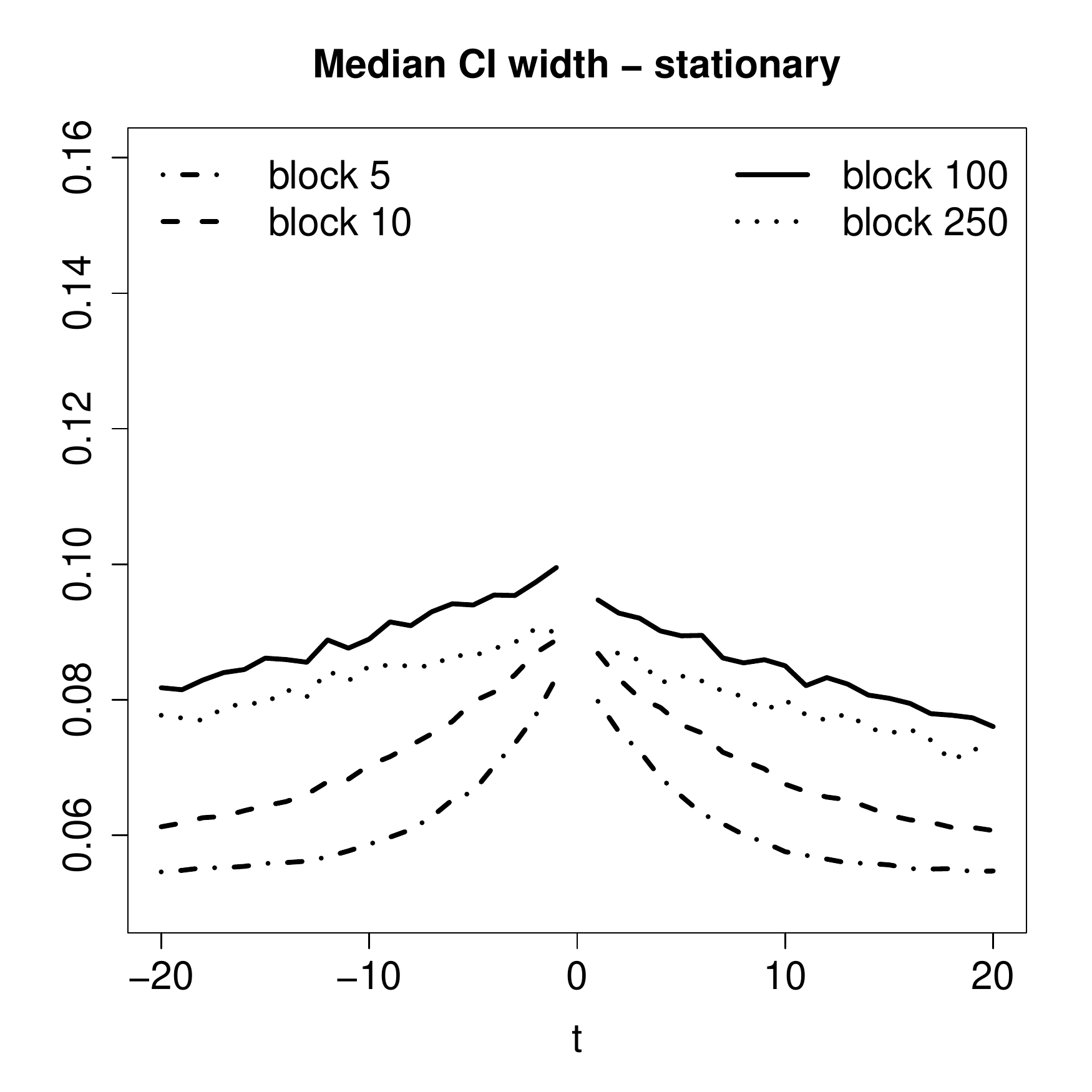}
&
\includegraphics[width=0.4\textwidth]{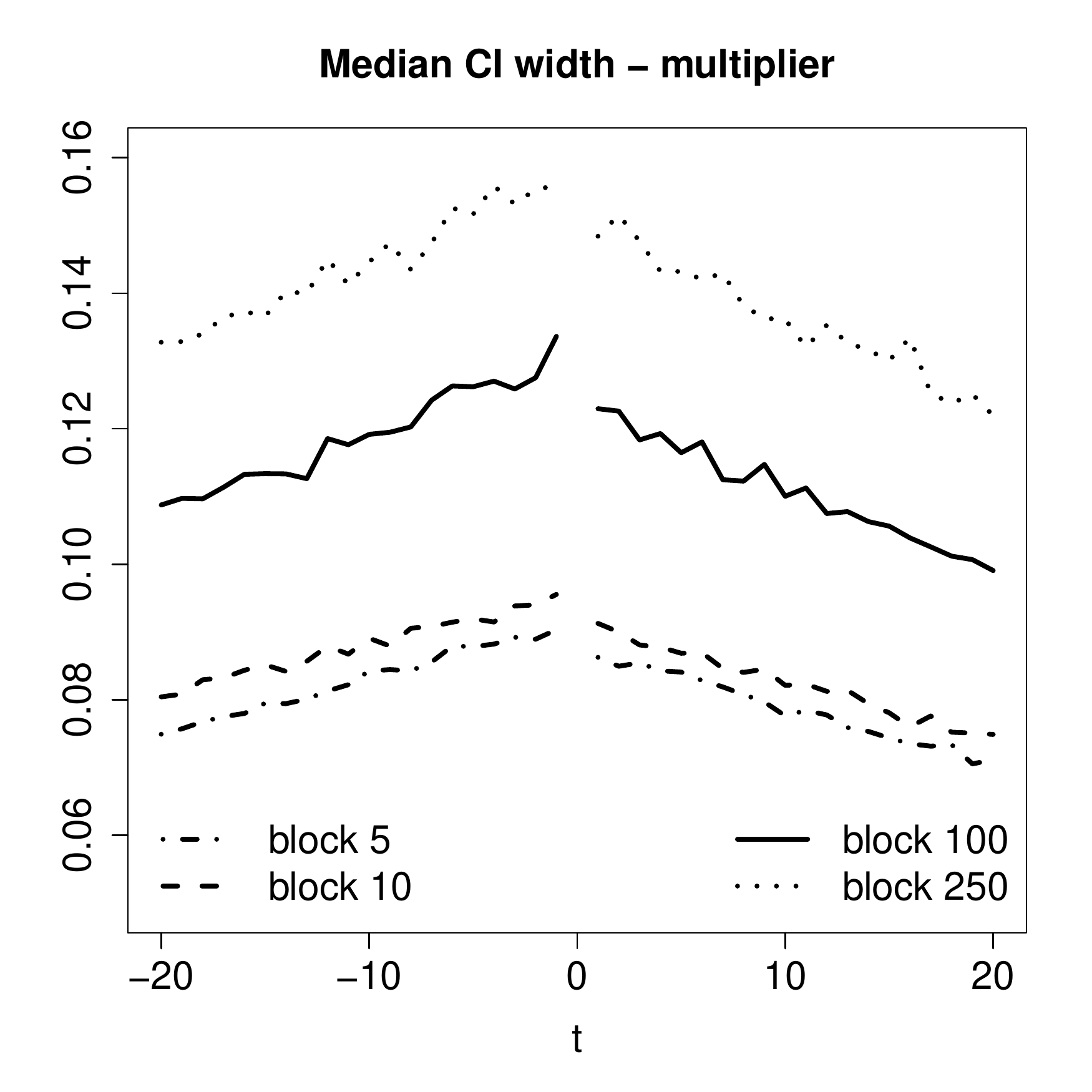}
\end{tabular}
\end{center}
\caption{\label{plot:boot:block:cov} \footnotesize
Coverage probabilities (top) and median widths (bottom) of confidence intervals for $\prob{ \lvert X_t/X_0 \rvert >1\mid |X_0|>u_n}$ based on the stationary bootstrap (left column) and the multiplier block bootstrap (right column) for different block lengths. The dash--dotted, dashed, solid, and dotted lines represent (mean) block lengths $5$, $10$, $100$, and $250$ respectively. The plots correspond to the backward estimator and the GARCH$(1,1)$ model with thresholds set at the $95\%$ empirical quantile. The horizontal black lines in the top plots are the $0.95$ reference lines.}
\end{figure}

In all cases, the multiplier block bootstrap produces a better coverage probability than the stationary bootstrap. Moreover, the backward estimator is more stable than the forward one, at least for $x=1$, and this translates into higher stability of the  bootstrapped confidence intervals. The effect is especially visible for higher thresholds, e.g., at the $98\%$ quantile, leaving insufficiently many pairs of exceedances for accurate inference. Finally, rescaled confidence intervals \eqref{eq:rescalebootconf} based on lower thresholds can have a much better coverage than confidence intervals based on higher thresholds.

 In addition, in Figure~\ref{plot:boot:block:cov} we show coverage probabilities and median confidence intervals widths for different block sizes. The multiplier block bootstrap is more robust to the choice of block length than the stationary bootstrap. In contrast to the stationary bootstrap, the multiplier block bootstrap produces confidence intervals whose coverage probabilities are fairly stable across different lags for a given block length.

 It is important not to set the block length too low since it can lead to poor coverage probabilities, especially for higher lags. On the other hand, too large a block length can result in confidence intervals that are too wide.

\section{Application}
\label{sec:application}
We first consider daily log-returns on the S\&P500 stock market index  between 1990-01-01 and 2010-01-01 taken from Yahoo Finance\footnote{\url{http://finance.yahoo.com/}}. In Figure~\ref{plot:app1}, we plot the sample spectral tail process probabilities $\prob{ \abs{\Theta_t} > 1 \mid \Theta_0 = \pm 1 }$ and $\prob{ \pm \Theta_t > 1 \mid \Theta_0 = \pm 1 }$ based on the backward estimator with $98\%$ empirical quantile taken as a threshold and the $80\%$ pointwise confidence intervals from the multiplier bootstrap scheme rescaled via the $95\%$ quantile as threshold as in \eqref{eq:rescalebootconf}. The estimated index of regular variation is $\hat{\alpha}=3.17$.

The left-hand plots correspond to conditioning on a positive extreme at the current time instant, whereas the right-hand plots correspond to conditioning on a negative shock. The former plots indicate much weaker serial extremal  dependence than the latter ones: negative shocks are more persistent than positive ones. This is indicated by the lower bounds of the confidence intervals being above the horizontal lines which correspond to probabilities under independence. In particular, the pattern of negative extremes followed by positive ones is clearly visible; see the right-hand plot on the second row. The above mentioned characteristics are shared by other stock's daily returns which were tested but not reported here.

\begin{figure}[htp]
\begin{center}
\begin{tabular}{cc}
\includegraphics[width=0.4\textwidth]{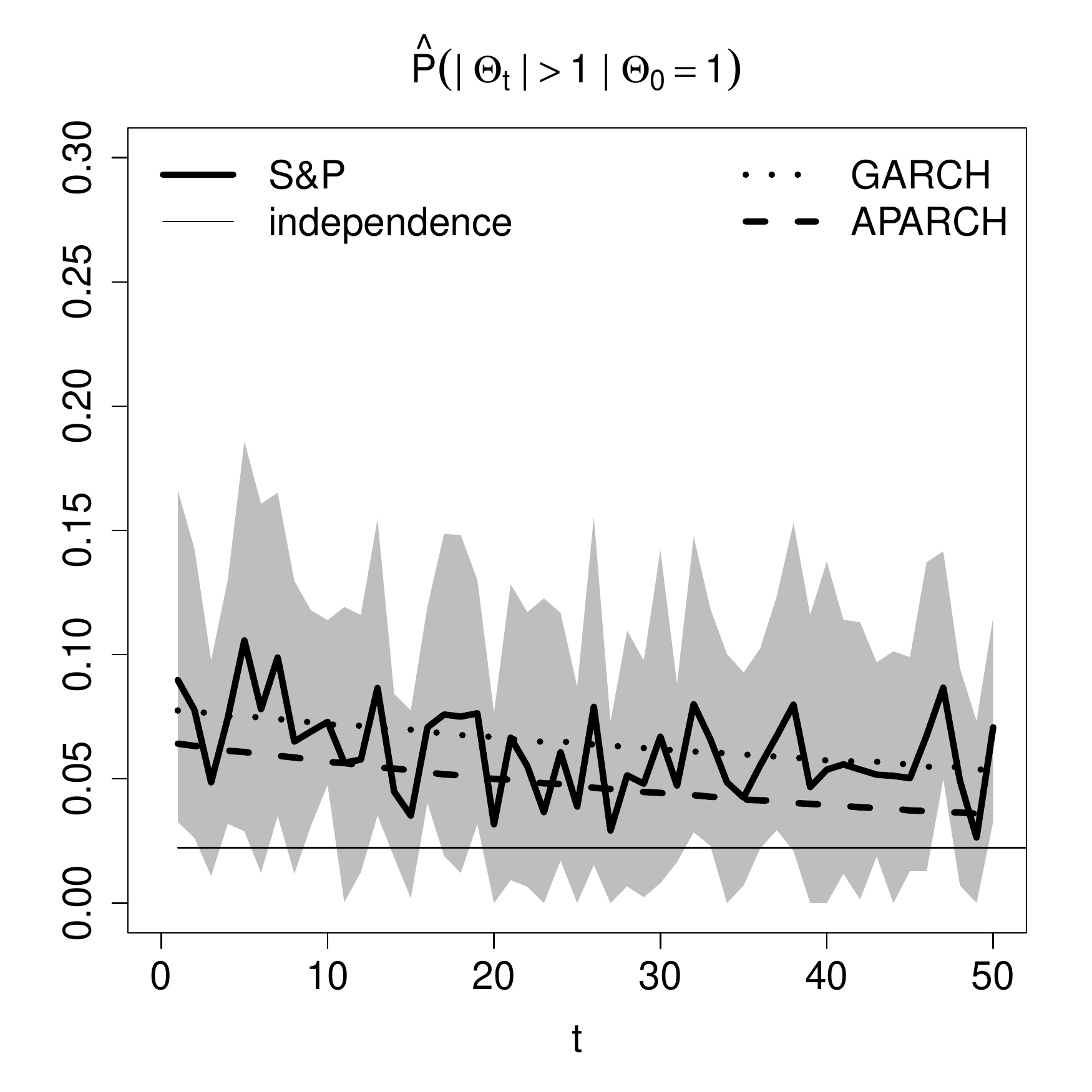}
&
\includegraphics[width=0.4\textwidth]{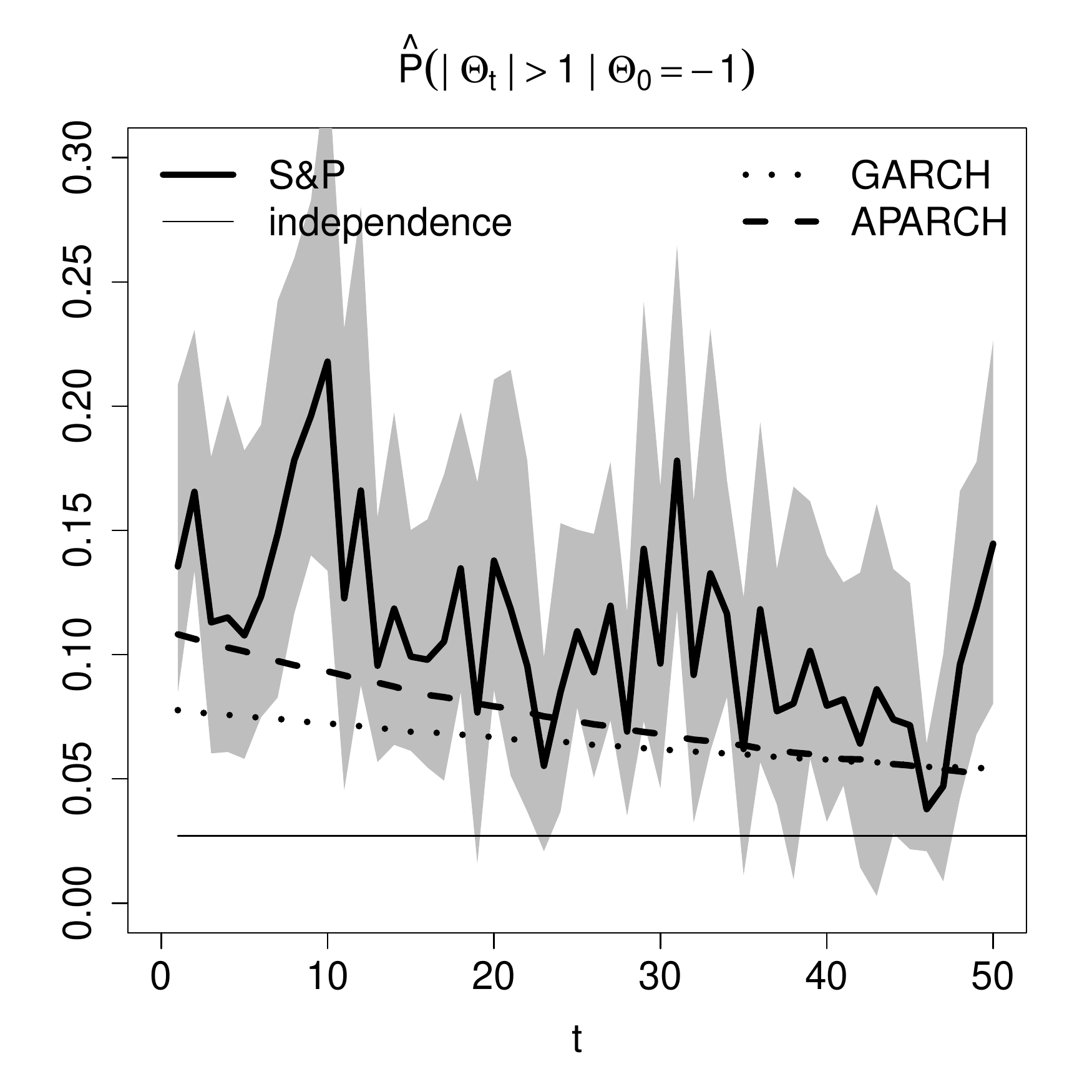}\\
\includegraphics[width=0.4\textwidth]{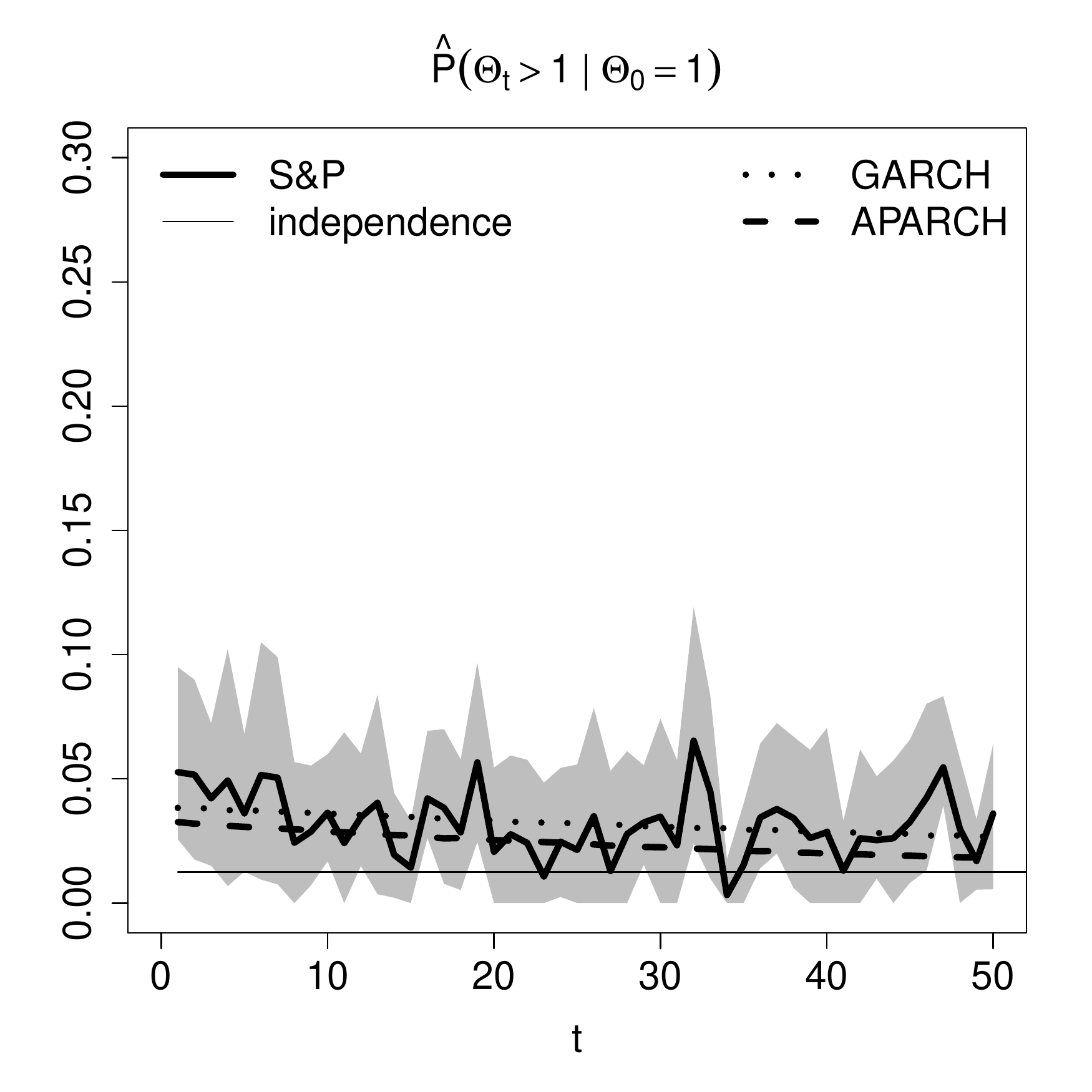}
&
\includegraphics[width=0.4\textwidth]{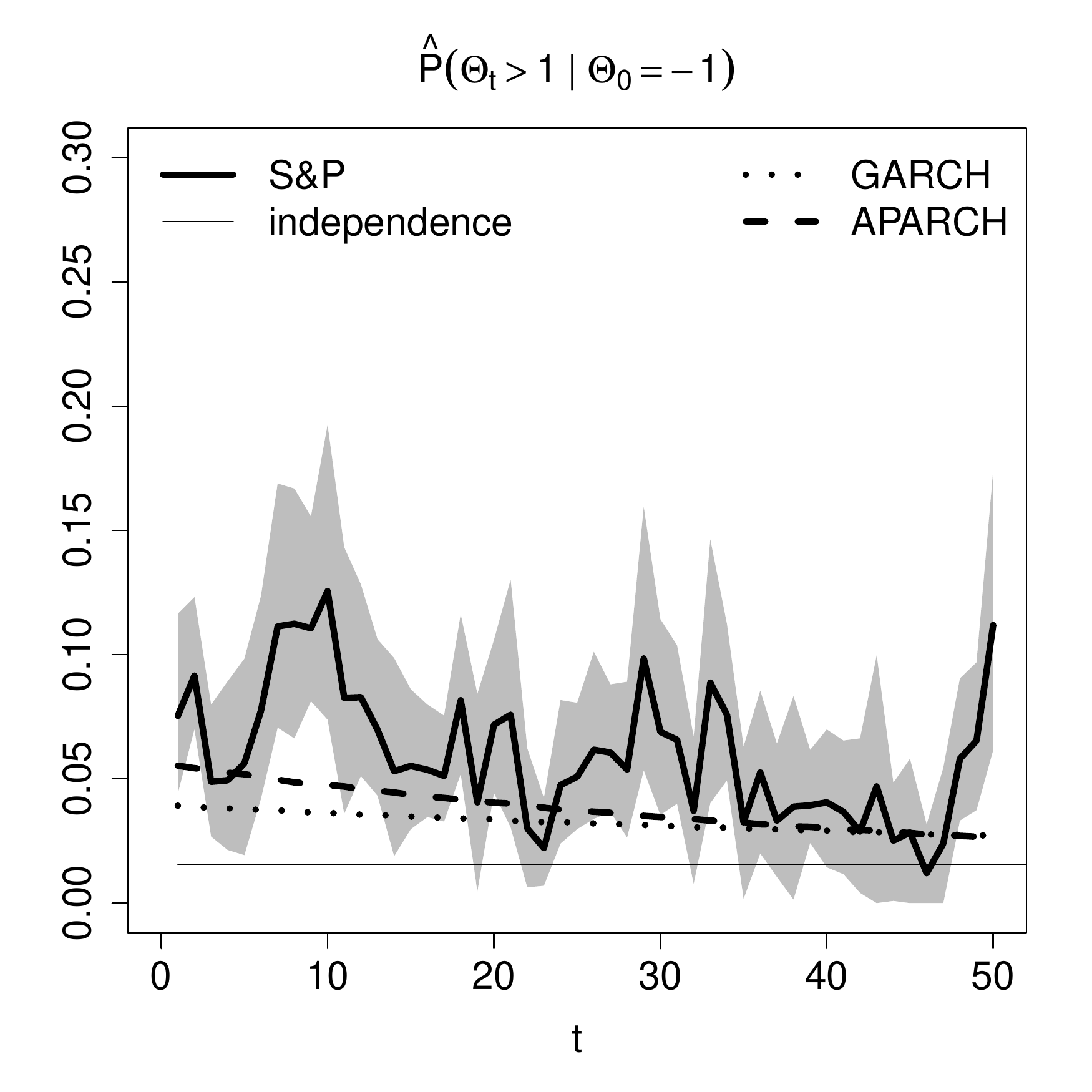}\\
\includegraphics[width=0.4\textwidth]{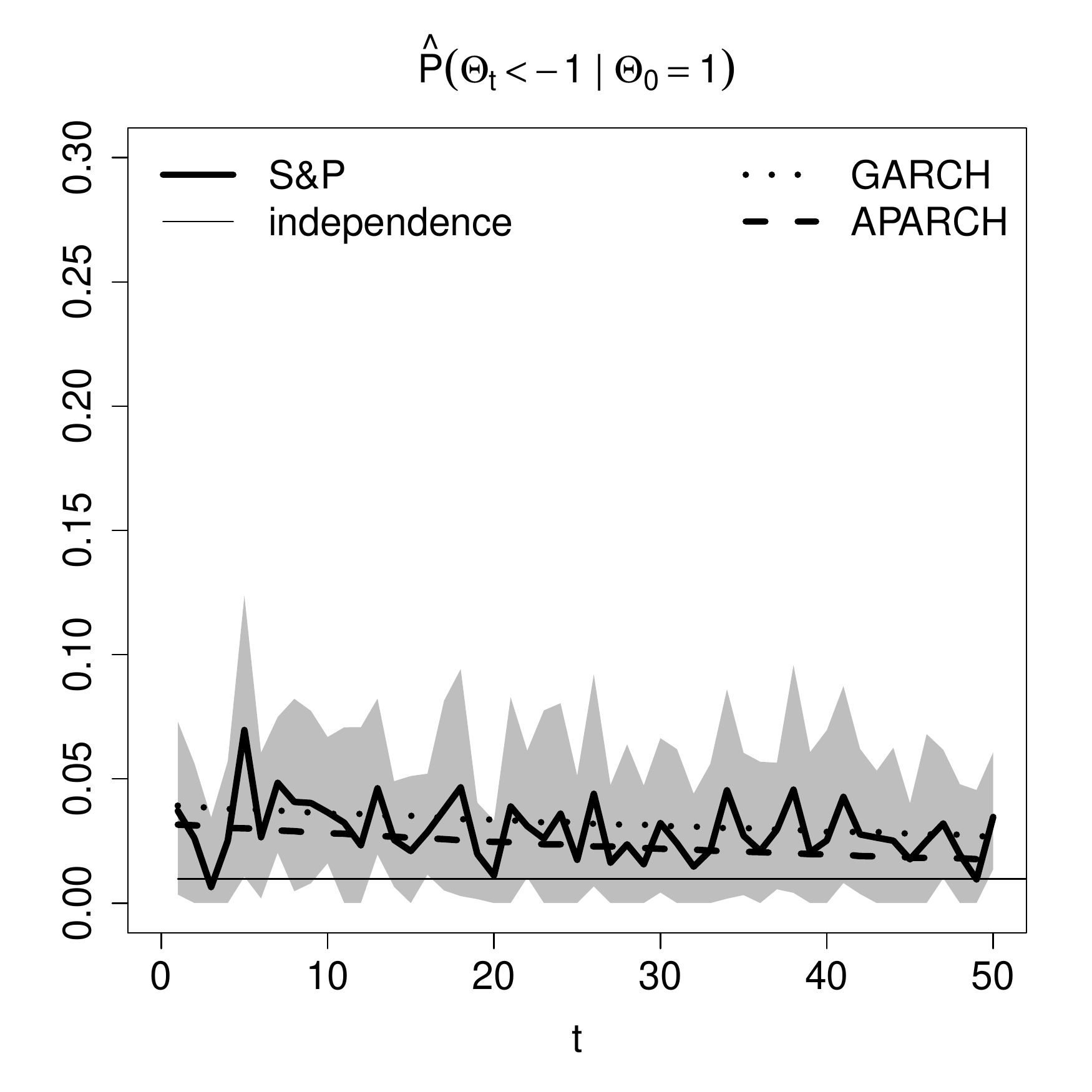}
&
\includegraphics[width=0.4\textwidth]{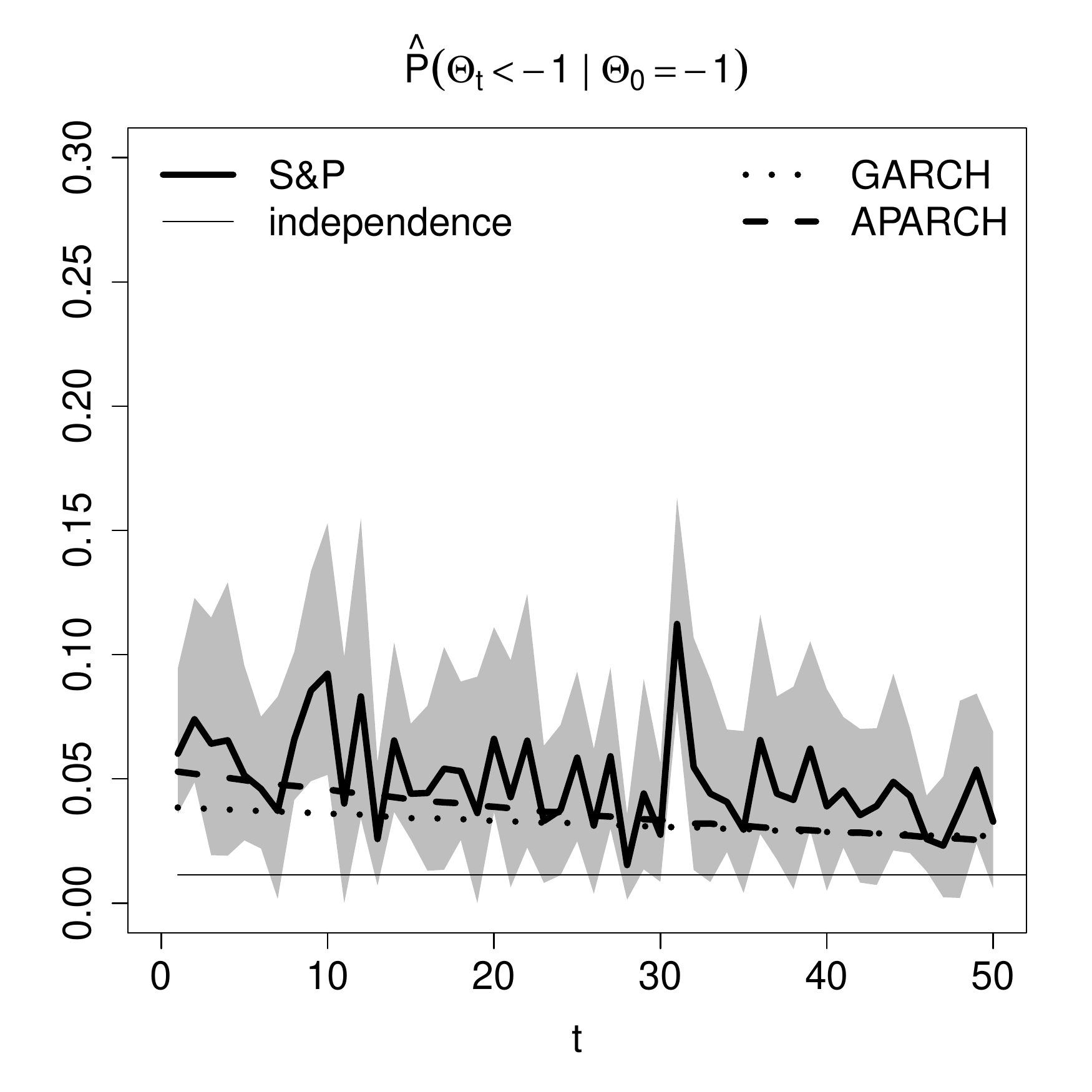}
\end{tabular}
\end{center}
\caption{\label{plot:app1} \footnotesize Sample spectral tail process probabilities (solid black bold line) for the S\&P500 daily log-returns based on the backward estimator  and the pre-asymptotic spectral tail process probabilities of the fitted GARCH(1,1) (dotted line) and APARCH(1,1) (dashed line) models. The gray area corresponds to the $80\%$ pointwise confidence intervals for the pre-asymptotic spectral tail probabilities based on the multiplier bootstrap with $1\,000$ replications. 
The top, middle and bottom rows concern the conditional probabilities that $|\Theta_t| > 1$, $\Theta_t > 1$ and $\Theta_t < -1$, respectively, given that $\Theta_0=1$ (left column) and $\Theta_0=-1$ (right column).
The horizontal line corresponds to the pre-asymptotic spectral tail probabilities under independence.}
\end{figure}

Consider two widely used financial models of the type $X_t=\sigma_t Z_t$: first, the GARCH$(1,1)$ process, where
\[
  \sigma_t^2=\omega + \alpha_1 X_{t-1}^2 + \beta_1 \sigma_{t-1}^2,
\]
and second, the APARCH(1,1) process \citep{Ding199383} with
\[
  \sigma_t^\delta
  =
  \omega + \alpha_1 \left(\abs{X_{t-1}} -\gamma _1 X_{t-1}\right)^{\delta} + \beta_1 \sigma_{t-1}^{\delta}.
\]
Both models allow for volatility clustering in the limit. Additionally, the APARCH model  captures asymmetry in the volatility of returns. That is, volatility tends to increase more when returns are negative, as compared to positive returns of the same magnitude if $\gamma_1>0$. The asymmetric response of volatility to positive and negative shocks is well known in the finance literature as the \emph{leverage effect} of the stock market returns \citep{Black1976}.

We fit those two models to daily log-returns of the S\&P500 index. We use the \textsf{garchFit} function from the \textsf{fGarch} library available in \textsf{R}, the function being based on maximum likelihood estimation \citep{fGarch}.
The innovations, $Z_t$, are assumed to be standard normally distributed. The fitted parameters are given in the top part of Table~\ref{tab:fit}.

\begin{table}
\begin{center}
\begin{tabular}{ l l  c  c  c  c  c  }
\toprule
  & & $\omega$ & $\alpha_1$  & $\beta_1$ & $\delta$ & $\gamma_1$  \\ \midrule
S\&P500 & GARCH & $7 \times 10^{-7}$  & $0.062$ & $0.932$ & - & -  \\
 & & $(2 \times 10^{-7})$ & $(0.006)$  & $(0.007)$  &  &  \\[1ex] 
 & APARCH & $5 \times 10^{-5}$ & $0.056$ & $0.937$ & $1.227$ & $0.874$  \\
  & & $(1 \times 10^{-5})$ & $(0.008)$  & $(0.006)$  & $(0.131)$ &  $(0.118)$\\
  \midrule
P\&G & GARCH & $9 \times 10^{-7}$  & $0.04$ & $0.957$ & - & -  \\
 & & $(2 \times 10^{-7})$ & $(0.004)$  & $(0.004)$  &  &  \\[1ex] 
 & APARCH & $17 \times 10^{-5}$ & $0.056$ & $0.951$ & $0.938$ & $0.608$  \\
   & & $(3 \times 10^{-5})$ & $(0.004)$  & $(0.004)$  & $(0.112)$ &  $(0.074)$\\
\bottomrule
\end{tabular}
\end{center}
\caption{\label{tab:fit}\footnotesize Parameters of the models fitted to daily log-returns of the S\&P500 index (top) and the P\&G stock price (bottom). Standard errors in parentheses.}
\end{table}

In Figure~\ref{plot:app1} we plot the pre-asymptotic spectral tail process probabilities based on the forward estimator for the fitted GARCH and APARCH models, together with the sample spectral tail process probabilities for S\&P500 daily log-returns estimated by the backward estimator. The pre-asymptotic values corresponding to the fitted models were calculated numerically via $10\,000$ Monte Carlo simulations with time series of length $10\,000$. Clearly, the APARCH model captures the asymmetry which the GARCH model cannot.

As a second example, we study daily log-returns on the P\&G stock price between 1990-01-01 and 2010-01-01. The tail index is estimated at $\hat{\alpha}=3.3$. We fit the GARCH(1,1) and APARCH(1,1) models to the time series and show the estimated parameters in the bottom part of Table~\ref{tab:fit}.

In Figure~\ref{plot:app3} we plot the sample spectral tail process probabilities based on the daily log-returns themselves and on the residuals of the fitted GARCH(1,1) and APARCH(1,1) models  obtained by the backward estimator. The top-right plot indicates that there is significant serial extremal dependence in the P\&G daily log-returns triggered by the negative shocks. Due to high asymmetry in volatility, this feature is still present in the residuals of the fitted GARCH model whereas it is better removed by the APARCH filter.

\begin{figure}[htp]
\begin{center}
\begin{tabular}{cc}
\includegraphics[width=0.4\textwidth]{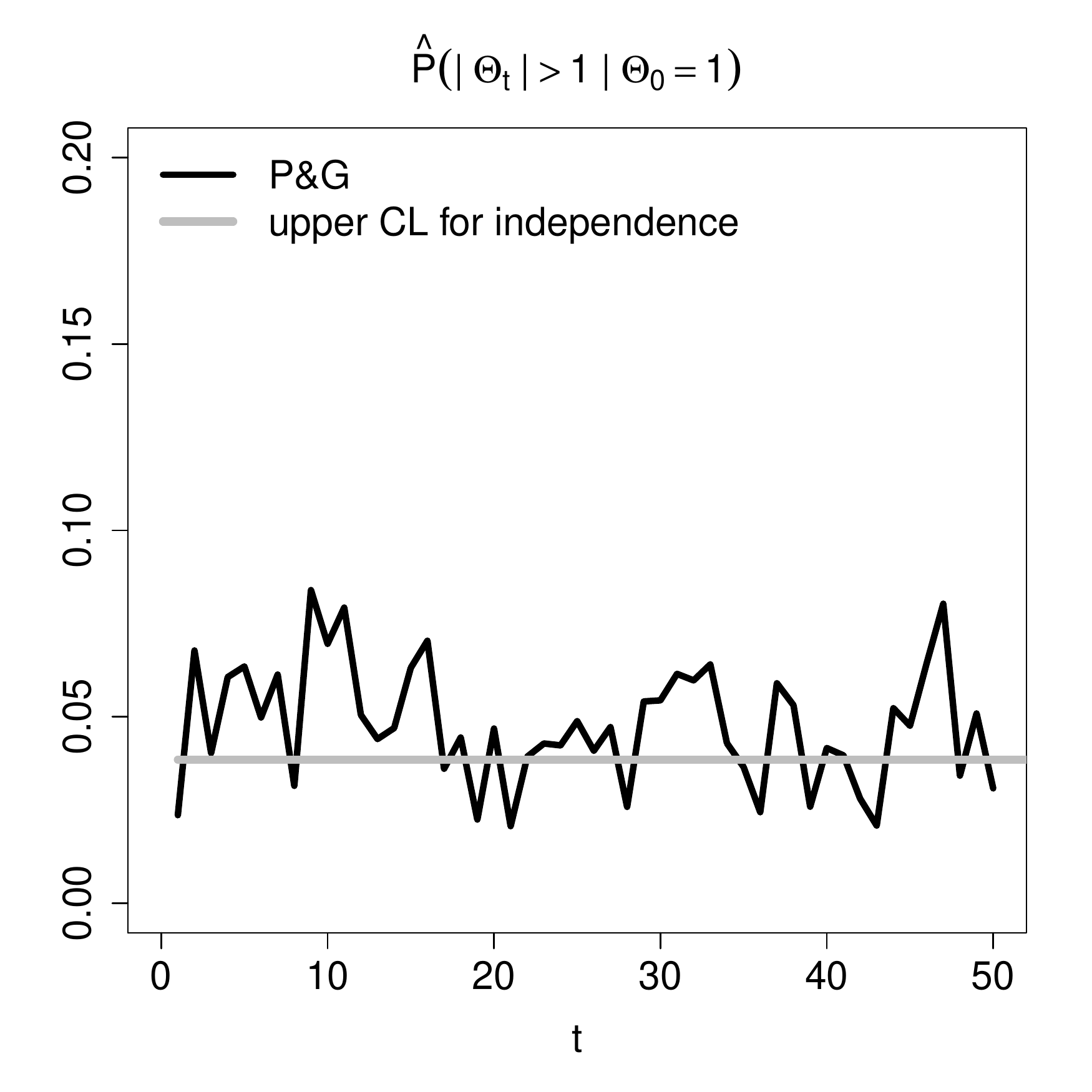}
&
\includegraphics[width=0.4\textwidth]{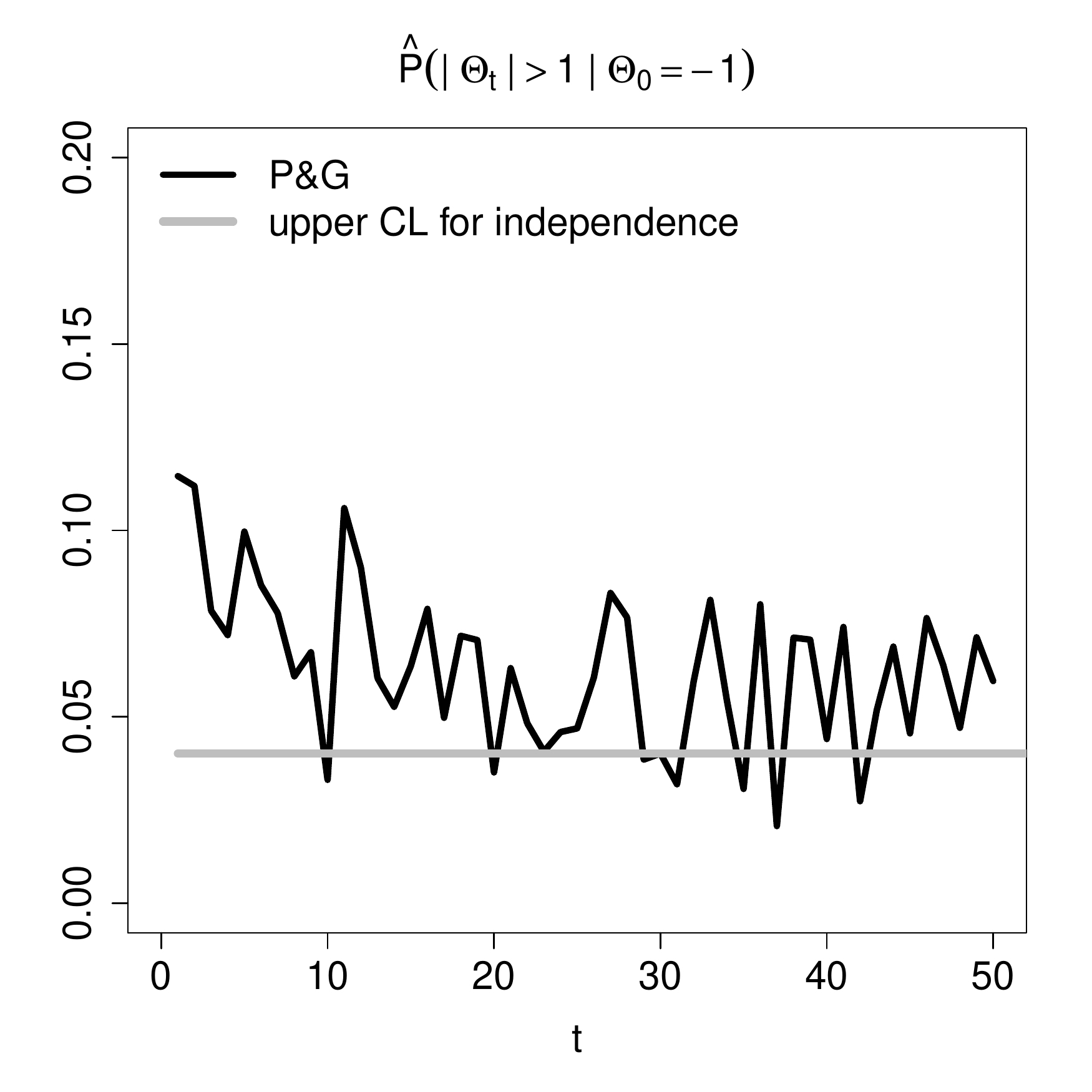}\\
\includegraphics[width=0.4\textwidth]{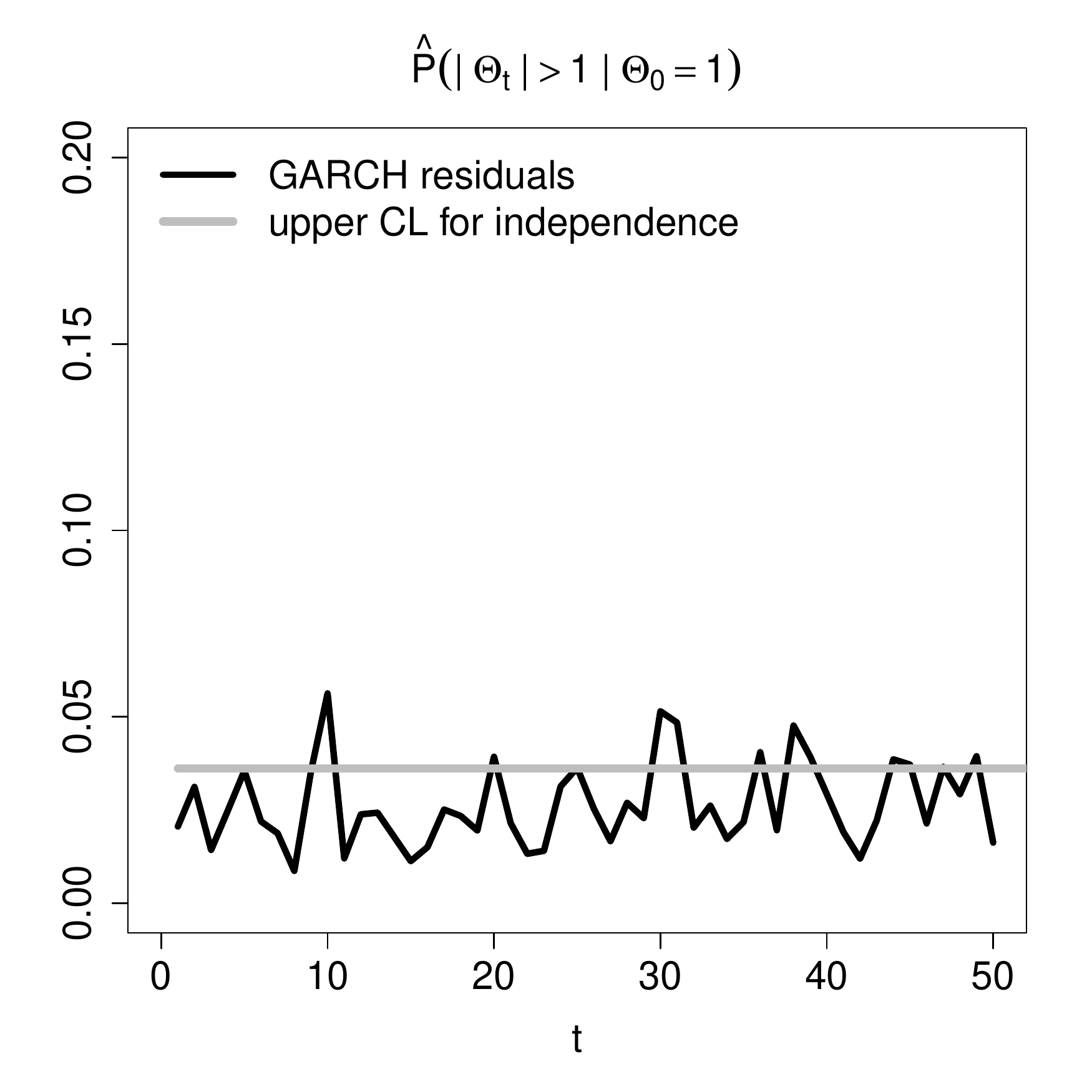}
&
\includegraphics[width=0.4\textwidth]{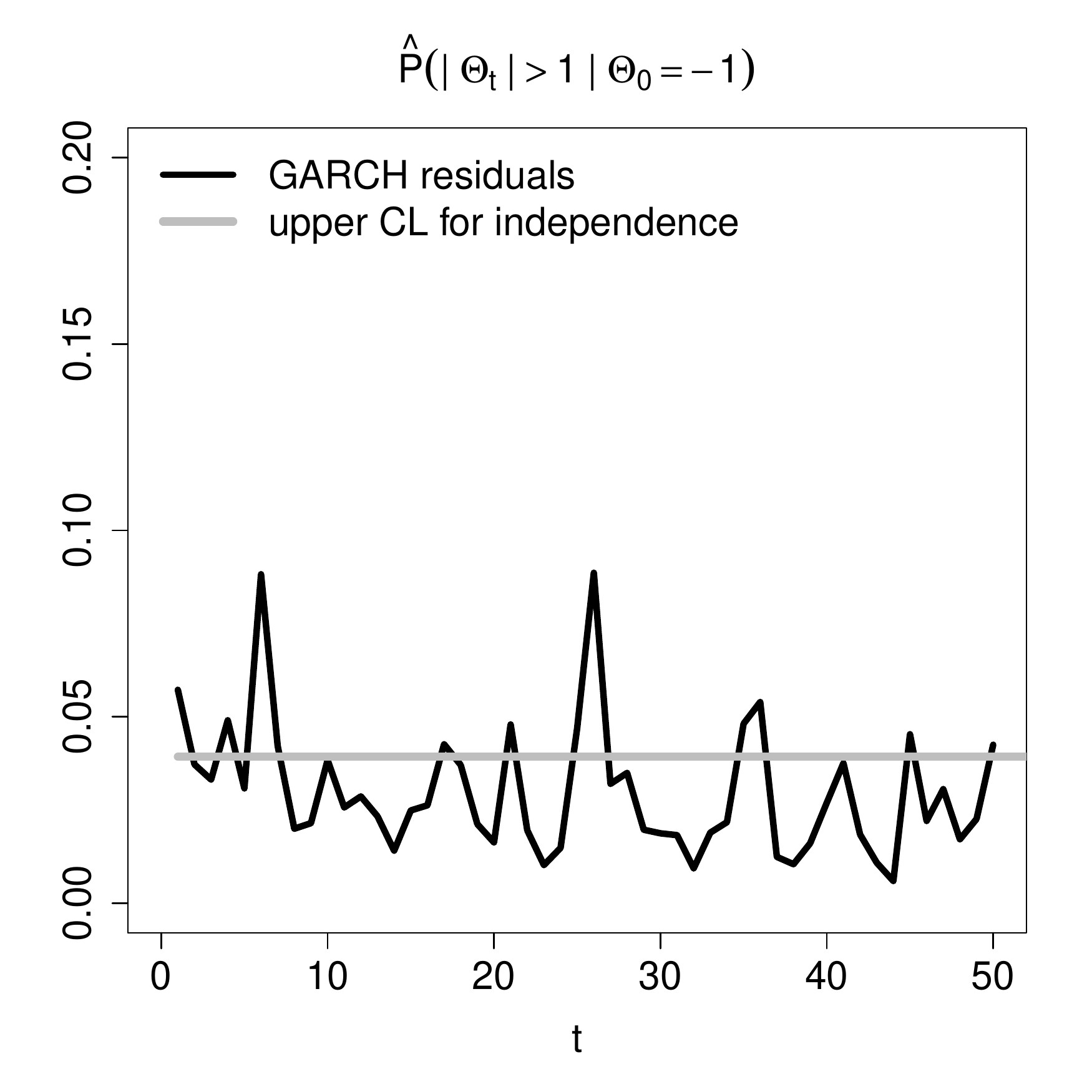}\\
\includegraphics[width=0.4\textwidth]{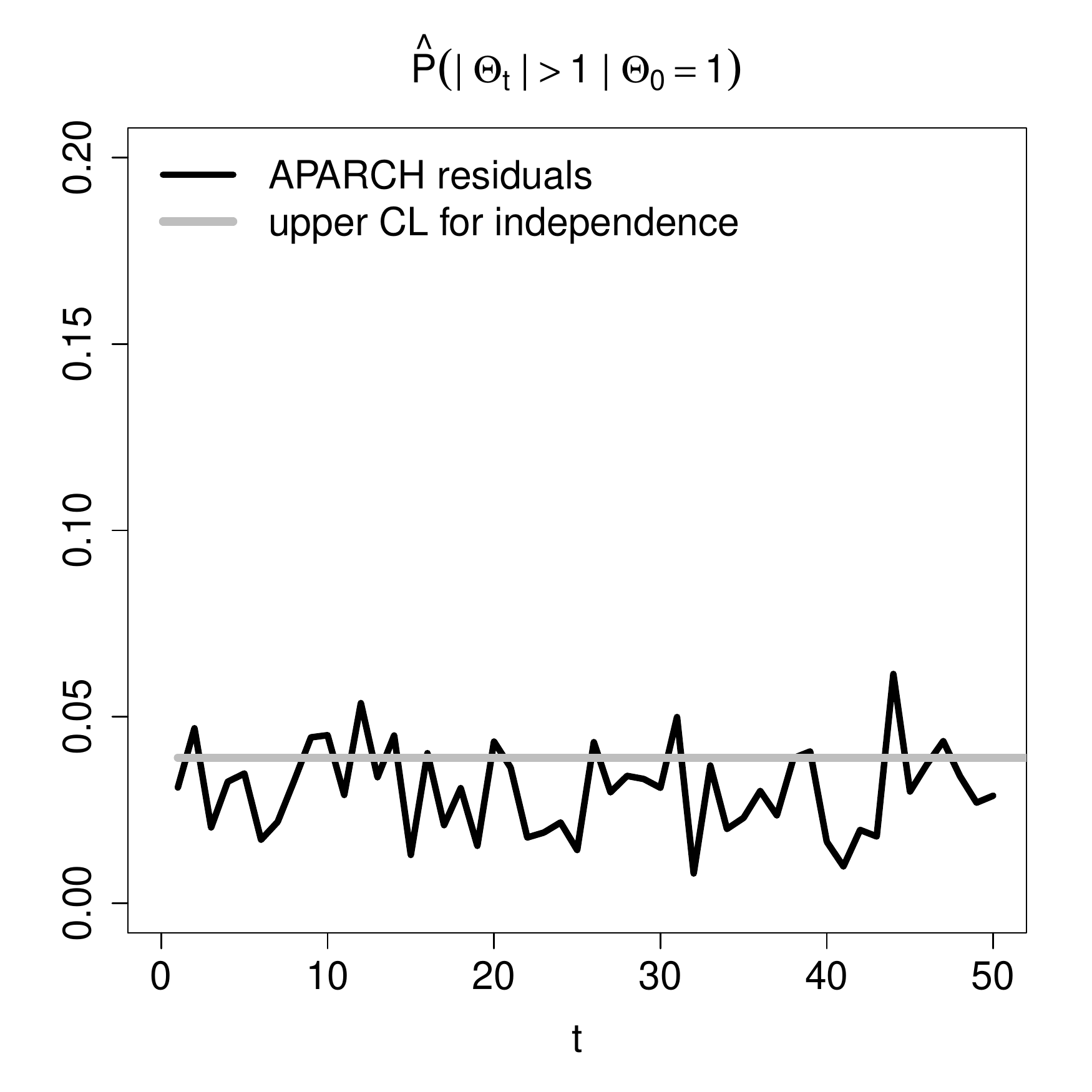}
&
\includegraphics[width=0.4\textwidth]{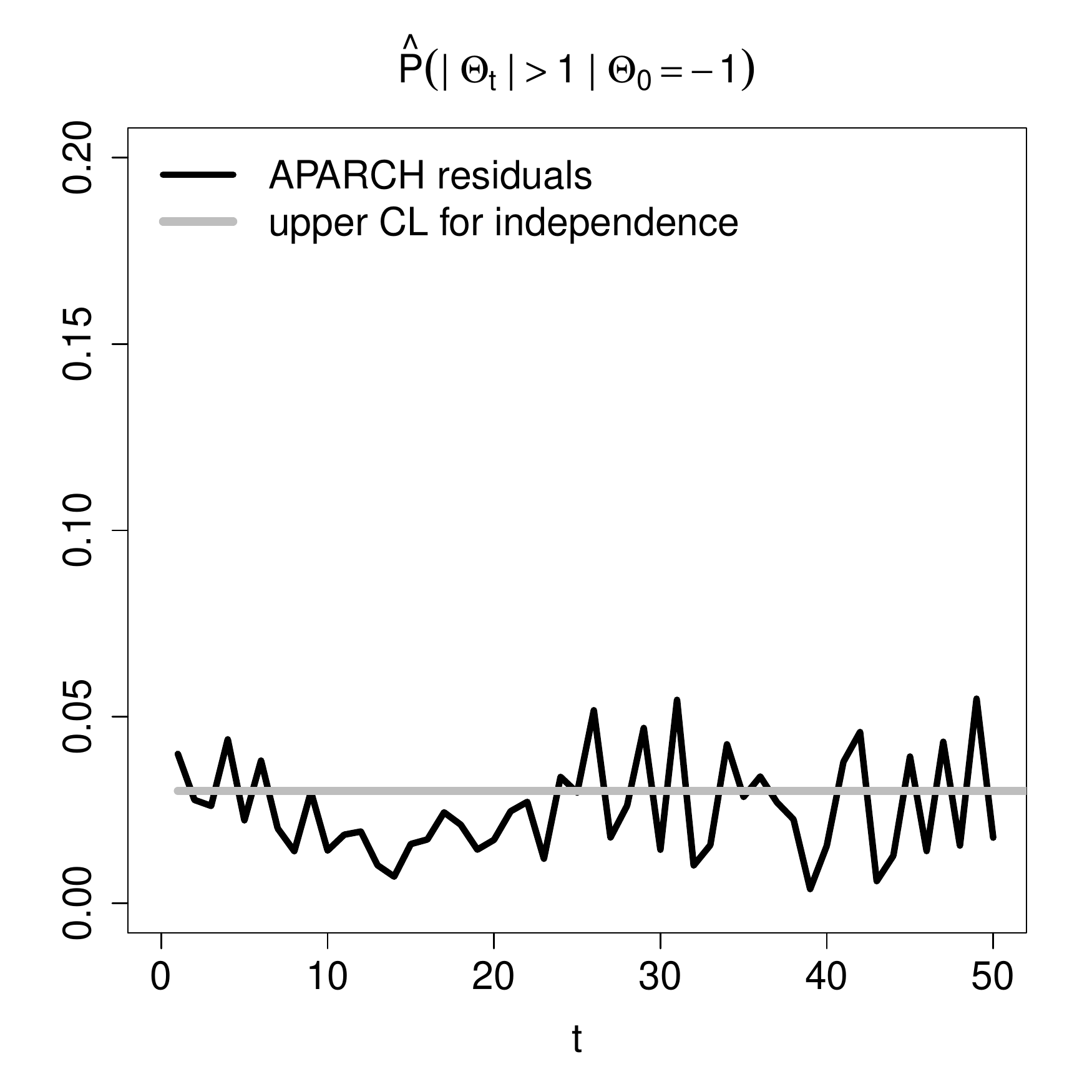}
\end{tabular}
\end{center}
\caption{\label{plot:app3} \footnotesize Sample spectral tail process (black line) for P\&G daily log-returns (top), GARCH(1,1) residuals (middle), and APARCH(1,1) residuals (bottom) based on the backward estimator. Plots in the first column represent conditioning on a positive shock whereas in the second column  one conditions on a negative shock. The horizontal gray lines correspond to the empirical $80\%$ quantile of the backward estimator under independence obtained from $10\,000$ simulations.}
\end{figure} 

\section{Proofs}
\label{sec:appendix}
\begin{proof}[Proof of Lemma \ref{lem:tc:T}]
To prove \eqref{eq:Tt:tc}, apply the time-change formula \eqref{eq:timechange} with $s=t=0$, $i=-h$, and $f(y_0)=\1(y_0\leq x) - \1(0\leq x)$ to see that
\[
  \prob{\Theta_h \leq x}
  - \1(0 \leq x)
  =
  \expec{\abs{\Theta_{-h}}^{\alpha}\1(\Theta_0/\abs{\Theta_{-h}}\leq x)}
  -
  \1(0 \leq x ) \expec{\abs{\Theta_{-h}}^{\alpha}}.
\]
For $x\geq 0$ in \eqref{eq:At:tc}, apply the time-change formula \eqref{eq:timechange} with $s=-h$, $t=0$, $i=-h$ and $f(y_{-h},\dots,y_0)=\1(y_0> x,\;y_{-h}=1)$ to get
\[\prob{\Theta_h > x,\; \Theta_0=1}=\expec{|\Theta_{-h}|^\alpha\,\1(\Theta_0/|\Theta_{-h}|>x,\;\Theta_{-h}>0)}
 = \expec{    \Theta_{-h}^{\alpha} \,
    \1(1/\Theta_{-h} > x,\;\Theta_0=1)},
\]
whereas for $x < 0$, take $f(y_{-h},\dots,y_0)=\1(y_0 \leq x,\;y_{-h}=1)$ to obtain
\[\prob{\Theta_h \leq x,\; \Theta_0=1}=\expec{
    \Theta_{-h}^{\alpha} \,
    \1(-1/\Theta_{-h} \leq x,\;\Theta_{-h}>0,\;\Theta_0=-1)}.
\]
Similarly, in \eqref{eq:Bt:tc} choose $f(y_{-h},\dots,y_0)=\1(y_0> x,\;y_{-h}=-1)$ and $f(y_{-h},\dots,y_0)=\1(y_0 \leq x,\;y_{-h}=-1)$ for $x\geq 0$ and $x < 0$, respectively.
\end{proof}

Next we turn to the asymptotic normality of the forward and backward estimators. Recall the definition of $X_{n,i}$ in \eqref{eq:Xni}. Consider the empirical process
\begin{equation}
\label{eq:tZn}
  \tilde{Z}_n(\psi) := (nv_n)^{-1/2} \sumIN \bigl( \psi(X_{n,i})-\expec{\psi(X_{n,i})} \bigr),
\end{equation}
where $\psi$ is  one of the following functions:
\begin{align}
\phi_0\left(y_{-\tilde{t}},\ldots,y_0,\ldots,y_{\tilde{t}}\right)&=\log^+(y_0), \nonumber\\
\phi_1\left(y_{-\tilde{t}},\ldots,y_0,\ldots,y_{\tilde{t}}\right)&=\1( y_0>1), \nonumber\\
\phi_{2,x}^t\left(y_{-\tilde{t}},\ldots,y_0,\ldots,y_{\tilde{t}}\right)&=\1(y_t/y_0 >x,\; y_0>1), \nonumber\\
\phi_{3,x}^t\left(y_{-\tilde{t}},\ldots,y_0,\ldots,y_{\tilde{t}}\right)
&=\left(y_{-t}/y_0\right)^\alpha\1(y_0/y_{-t} >x,  \;y_0>1)  \label{eq:functions}
\end{align}
for $\abs{t} \in \{1,\ldots,\tilde t\}$ and $x\ge 0$. The asymptotic behavior of $\tilde Z_n$ can be derived from more general results by \cite{drees2010limit}.

\begin{proposition}
\label{prop:procconv}
Let $(X_t)_{t\in\ZZ}$ be a non-negative, stationary, regularly varying time series with tail process $(Y_t)_{t \in \ZZ}$. Assume that conditions \ref{con:A}, \ref{con:B} and \ref{con:C'}  are fulfilled for some $x_0\ge 0$. Then, for all $y_0\in [x_0,\infty)\cap(0,\infty)$, the sequence of processes
\begin{equation*}
  \bigg(
    \tilde Z_n(\phi_0), \,
    \tilde Z_n(\phi_1), \,
    \left[
      (\tilde Z_n(\phi_{2,x}^t))_{x \in [x_0,\infty)}, \,
      (\tilde Z_n(\phi_{3,y}^t))_{y \in [y_0,\infty)}
    \right]_{\abs{t} \in \{1,\ldots,\tilde{t}\}}
  \bigg)
\end{equation*}
converges weakly to a centered Gaussian process $Z$ with covariance function given by
\begin{equation}
\label{eq:cov_emp_pr}
 \cov{Z(\psi_1),Z(\psi_2)}
 =
 \sum_{j=-\infty}^\infty
 \expec{
  \psi_1(Y_{-\tilde t},\ldots,Y_{\tilde t}) \,
  \psi_2(Y_{j-\tilde t},\ldots,Y_{j+\tilde t})
 }
 =:
 c(\psi_1,\psi_2)
\end{equation}
for all $\psi_1,\psi_2\in\left\{ \phi_0,\phi_1,\phi_{2,x}^t,\phi_{3,y}^t\mid x\ge x_0, \, y\ge y_0, \, |t|\in\{1,\ldots,\tilde t\}\right\}$.
\end{proposition}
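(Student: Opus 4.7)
The plan is to invoke the general functional central limit theorem for empirical processes of tail array sums from \cite{drees2010limit}, applied to the function class
\[
  \Phi := \{\phi_0,\phi_1\}\cup\bigl\{\phi_{2,x}^t,\phi_{3,y}^t : x\in[x_0,\infty),\, y\in[y_0,\infty),\, |t|\in\{1,\ldots,\tilde t\}\bigr\},
\]
and then to identify the covariance function of the limiting Gaussian process. The general theorem delivers joint weak convergence of $(\tilde Z_n(\psi))_{\psi\in\Phi}$ to a centered Gaussian process provided that (i) the base process $(X_{n,i})$ satisfies the mixing and block conditions embodied in \ref{con:B} together with an anti-clustering-type control of cluster sizes, (ii) the class $\Phi$ admits an envelope with suitable $L^2$-behavior and a bracketing number bound controlling the continuity modulus, and (iii) the variance--covariance functionals converge pointwise in a summable way. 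Conditions \ref{con:B} and \ref{con:C'} are exactly of the form needed for (i) and (iii), and the bulk of the proof consists of verifying the entropy side (ii).

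First I would handle the envelope and moment side. The functions $\phi_1$ and $\phi_{2,x}^t$ are indicators dominated by $\1(y_0>1)$, which has first moment $v_n$ per index. The function $\phi_0(y)=\log^+(y_0)$ satisfies $\expec{\phi_0(X_{n,0})^k}=O(v_n)$ for $k=1,2$ by regular variation and integration by parts. Finally $\phi_{3,y}^t\le y_0^{-\alpha}\1(y_0>1)$ on its support since $y_0/y_{-t}>y\ge y_0>0$, so it is also bounded up to a factor of $v_n$ in $L^2$. The lagged second-moment bounds required at lags $k\in\{1,\ldots,r_n\}$ reduce, for products involving $\phi_0$ or $\phi_{3,y}^t$, to exactly the joint log-moments controlled by $s_n(k)$ in \eqref{eq:snkdef2} and by the uniform $(1+\delta)$-integrability \eqref{eq:psibdd}; for products of indicators they are handled by the coarser anti-clustering bound $\sum_{k=1}^{r_n}\prob{X_k>u_n\mid X_0>u_n}=O(1)$ which is implied by \ref{con:C'}.

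Next I would verify the bracketing/equicontinuity. The families $\{\phi_{2,x}^t\}_x$ and $\{\phi_{3,y}^t\}_y$ are monotone in their threshold parameter, so one builds $\epsilon$-brackets from a finite grid $x_0\le x_1<\cdots<x_N$ chosen so that each consecutive pair has pre-asymptotic mass at most $\epsilon$. Condition~\ref{con:A} (continuity of $F^{(\Theta_t)}$ on $[x_0,\infty)$) together with the tail process convergence~\eqref{eq:defcdf} guarantees that such a grid can be chosen uniformly in $n$ for all sufficiently large $n$; a completely analogous argument works for $\phi_{3,y}^t$ since the weight $(y_{-t}/y_0)^\alpha$ is uniformly bounded on the support and the law of $(1/\Theta_{-t})\1(\Theta_{-t}>0)$ is continuous on $[y_0,\infty)$ by the time-change formula~\eqref{eq:timechange} and \ref{con:A}. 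The finite-dimensional marginal convergence follows from the classical big-block/small-block decomposition and Lyapunov's CLT under \ref{con:B} and the moment estimates above.

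Finally I would identify the covariance. Stationarity and the standard sum-of-covariances formula give
\[
  \cov{\tilde Z_n(\psi_1),\tilde Z_n(\psi_2)}
  = v_n^{-1}\!\!\sum_{|k|<n}\!\Bigl(1-\tfrac{|k|}{n}\Bigr)\,\cov{\psi_1(X_{n,0}),\psi_2(X_{n,k})}.
\]
For each fixed $k$, the definition of the tail process together with the fact that every $\psi\in\Phi$ is supported on $\{y_0>1\}$ yields
\[
  v_n^{-1}\expec{\psi_1(X_{n,0})\psi_2(X_{n,k})}
  \longrightarrow
  \expec{\psi_1(Y_{-\tilde t},\ldots,Y_{\tilde t})\,\psi_2(Y_{k-\tilde t},\ldots,Y_{k+\tilde t})},
\]
while the centering cross-terms contribute $v_n^{-1}\expec{\psi_1(X_{n,0})}\expec{\psi_2(X_{n,k})}=O(v_n)=o(1)$. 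The summability in $k$ needed to exchange sum and limit is exactly what \ref{con:C'} provides, and one concludes that $\cov{\tilde Z_n(\psi_1),\tilde Z_n(\psi_2)}\to c(\psi_1,\psi_2)$ as in~\eqref{eq:cov_emp_pr}. The main obstacle I anticipate is the uniformity over the unbounded index sets in $\{\phi_{2,x}^t\}_x$ and $\{\phi_{3,y}^t\}_y$: bracketing must be carried out with respect to the intrinsic semimetric induced by the variance, using the joint log-moment bound \eqref{eq:psibdd}, and one must verify that the resulting bracketing entropy integral converges, which is where conditions~\ref{con:A} and~\ref{con:C'} must interact precisely.
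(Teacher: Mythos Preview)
Your proposal is correct and follows essentially the same route as the paper, which also invokes the Drees--Rootz\'en (2010) framework and simply refers to Proposition~B.1 of \cite{DreesSegersWarchol2015} for the equicontinuity verification (carried out separately for each lag~$t$). One small point: the paper identifies the limiting covariance as the limit of the \emph{block} covariance $(r_nv_n)^{-1}\operatorname{E}\bigl[\sum_{i=1}^{r_n}\psi_1(X_{n,i})\sum_{j=1}^{r_n}\psi_2(X_{n,j})\bigr]$, which is what the Drees--Rootz\'en theorem directly delivers and which involves only lags $|k|<r_n$ (so Pratt's lemma together with \ref{con:C'} suffices), whereas your full-sample formula with $|k|<n$ would need an additional appeal to the $\beta$-mixing in \ref{con:B} to dispose of the lags beyond~$r_n$.
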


The weak convergence statements in Proposition~\ref{prop:procconv} hold in the space of bounded functions on $\big\{ \phi_0,\phi_1,\phi_{2,x}^t,\phi_{3,y}^t\mid x\ge x_0, y\ge y_0,$ $|t|\in\{1,\ldots,\tilde t\}\big\}$ equipped with the supremum norm; see \citet[Section~1.5]{vdVW96} for details.

\begin{proof}[Proof of Proposition~\ref{prop:procconv}]
  One can argue similarly as in the proof of Proposition~B.1 of \cite{DreesSegersWarchol2015}, because the asymptotic equicontinuity of the process can be established for each $t$ separately. Note that the discussion in \cite{drees2016correct} shows that part~(ii) of condition~(B) of  \cite{DreesSegersWarchol2015} is not needed.

  By stationarity, the covariance of $Z(\psi_1)$ and $Z(\psi_2)$ is obtained as the limit of
   \begin{equation*}
  \frac 1{r_nv_n} \expec{\sum_{i=1}^{r_n}\psi_1(X_{n,i})\sum_{j=1}^{r_n}\psi_2(X_{n,j})}
   =  \frac 1{v_n} \sum_{k=-r_n+1}^{r_n-1} \Big(1-\frac{|k|}{r_n}\Big) \expec{\psi_1(X_{n,0}) \, \psi_2(X_{n,k})}.
  \end{equation*}
  This sum can be shown to converge to $c(\psi_1,\psi_2)$ using Pratt's lemma and Condition~\ref{con:C'}, as in \cite{DreesSegersWarchol2015}.
\end{proof}

\begin{remark}
  The covariances can be expressed in terms of the spectral tail process. For example,
  \begin{eqnarray*}
    c(\phi_{3,x}^t,\phi_0) & = & \sum_{j=-\infty}^\infty
    \expec{\Theta_{-t}^\alpha \, \1(1/\Theta_{-t}>x) \log^+(Y_0\Theta_j)}\\
    & = & \sum_{j=-\infty}^\infty
    \expec{\Theta_{-t}^\alpha \, \1(1/\Theta_{-t}>x) \big(\Theta_j^\alpha\wedge 1\big) \big(\log^+\Theta_j+\alpha^{-1}\big)}.
  \end{eqnarray*}
  Here we have used that $Y_0$ is independent of $(\Theta_s)_{s \in \ZZ}$ with distribution  $\prob{Y_0 > y} = y^{-\alpha}$ for $y \ge 1$.
\end{remark}

Theorem  \ref{theo:asnormest} and Corollary \ref{corol:probcenter} can now be proved in the same way as Theorem~4.5 in \cite{DreesSegersWarchol2015}. We omit the details, which can also be inferred from the more involved discussion of the bootstrap estimator below.

\cite{drees2015bootstrap}  has shown that under roughly the same conditions as used by \cite{drees2010limit}, conditionally on the data, the following bootstrap version of the empirical process $\tilde Z_n$ has the same asymptotic behavior as $\tilde Z_n$:
\begin{equation}
  \label{eq:Znxi}
  Z_{n,\xi}(\psi)
  :=
  (nv_n)^{-1/2} \sum_{j=1}^{m_n}  \xi_j \sum_{i\in I_j} \bigl( \psi(X_{n,i})-\expec{\psi(X_{n,i})} \bigr),
\end{equation}
with $I_j:=\{(j-1)r_n+1,\ldots,jr_n\}$ and $m_n:=\lfloor n/r_n \rfloor$.
In what follows, the symbol $\operatorname{E}_\xi$ denotes the expectation w.r.t.\ $\xi=(\xi_j)_{j\in\NN}$, i.e., the expectation conditionally on $(X_{n,i})_{1\le i\le n}$. Moreover, let $BL_1$ denote the set of all functions $g:\RR^{4\tilde{t}+2}\to \RR$ such that $\sup_{z\in\RR^{4\tilde{t}+2}}|g(z)|\le 1$ and $\abs{g(z_1)-g(z_2)} \le \lVert z_1-z_2 \rVert$ for all $z_1,z_2\in\RR^{4\tilde{t}+2}$.

\begin{proposition}
\label{prop:bootprocconv}
  Suppose that $(X_t)_{t\in\ZZ}$ is a non-negative, stationary, regularly varying time series and that the conditions \ref{con:A}, \ref{con:B} and \ref{con:C'}  are fulfilled for some $x_0\ge 0$. Then, for all $x\ge x_0$ and all $y_0\in [x_0,\infty)\cap(0,\infty)$, one has
  \begin{align*}
\Big( & Z_{n,\xi}(\phi_0),Z_{n,\xi}(\phi_1), \left[Z_{n,\xi}(\phi_{2,x}^t), Z_{n,\xi}(\phi_{3,y}^t)\right]_{|t|\in\{1,\ldots,\tilde{t}\}}\Big)
\\
 & \dto
\left( Z(\phi_0),Z(\phi_1), \left[Z(\phi_{2,x}^t), Z(\phi_{3,y}^t)\right]_{|t|\in\{1,\ldots,\tilde{t}\}}\right)
\end{align*}
with $Z$ as defined in Theorem \ref{theo:asnormest}.
Moreover,
\begin{align} \label{eq:bootconv}
 \sup_{g\in BL_1} \bigg|\operatorname{E}_\xi & g\left( Z_{n,\xi}(\phi_0),Z_{n,\xi}(\phi_1), \left[Z_{n,\xi}(\phi_{2,x}^t), Z_{n,\xi}(\phi_{3,y}^t)\right]_{|t|\in\{1,\ldots,\tilde{t}\}}\right) \nonumber \\
 & - \operatorname{E}g\left( Z(\phi_0),Z(\phi_1), \left[Z(\phi_{2,x}^t), Z(\phi_{3,y}^t)\right]_{|t|\in\{1,\ldots,\tilde{t}\}}\right) \bigg|\to 0
\end{align}
in probability.
\end{proposition}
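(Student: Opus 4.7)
The plan is to apply the bootstrap-consistency theory for empirical processes of generalized tail array sums developed in \cite{drees2015bootstrap}, specialized to the finite collection of summand functions appearing in $Z_{n,\xi}$. Since Proposition~\ref{prop:procconv} already identifies the joint limit law of the corresponding non-bootstrapped vector $\tilde Z_n$, the task reduces to showing that, conditionally on $(X_{n,i})_{1\le i\le n}$ and in probability, the random vector
\begin{equation*}
  \Big(Z_{n,\xi}(\phi_0),Z_{n,\xi}(\phi_1),\bigl[Z_{n,\xi}(\phi_{2,x}^t),Z_{n,\xi}(\phi_{3,y}^t)\bigr]_{|t|\in\{1,\ldots,\tilde t\}}\Big)
\end{equation*}
converges to a centered Gaussian vector with the covariance matrix $\bigl(c(\psi_1,\psi_2)\bigr)$ from Proposition~\ref{prop:procconv}. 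The BL-metric statement~\eqref{eq:bootconv} is then a standard reformulation of in-probability conditional weak convergence on a finite-dimensional target space, as in \citet[Section~1.5]{vdVW96}.

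The first step is a conditional Lindeberg CLT. For any summand function $\psi$ from the finite family indexing $Z_{n,\xi}$, set
\begin{equation*}
  W_{n,j}(\psi):=\sum_{i\in I_j}\bigl(\psi(X_{n,i})-\expec{\psi(X_{n,i})}\bigr),
  \quad\text{so that}\quad
  Z_{n,\xi}(\psi)=(nv_n)^{-1/2}\sum_{j=1}^{m_n}\xi_j\,W_{n,j}(\psi).
\end{equation*}
Conditionally on $(X_{n,i})$, the block contributions $\xi_j W_{n,j}$ are independent and centered. The Lindeberg condition is implied by the moment control in~\ref{con:C'} together with the block-size restriction $r_n=o((nv_n)^{1/2})$ from~\ref{con:B}, since these facts combined yield $(nv_n)^{-1}\max_j W_{n,j}(\psi)^2\to 0$ in probability for each $\psi$.

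The second, and more delicate, step is the identification of the conditional covariance limits: for each pair $(\psi_1,\psi_2)$ one must show that
\begin{equation*}
  (nv_n)^{-1}\sum_{j=1}^{m_n}W_{n,j}(\psi_1)\,W_{n,j}(\psi_2)\longrightarrow c(\psi_1,\psi_2)
\end{equation*}
in probability. Expanding each block product along lags and using stationarity reduces the expectation of this quantity to the series already evaluated in the proof of Proposition~\ref{prop:procconv}; concentration around the mean is then obtained via the $\beta$-mixing hypothesis in~\ref{con:B}, in close analogy to the corresponding argument in \cite{drees2015bootstrap}. The main obstacle is the class generated by $\phi_{3,y}^t$, whose summands carry the unbounded weight $(y_{-t}/y_0)^\alpha$; to handle it one needs a square-integrable envelope on the cluster scale and uniform control on the variance of the weighted block sums, which is precisely what the $(1+\delta)$-moment assumption~\eqref{eq:psibdd} of~\ref{con:C'} is designed to furnish.

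Once the Lindeberg condition and the conditional covariance limits are in hand, the conditional CLT yields the first assertion of the proposition. The BL-metric convergence~\eqref{eq:bootconv} then follows at once from the continuity and tightness of the Gaussian limit, paralleling the analogous bootstrap-consistency proof for the extremogram in~\cite{drees2015bootstrap}; the only substantive additions here are the accommodation of the unbounded-weight class $\{\phi_{3,y}^t\}$ and of the logarithmic summand $\phi_0$, both covered by the two halves of Condition~\ref{con:C'}.
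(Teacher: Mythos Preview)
Your sketch is workable but takes a considerably longer route than the paper. The paper's proof is essentially a one-liner: it observes that Theorem~2.1 of \cite{drees2015bootstrap} delivers both conclusions of the proposition as soon as its hypotheses are met, and those hypotheses are exactly the ones already verified in the proof of Proposition~\ref{prop:procconv} (via Proposition~B.1 of \cite{DreesSegersWarchol2015}). In other words, the work of checking the Lindeberg condition, establishing the conditional covariance limits, and deducing the bounded-Lipschitz convergence has already been packaged into that theorem, and the same set of assumptions drives both the unconditional limit in Proposition~\ref{prop:procconv} and the conditional bootstrap limit here. Your plan instead re-derives that machinery from scratch for the finite family of functions at hand; this would succeed, but it duplicates an argument that can simply be cited.

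One small correction to your outline: the functions $\phi_{3,y}^t$ are not unbounded. On the indicator set $\{y_0/y_{-t}>y\}$ with $y\ge y_0>0$ (here $y_0$ is the fixed lower endpoint from the statement), the weight satisfies $(y_{-t}/y_0)^\alpha<y^{-\alpha}\le y_0^{-\alpha}$, so the whole family is uniformly bounded. The only genuinely unbounded summand is $\phi_0=\log^+$, and it is for this function that the moment bounds in Condition~\ref{con:C'} are really needed. This does not break your argument, but it misidentifies where the difficulty lies.
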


Proposition~\ref{prop:bootprocconv} follows immediately from \citet[Theorem~2.1]{drees2015bootstrap}, because in the proof of Proposition~\ref{prop:procconv} (cf.\ the proof of Proposition~B.1 of \cite{DreesSegersWarchol2015}) it is shown that the assumptions of \citet[Theorem~2.1]{drees2015bootstrap} follow from the conditions of Proposition~\ref{prop:bootprocconv}.

Now we are ready to prove the consistency of the multiplier block bootstrap procedure.

\begin{proof}[Proof of Theorem \ref{theo:bootstrap}]
We only prove consistency of the bootstrap version of the backward estimator, as the proof for the forward  estimator is considerably simpler.
  For simplicity, we assume that $n=m_nr_n$. Let
  \[
    \alpha_n
    := \frac 1{\expec{\log^+(X_0/u_n)\mid X_0>u_n}}
    = \frac{v_n}{\expec{ \phi_0(X_{n,1}) }}
    .
  \]
  Recall $\tilde{Z}_n$ and $Z_{n,\xi}$ in \eqref{eq:tZn} and \eqref{eq:Znxi} respectively, recall $I_j = \{(j-1)r_n+1,\ldots,jr_n\}$, and recall $\hat{\alpha}_n$ and $\hat{\alpha}_n^*$ in \eqref{Hill} and \eqref{eq:Hill:boot}, respectively. Then
  \begin{eqnarray*}
    (nv_n)^{1/2} (\hat\alpha_n^*-\hat\alpha_n)
     & = & (nv_n)^{1/2} \frac{\sum_{j=1}^{m_n}\xi_j \sum_{i\in I_j} \1(X_i>u_n)-\hat\alpha_n \sum_{j=1}^{m_n}\xi_j \sum_{i\in I_j} \log^+(X_i/u_n)}{\sum_{j=1}^{m_n}(1+\xi_j) \sum_{i\in I_j} \log^+(X_i/u_n)}\\
     & = & \frac{Z_{n,\xi}(\phi_1)-\hat\alpha_n Z_{n,\xi}(\phi_0)+(r_nv_n)^{1/2} m_n^{-1/2}\sum_{j=1}^{m_n}\xi_j(1-\hat\alpha_n/\alpha_n)}{
     \alpha_n^{-1} (1+m_n^{-1}\sum_{j=1}^{m_n}\xi_j)+(nv_n)^{-1/2}\{\tilde Z_n(\phi_0)+Z_{n,\xi}(\phi_0)\}}.
  \end{eqnarray*}
 Since $m_n^{-1/2}\sum_{j=1}^{m_n}\xi_j$ and $\tilde Z_n$ are stochastically bounded and $\hat\alpha_n\to\alpha$ in probability, the assumptions $nv_n \to \infty$, $r_nv_n\to 0$, and $\alpha_n\to \alpha$, $m_n^{-1/2}$  and Proposition \ref{prop:bootprocconv} ensure that
  \begin{equation} \label{eq:boothillerr}
  (nv_n)^{1/2} (\hat\alpha_n^*-\hat\alpha_n)=\alpha Z_{n,\xi}(\phi_1)-\alpha^2 Z_{n,\xi}(\phi_0)+o_P(1),
   \end{equation}
   which converges weakly to $\alpha Z(\phi_1)-\alpha^2 Z(\phi_0)$. Moreover, conditionally on the data, it converges to the same limit weakly in probability in the sense of \eqref{eq:bootconv}.

 Next, recall $\CDFbTe{y}$ and $\hat{F}^{*({\mathrm{b}},\Theta_t)}_{n}(y)$ in \eqref{backward_Tte} and \eqref{eq:defbootbackward}, respectively. For $y > 0$, we have
  \[
    \left( 1 - \CDFbTe{y} \right)
    \sum_{i=1}^n \1(X_i > u_n)
    =
    \sum_{i=1}^n (X_{i-t}/X_i)^{\hat{\alpha}_n} \, \1( X_i/X_{i-t} > y, \, X_i > u_n ).
  \]
  It follows that
  \begin{multline*}
    \CDFbTe{y}-\hat{F}^{*({\mathrm{b}},\Theta_t)}_{n}(y) \\
    \shoveleft{
    = \Bigg[ \sum_{i=1}^n \left(\left(\frac{X_{i-t}}{X_i}\right)^{\hat\alpha_n^*}
       -\left(\frac{X_{i-t}}{X_i}\right)^{\hat\alpha_n}\right)
     \1(X_i/X_{i-t}>y,X_i>u_n)
    }
    \\
    \shoveleft{\qquad
    + \sum_{j=1}^{m_n}\xi_j\sum_{i\in I_j}\left(\frac{X_{i-t}}{X_i}\right)^{\hat\alpha_n^*}\1(X_i/X_{i-t}>y,X_i>u_n)
    }
    \\
    - \{1-\CDFbTe{y}\}\sum_{j=1}^{m_n}\xi_j\sum_{i\in I_j}\1(X_i>u_n)\Bigg]
    \bigg/
     \left[\sum_{j=1}^{m_n}(1+\xi_j)\sum_{i\in I_j}\1(X_i>u_n)\right].
  \end{multline*}
  For any pair $(\underline{\alpha}, \overline{\alpha})$ such that $0<\underline{\alpha}<\alpha<\overline{\alpha}$, there exists a constant $0<C<\infty$ such that for all $\tilde \alpha\in[\underline{\alpha},\overline{\alpha}]$ and, for suitable constants $\lambda=\lambda(\tilde\alpha)\in(0,1)$, we have, on the event $\{ X_i / X_{i-t} > y_0 \}$,
  \begin{eqnarray*}
    \lefteqn{\left|\left(\frac{X_{i-t}}{X_i}\right)^{\tilde\alpha}-\left(\frac{X_{i-t}}{X_i}\right)^{\alpha}
     - \left(\frac{X_{i-t}}{X_i}\right)^{\alpha} \log\left(\frac{X_{i-t}}{X_i}\right)(\tilde\alpha-\alpha)\right|}\\
     & = & \frac 12 \left(\frac{X_{i-t}}{X_i}\right)^{\alpha+\lambda(\tilde\alpha-\alpha)} \log^2\left(\frac{X_{i-t}}{X_i}\right)(\tilde\alpha-\alpha)^2
     \le  C (\tilde\alpha-\alpha)^2.
  \end{eqnarray*}
  Hence
  \begin{eqnarray}
    \lefteqn{\CDFbTe{y}-\hat{F}^{*({\mathrm{b}},\Theta_t)}_{n}(y)} \nonumber\\
     & = &
     \Bigg[ \sum_{i=1}^n \left(\frac{X_{i-t}}{X_i}\right)^{\alpha} \log\left(\frac{X_{i-t}}{X_i}\right)(\hat\alpha_n^*-\hat\alpha_n) \, \1(X_i/X_{i-t}>y, \, X_i>u_n)
      \nonumber\\
     & &
     +
     \sum_{j=1}^{m_n} \xi_j
     \sum_{i\in I_j}
     \left\{
	\left(\frac{X_{i-t}}{X_i}\right)^{\alpha}
	+
	\left(\frac{X_{i-t}}{X_i}\right)^{\alpha}
	\log \left(\frac{X_{i-t}}{X_i}\right)
	(\hat\alpha_n^*-\alpha)
     \right\}
     \1(X_i/X_{i-t}>y, \, X_i>u_n)  \nonumber\\
     & & - ( 1-\CDFbTe{y} ) \sum_{j=1}^{m_n}\xi_j\sum_{i\in I_j}\1(X_i>u_n)+R_n(y)\Bigg]\bigg/
     \left[\sum_{j=1}^{m_n}(1+\xi_j)\sum_{i\in I_j}\1(X_i>u_n)\right]
       \label{eq:booterr}
  \end{eqnarray}
  with
  \begin{eqnarray*}
   \abs{ R_n(y) }
   & \le &
   C(\hat\alpha_n^*-\hat\alpha_n)^2\sum_{i=1}^n \1(X_i/X_{i-t}>y, \, X_i>u_n) \\
   & &
   \mbox{} + C(\hat\alpha_n^*-\alpha)^2
   \sum_{j=1}^{m_n} \abs{ \xi_j }
   \sum_{i\in I_j} \1(X_i/X_{i-t}>y, \, X_i>u_n) \\
   & = &
   O_P \bigl( (nv_n)^{-1}nv_n + (nv_n)^{-1}m_nr_nv_n \bigr)
   =O_P(1), \qquad n \to \infty.
  \end{eqnarray*}

  Consider the function
  \[
    \phi_{4,x}^t\left(y_{-\tilde{t}},\ldots,y_0,\ldots,y_{\tilde{t}}\right)
    =
    \left(y_{-t}/y_0\right)^\alpha
    \log\left(y_{-t}/y_0\right)
    \1(y_0/y_{-t} >x, \, y_{-t}>0, \, y_0>1).
  \]
  One may show as in the proof of Proposition \ref{prop:procconv} that $\tilde Z_n(\phi_{4,y}^t)$ and $Z_{n,\xi}(\phi_{4,y}^t)$ both converge weakly to $Z(\phi_{4,y}^t)$. In particular, as $n \to \infty$,
  \begin{eqnarray*}
  \lefteqn{ (nv_n)^{-1} \sum_{i=1}^n \left(\frac{X_{i-t}}{X_i}\right)^{\alpha} \log\left(\frac{X_{i-t}}{X_i}\right)\1(X_i/X_{i-t}>y,X_i>u_n)}\\
  & = &
  \expec{
    \left(\frac{X_{-t}}{X_0}\right)^{\alpha}
    \log\left(\frac{X_{-t}}{X_0}\right)
    \1(X_0/X_{-t}>y)
    \,\Big|\,
    X_0>u_n
  }
  + O_P\bigl((nv_n)^{-1/2}\bigr) \\
  & \to &
  \expec{ \Theta_{-t}^\alpha \log(\Theta_{-t}) \, \1(1/\Theta_{-t}>y)} \\
  & = &
  - \expec{\log(\Theta_{t}) \, \1(\Theta_{t}>y)},
  \end{eqnarray*}
  where the last step follows from the time-change formula~\eqref{eq:timechange} applied with $f(y_0) = - \log(y_0) \, \1(y_0 > y)$ and $(-t, 0, -t)$ instead of $(s, t, i)$. Therefore
  \begin{eqnarray}
    \lefteqn{\sum_{i=1}^n \left(\frac{X_{i-t}}{X_i}\right)^{\alpha} \log\left(\frac{X_{i-t}}{X_i}\right)(\hat\alpha_n^*-\hat\alpha_n)\1(X_i/X_{i-t}>y,X_i>u_n)}
    \nonumber
    \\
    & = & -(nv_n)^{1/2}\big(\expec{\log(\Theta_{t}) \, \1(\Theta_{t}>y)}+o_P(1)\big)(nv_n)^{1/2}(\hat\alpha_n^*-\hat\alpha_n).
     \label{eq:l1}
  \end{eqnarray}
  Likewise, one can conclude that
  \begin{align*}
   & (nv_n)^{-1/2} \sum_{j=1}^{m_n}\xi_j\sum_{i\in I_j} \left(\frac{X_{i-t}}{X_i}\right)^{\alpha}\log\left(\frac{X_{i-t}}{X_i}\right)
     \1(X_i/X_{i-t}>y,X_i>u_n)\\
    & =  Z_{n,\xi}(\phi_{4,y}^t)+ (rv_n)^{1/2} m_n^{-1/2} \sum_{j=1}^{m_n}\xi_j \expec{\Big(\frac{X_{-t}}{X_0}\Big)^\alpha\log\left(\frac{X_{-t}}{X_0} \right) \1(X_0/X_{-t}>y)\,\Big|\, X_0>u_n}\\
   &  =  O_P(1).
  \end{align*}
   As a consequence,
  \begin{align}
    \lefteqn{
    \sum_{j=1}^{m_n} \xi_j
    \sum_{i\in I_j}
    \left\{
      \left(\frac{X_{i-t}}{X_i}\right)^{\alpha}
      +
      \left(\frac{X_{i-t}}{X_i}\right)^{\alpha}
      \log\left(\frac{X_{i-t}}{X_i}\right)
      (\hat\alpha_n^*-\alpha)
    \right\}
    \1(X_i/X_{i-t}>y,X_i>u_n)
    } \nonumber
    \\
    & =  (nv_n)^{1/2}Z_{n,\xi}(\phi_{3,y}^t)+\sum_{j=1}^{m_n}\xi_j r_nv_n \big(\expec{\Theta_{-t}^\alpha\1(1/\Theta_{-t}>y)}+o(1)\big) + O_P(1) \nonumber
    \\
    & =  (nv_n)^{1/2}\left(Z_{n,\xi}(\phi_{3,y}^t)+O_P \bigl( (r_nv_n)^{1/2} \bigr)+O_P\big((nv_n)^{-1/2}\big)\right).
    \hspace*{4cm}
    \label{eq:l2}
  \end{align}

  Moreover,
  we find, as $r_n v_n \to 0$ and $\sum_{j=1}^{m_n}\xi_j=O_P(m_n^{1/2})$, that
  \begin{eqnarray}
    \sum_{j=1}^{m_n}\xi_j\sum_{i\in I_j}\1(X_i>u_n)
    &=&
    (n v_n)^{1/2} \, Z_{n,\xi}(\phi_1) + r_nv_n \sum_{j=1}^{m_n}\xi_j\nonumber \\
    &=&
    (n v_n)^{1/2} \bigl( Z_{n,\xi}(\phi_1) + o_P(1) \bigr), \qquad n \to \infty. \label{eq:l3}
  \end{eqnarray}
  The denominator of \eqref{eq:booterr} equals $nv_n+O_P((nv_n)^{1/2})$. Combining \eqref{eq:booterr}--\eqref{eq:l3} and \eqref{eq:boothillerr} yields
    \begin{eqnarray*}
    \lefteqn{(nv_n)^{1/2}\big(\CDFbTe{y} - \hat{F}^{*({\mathrm{b}},\Theta_t)}_{n}(y)\big)}\\
     & = &
     -\expec{\log(\Theta_{t}) \, \1(\Theta_{t}>y)}(nv_n)^{1/2}(\hat\alpha_n^*-\hat\alpha_n)
     +
     Z_{n,\xi}(\phi_{3,y}^t)
     -
     ( 1-\CDFbTe{y} ) \, Z_{n,\xi}(\phi_1)
     + o_P(1) \\
     & = &
     Z_{n,\xi}(\phi_{3,y}^t)
     -
     \bar{F}^{(\Theta_t)}(y) \, Z_{n,\xi}(\phi_1)
     -
     \expec{\log( \Theta_{t}) \, \1(\Theta_{t}>y)}
     \left(\alpha Z_{n,\xi}(\phi_1)-\alpha^2 Z_{n,\xi}(\phi_0)\right)
     + o_P(1).
  \end{eqnarray*}
  Now the assertion is a direct consequence of Proposition \ref{prop:bootprocconv} and Theorem \ref{theo:asnormest}.
\end{proof}

\section*{Acknowledgements}

The authors wish to thank the editors and the referees for their careful reading and for various constructive comments and useful suggestions. J.\ Segers gratefully acknowledges funding by contract ``Projet d'Act\-ions de Re\-cher\-che Concert\'ees'' No.\ 12/17-045 of the ``Communaut\'e fran\c{c}aise de Belgique'' and by IAP research network Grant P7/06 of the Belgian government (Belgian Science Policy). The research of M.~Warcho\l{} was funded by a PhD grant of the ``Fonds de la Recherche Scientifique'' (F.R.S.-FNRS). H.~Drees was partially supported by DFG research grant JA 2160/1. R.~Davis was supported in part by ARO MURI grant W911NF-12-1-0385.

\small
\bibliographystyle{chicago}
\bibliography{mybib}

\begin{thebibliography}{}

\bibitem[\protect\citeauthoryear{Basrak and Segers}{Basrak and
  Segers}{2009}]{basrak:segers:2009}
Basrak, B. and J.~Segers (2009).
\newblock Regularly varying multivariate time series.
\newblock {\em Stochastic Processes and Their Applications\/}~{\em 119\/}(4),
  1055--1080.

\bibitem[\protect\citeauthoryear{Black}{Black}{1976}]{Black1976}
Black, F. (1976).
\newblock Studies of stock price volatility changes.
\newblock In {\em Proceedings of the 1976 Meetings of the American Statistical
  Association, Business and Economics Section}, pp.\  177--181.

\bibitem[\protect\citeauthoryear{Bollerslev, Todorov, and Li}{Bollerslev
  et~al.}{2013}]{bollerslev2013jump}
Bollerslev, T., V.~Todorov, and S.~Z. Li (2013).
\newblock Jump tails, extreme dependencies, and the distribution of stock
  returns.
\newblock {\em Journal of Econometrics\/}~{\em 172\/}(2), 307--324.

\bibitem[\protect\citeauthoryear{Bollerslev, Todorov, and Xu}{Bollerslev
  et~al.}{2015}]{bollerslev2015tail}
Bollerslev, T., V.~Todorov, and L.~Xu (2015).
\newblock Tail risk premia and return predictability.
\newblock {\em Journal of Financial Economics\/}~{\em 118\/}(1), 113--134.

\bibitem[\protect\citeauthoryear{Davis and Mikosch}{Davis and
  Mikosch}{2001}]{davis2001}
Davis, R.~A. and T.~Mikosch (2001).
\newblock Point process convergence of stochastic volatility processes with
  application to sample autocorrelation.
\newblock {\em Journal of Applied Probability\/}~{\em 38A}, 93--104.

\bibitem[\protect\citeauthoryear{Davis and Mikosch}{Davis and
  Mikosch}{2009}]{davis:mikosch:2009}
Davis, R.~A. and T.~Mikosch (2009).
\newblock The extremogram: a correlogram for extreme events.
\newblock {\em Bernoulli\/}~{\em 15\/}(4), 977--1009.

\bibitem[\protect\citeauthoryear{Davis, Mikosch, and Cribben}{Davis
  et~al.}{2012}]{Davis2012142}
Davis, R.~A., T.~Mikosch, and I.~Cribben (2012).
\newblock Towards estimating extremal serial dependence via the bootstrapped
  extremogram.
\newblock {\em Journal of Econometrics\/}~{\em 170\/}(1), 142--152.

\bibitem[\protect\citeauthoryear{Ding, Granger, and Engle}{Ding
  et~al.}{1993}]{Ding199383}
Ding, Z., C.~W. Granger, and R.~F. Engle (1993).
\newblock A long memory property of stock market returns and a new model.
\newblock {\em Journal of Empirical Finance\/}~{\em 1\/}(1), 83--106.

\bibitem[\protect\citeauthoryear{Drees}{Drees}{2000}]{drees2000tailquantile}
Drees, H. (2000).
\newblock Weighted approximations of tail processes for $\beta$-mixing random
  variables.
\newblock {\em The Annals of Applied Probability\/}~{\em 10\/}(4), 1274--1301.

\bibitem[\protect\citeauthoryear{Drees}{Drees}{2003}]{drees2003quantile}
Drees, H. (2003).
\newblock Extreme quantile estimation for dependent data with applications to
  finance.
\newblock {\em Bernoulli\/}~{\em 9\/}(4), 617--657.

\bibitem[\protect\citeauthoryear{Drees}{Drees}{2015}]{drees2015bootstrap}
Drees, H. (2015).
\newblock Bootstrapping empirical processes of cluster functionals with
  application to extremograms.
\newblock {\em arXiv preprint arXiv:1511.00420\/}.

\bibitem[\protect\citeauthoryear{Drees and Rootz{\'e}n}{Drees and
  Rootz{\'e}n}{2010}]{drees2010limit}
Drees, H. and H.~Rootz{\'e}n (2010).
\newblock Limit theorems for empirical processes of cluster functionals.
\newblock {\em The Annals of Statistics\/}~{\em 38\/}(4), 2145--2186.

\bibitem[\protect\citeauthoryear{Drees and Rootz{\'e}n}{Drees and
  Rootz{\'e}n}{2016}]{drees2016correct}
Drees, H. and H.~Rootz{\'e}n (2016).
\newblock Correction note to ``limit theorems for empirical processes of
  cluster functionals''.
\newblock {\em The Annals of Statistics\/}~(to appear).

\bibitem[\protect\citeauthoryear{Drees, Segers, and Warcho{\l}}{Drees
  et~al.}{2015}]{DreesSegersWarchol2015}
Drees, H., J.~Segers, and M.~Warcho{\l} (2015).
\newblock Statistics for tail processes of {Markov} chains.
\newblock {\em Extremes\/}~{\em 18\/}(3), 369--402.

\bibitem[\protect\citeauthoryear{Han, Linton, Oka, and Whang}{Han
  et~al.}{2016}]{han2016cross}
Han, H., O.~Linton, T.~Oka, and Y.-J. Whang (2016).
\newblock The cross-quantilogram: Measuring quantile dependence and testing
  directional predictability between time series.
\newblock {\em Journal of Econometrics\/}~{\em 193\/}(1), 251--270.

\bibitem[\protect\citeauthoryear{Leadbetter}{Leadbetter}{1983}]{Leadbetter1983}
Leadbetter, M. (1983).
\newblock Extremes and local dependence in stationary sequences.
\newblock {\em Zeitschrift f{\"u}r Wahrscheinlichkeitstheorie und Verwandte
  Gebiete\/}~{\em 65\/}(2), 291--306.

\bibitem[\protect\citeauthoryear{Linton and Whang}{Linton and
  Whang}{2007}]{linton2007quantilogram}
Linton, O. and Y.-J. Whang (2007).
\newblock The quantilogram: With an application to evaluating directional
  predictability.
\newblock {\em Journal of Econometrics\/}~{\em 141\/}(1), 250--282.

\bibitem[\protect\citeauthoryear{Mikosch and St{\u{a}}ric{\u{a}}}{Mikosch and
  St{\u{a}}ric{\u{a}}}{2000}]{mikosch2000}
Mikosch, T. and C.~St{\u{a}}ric{\u{a}} (2000).
\newblock Limit theory for the sample autocorrelations and extremes of a
  {GARCH} (1,1) process.
\newblock {\em The Annals of Statistics\/}~{\em 28\/}(5), 1427--1451.

\bibitem[\protect\citeauthoryear{Politis and Romano}{Politis and
  Romano}{1994}]{politis1994stationary}
Politis, D.~N. and J.~P. Romano (1994).
\newblock The stationary bootstrap.
\newblock {\em Journal of the American Statistical association\/}~{\em
  89\/}(428), 1303--1313.

\bibitem[\protect\citeauthoryear{Tj{\o}stheim and Hufthammer}{Tj{\o}stheim and
  Hufthammer}{2013}]{tjostheim2013local}
Tj{\o}stheim, D. and K.~O. Hufthammer (2013).
\newblock Local gaussian correlation: a new measure of dependence.
\newblock {\em Journal of Econometrics\/}~{\em 172\/}(1), 33--48.

\bibitem[\protect\citeauthoryear{van~der Vaart and Wellner}{van~der Vaart and
  Wellner}{1996}]{vdVW96}
van~der Vaart, A.~W. and J.~A. Wellner (1996).
\newblock {\em Weak Convergence of Empirical Processes}.
\newblock New York: Springer.

\bibitem[\protect\citeauthoryear{Wuertz, with contribution~from
  Michal~Miklovic, Boudt, Chausse, and {others}}{Wuertz et~al.}{2013}]{fGarch}
Wuertz, D., Y.~C. with contribution~from Michal~Miklovic, C.~Boudt, P.~Chausse,
  and {others} (2013).
\newblock {\em fGarch: Rmetrics - Autoregressive Conditional Heteroskedastic
  Modelling}.
\newblock R package version 3010.82.

\end{thebibliography}

\end{document}